



\documentclass[sigconf]{aamas} 


\usepackage{balance} 



\setcopyright{ifaamas}
\acmConference[AAMAS '24]{Proc.\@ of the 23rd International Conference
on Autonomous Agents and Multiagent Systems (AAMAS 2024)}{May 6 -- 10, 2024}
{Auckland, New Zealand}{N.~Alechina, V.~Dignum, M.~Dastani, J.S.~Sichman (eds.)}
\copyrightyear{2024}
\acmYear{2024}
\acmDOI{}
\acmPrice{}
\acmISBN{}



\usepackage{graphicx}
\usepackage{amsfonts}  
\usepackage{amsmath}  
\usepackage[linesnumbered,ruled,vlined,noend]{algorithm2e}
\usepackage{mathtools} 
\usepackage{xcolor} 

\usepackage{csquotes}
\usepackage{subcaption}

\newcommand{\argmin}{\operatornamewithlimits{argmin}}

\makeatletter
\newcommand{\pushright}[1]{\ifmeasuring@#1\else\omit$\displaystyle#1$\ignorespaces\fi}
\makeatother

\newcommand{\BibTeX}{\rm B\kern-.05em{\sc i\kern-.025em b}\kern-.08em\TeX}

\newtheorem*{observation}{Observation}


\newenvironment{cmr}[1]
{\imr}
{\endimr}


\newcommand{\R}{\ensuremath{\mathbb{R}}}

\newcommand{\K}{\ensuremath{\mathcal{K}}}
\newcommand{\floor}[1]{\lfloor #1 \rfloor}

\newcommand{\valueDistribution}{\mathcal{F}}
\newcommand{\utopianGap}{\ensuremath{\mathcal{G}}}

\renewcommand{\S}{\ensuremath{\mathbb{S}^n}}
\newcommand{\C}{\ensuremath{\mathbb{C}^n}}



\def\absolute#1{\left\lvert #1 \right\rvert }

\newcommand{\hoursFiveOptimal}[0]{60 }
\newcommand{\hoursFiveOptimalAll}[0]{200 }
\newcommand{\timeSixOptimal}[0]{over~100~years}
\newcommand{\numSamplesStd}[0]{3,000 }
\newcommand{\numSamplesFive}[0]{10 }
\newcommand{\ppoTimesteps}[0]{2,000,000 }
\newcommand{\ppoIterations}[0]{40 }
\newcommand{\ppoTrajectories}[0]{4,096 }
\newcommand{\ppoEpochs}[0]{10 }



\acmSubmissionID{247}


\title[Reducing Optimism Bias in Incomplete Cooperative Games]{Reducing Optimism Bias in Incomplete Cooperative Games}


\author{Filip \'{U}radn\'{i}k}
\affiliation{
  \institution{Charles University}
  \city{Prague}
  \country{Czechia}}
\email{uradnik@kam.mff.cuni.cz}

\author{David Sychrovsk\'{y}}
\affiliation{
  \institution{Charles University}
  \city{Prague}
  \country{Czechia}}
\email{sychrovsky@kam.mff.cuni.cz}

\author{Jakub \v{C}ern\'{y}}
\affiliation{
  \institution{Columbia University}
  \city{New York}
  \country{USA}}
\email{cerny@disroot.org}

\author{Martin \v{C}ern\'{y}}
\affiliation{
  \institution{Charles University}
  \city{Prague}
  \country{Czechia}}
\email{cerny@kam.mff.cuni.cz}


\begin{abstract}
Cooperative game theory has diverse applications in contemporary artificial intelligence, including domains like interpretable machine learning, resource allocation, and collaborative decision-making. However, specifying a cooperative game entails assigning values to exponentially many coalitions, and obtaining even a single value can be resource-intensive in practice. Yet simply leaving certain coalition values undisclosed introduces ambiguity regarding individual contributions to the collective grand coalition. This ambiguity often leads to players holding overly optimistic expectations, stemming from either inherent biases or strategic considerations, frequently resulting in \textit{collective claims exceeding the actual grand coalition value}. In this paper, we present a framework aimed at optimizing the sequence for revealing coalition values, with the overarching goal of efficiently closing the gap between players' expectations and achievable outcomes in cooperative games. Our contributions are threefold: (i) we study the individual players' optimistic completions of games with missing coalition values along with the arising gap, and investigate its analytical characteristics that facilitate more efficient optimization; (ii) we develop methods to minimize this gap over classes of games with a known prior by disclosing values of additional coalitions in both offline and online fashion; and (iii) we empirically demonstrate the algorithms' performance in practical scenarios, together with an investigation into the typical order of revealing coalition values. 
\end{abstract}



\keywords{incomplete cooperative games; superadditive set functions; active learning; value querying; Shapley value}

         


\makeatletter
\gdef\@copyrightpermission{
	\begin{minipage}{0.3\columnwidth}
		\href{https://creativecommons.org/licenses/by/4.0/}{\includegraphics[width=0.90\textwidth]{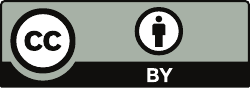}}
	\end{minipage}\hfill
	\begin{minipage}{0.7\columnwidth}
		\href{https://creativecommons.org/licenses/by/4.0/}{This work is licensed under a Creative Commons Attribution International 4.0 License.}
	\end{minipage}
	\vspace{5pt}
}
\makeatother

\begin{document}


\pagestyle{fancy}
\fancyhead{}


\maketitle


\section{Introduction}

Cooperative game theory provides a versatile framework for modeling coalition formation and collective payoff distributions in multiagent interactions. Its far-reaching applications span diverse fields, including supply chain management~\cite{nagarajan2008game}, communication networks~\cite{saad2009coalitional}, logistics and resource allocation~\cite{lozano2013cooperative}, or environmental agreements~\cite{finus2008game}.
However, beneath its promising facade lies a fundamental challenge -- specifying a cooperative game requires assigning a value to each possible coalition, i.e., a subset of players, which can be a daunting process as the number of coalitions is exponential in the number of interacting agents.

In practice, acquiring even a single value for a single coalition can be a resource-intensive endeavor. Take, for instance, the realm of machine learning, where determining the value of a feature subset that represents a coalition in the celebrated explainable approach SHAP~\cite{NIPS2017_7062} corresponds to retraining an entire model, consuming time and computational resources. 
What is more, these contributions can have ripple effects on subsequent financial outlays, such as acquiring new training samples.

In the corporate world, estimating an employee's contribution to collective performance may facilitate their fair evaluation~\cite{murphy1995understanding}, but obtaining the value of a coalition may involve the intricate process of rearranging teams of employees, incurring operational costs and potentially causing disruptions.

Yet, simply leaving the values of many coalitions undetermined further compounds the problem by opening the doors to ambiguity. When dealing with incomplete information, humans tend to exhibit a natural bias towards optimism, defined as the disposition to be overly optimistic about the probability of positive future events and to downplay the likelihood of negative future events~\cite{irwin1953stated,weinstein1980unrealistic,slovic2016facts,slovic87}.
For instance, optimism bias can lead individuals to underestimate the effort needed for retirement savings, potentially resulting in insufficient funds~\cite{puri2007optimism}, or to downplay the likelihood of catastrophic events like natural disasters, leading to inadequate preparedness and potentially life-threatening situations~\cite{weinstein1989optimistic}. In the context of cooperative games, where the players lack precise information about coalition values, such optimism can lead to inflated expectations. The individuals believe their contributions are more significant than they objectively are. This optimism translates into unrealistic demands for a larger share of the grand, i.e. all-agents, coalition value, even to the point where the sum of individual claims surpasses the actual value of the grand coalition.
For example, the companies might sometimes demand exorbitant prices for their data, just as employees may occasionally request wages that are unrealistically high.  Moreover, even when players do not possess inherent optimism biases, strategically adopting an optimistic facade can prove advantageous, influencing negotiations and outcomes in cooperative games. This discrepancy creates a critical gap, which we refer to as the \textit{cumulative utopian gap}, between what the players expect and what is feasible within the game.

To narrow this gap, we further assume the existence of an external principal. In the examples we mentioned, this role could be reserved for the company manager or the machine learning engineer. This principal possesses the unique ability to determine the sequence in which coalitions are revealed, and they are provided with a limited number of opportunities to exercise this control. In this manner, the principal mitigates the ambiguity within the system, thereby diminishing the utopian gap as a consequence. Importantly, we assume that each revelation step carries a roughly equivalent cost, ensuring that there is no inherent preference in which coalition is unveiled next besides its effect on bridging the gap. The primary objective of the principal hence reduces solely to minimizing this gap under the budget constraint. Although we do not explicitly outline the methodology for resolving situations when a non-zero gap exists, our underlying assumption is that a lower gap is favorable. In practical terms, a diminished gap may have implications for reducing the bargaining power of individual players, or in more extreme scenarios, it may limit the additional financial incentives required by the principal to encourage players to participate in the grand coalition.

\subsection{Organization and Contributions}

We begin by formally defining the framework of cooperative games, with a particular emphasis on the Shapley value\footnote{While our work draws from the properties of the Shapley value, it is important to note that our findings extend beyond the scope of cooperative game theory alone. For example, they have potential applicability within the theory of capacities~\cite{choquet1954theory}, where the Shapley value carries alternative interpretations. For a broader generalization of our approach, please refer to the related work section.}~\cite{Shapley1953} as a mechanism for fairly distributing the grand coalition value among the individual players. We then explain how to extend the framework to encompass incomplete cooperative games, which feature missing coalition values, a prevalent occurrence in real-world scenarios.

Afterward, we delve into our main contributions. Central to our study is the introduction of superadditive utopian games, designed to incorporate players' optimism biases, resulting in the emergence of the cumulative utopian gap. We establish fundamental theoretical properties of the utopian gap, including its monotonicity, additivity, and circumstances under which it becomes zero. Additionally, we offer an alternative geometric interpretation of the utopian gap, viewing it as a quantification of the remaining uncertainty within the space of potential true cooperative games, considering the coalition values that have been determined thus far.

Building upon these foundations, we formulate both offline and online problems for the principal aiming to minimize the utopian gap, offering a suite of heuristic and approximative algorithms for each scenario. Our empirical analysis provides valuable insights into algorithms' performance and the coalitions typically revealed early in the online and offline setups. Importantly, our findings also illustrate the non-linear nature of the utopian gap's decrease with the number of revealed coalition values, offering nuanced perspectives on what a principal can expect at various stages of the revelation process. Our results further indicate that for specific classes of monotone supermodular games, the gap can be nearly completely reduced by revealing just $\mathcal{O}(n)$ coalitions. This stands in contrast to the general requirement of exploring all possible values to minimize it to zero, as demonstrated in Appendix~\ref{app:zero-gap-not-attainable}.

\subsection{Related Work}

To the best of our knowledge, there is no existing research directly addressing the reduction of ambiguity in the context of cooperative game theory. Nevertheless, the process of querying coalition values can be seen as an online construction of a compact function representation, an approach supported by promising results in the field of cooperative game theory~\cite{Masuya2016,Bok2023,Cerny2023,CernyGrabisch2023}. These results indicate the feasibility of such an approach, with some efforts demonstrating a substantial exponential reduction in the number of values required to represent superadditive functions. A comprehensive survey and detailed presentation of many of these findings can be found in Chalkiadakis' book~\cite{Chalkiadakis2012}. While tailored representations have achieved significant reductions in specific cases~\cite{ieong2005marginal}, no general approach for constructing such representation when given a subclass of games has been identified.

Our work adopts an approach reminiscent of active learning~\cite{settlet2009active}, where an algorithm actively queries an oracle for labels to new data points to construct the most informative dataset, particularly in situations where labeling is resource-intensive. In our context, we seek coalition values that minimize the utopian gap. To approximate the optimal querying strategy based on this concept, we employ reinforcement learning~\cite{sutton2014reinforcement}.

\section{Preliminaries}
Here, we present fundamentals of cooperative games (the reader is encouraged to see~\cite{Peleg2007} for a more extensive introduction). First, we define cooperative games, introduce their classes and the Shapley value. Then we talk about the generalized model of incomplete games where only some values of the game are known. We define sets of \emph{$\S$-extensions} and the \emph{lower}/\emph{upper} \emph{games}. These notions delimit possible values of $\S$-extensions of incomplete games.

\begin{definition}
	\label{def: cooperative game}
	A \emph{cooperative game} is an ordered pair $(N,v)$ where $N = \{1,\ldots n\}$ and $v\colon 2^N \to \mathbb{R}$ is the characteristic function of the cooperative game. Further, $v(\emptyset) = 0$.
\end{definition}

It is convenient to view the characteristic function $v$ as a point in $2^n$-dimensional space.
We say a cooperative game $(N,v)$ is
\begin{itemize}
	\item {\it additive} if for every $S \subseteq N$,
	      \vspace{-0.1cm}
	      \begin{equation}
		      v(S) = \sum_{i \in S}v(\{i\}),\vspace{-0.1cm}
	      \end{equation}

	\item {\it superadditive} if for every $S, T \subseteq N, S\cap T=\emptyset$,
	      \vspace{-0.1cm}
	      \begin{equation}\label{eq: superaddive constraints}
		      v(S) + v(T) \le v(T\cup S)\vspace{-0.1cm}
	      \end{equation}

	\item {\it supermodular} if  for every $S\subseteq N\setminus \{i,j\}$,
	      \vspace{-0.1cm}
	      \begin{equation}\label{eq: supermodular constraints}
		      v(S\cup \{j\}) - v(S) \le v(S \cup \{i, j\}) - v(S \cup \{i\}).
	      \end{equation}

\end{itemize}
By $\S$ and $\C$, we denote the sets of superadditive, respectively supermodular games on $\lvert N \rvert = n$ players. Further we refer to $S\subseteq N$ as a \emph{coalition}
and $N$ as the \emph{grand coalition}.
The following definition contains one of the most studied solution concepts in cooperative game theory, which models fair distribution of the value of the grand coalition~\cite{Shapley1953}.

\noindent

\begin{definition}
	The {\it Shapley value}  $\phi: \mathbb{R}^{2^n} \to \mathbb{R}^n$ of a cooperative game $(N,v)$ is defined for $i \in N$ as
	\begin{equation}\label{eq:shapley}
		\phi_i(v) \coloneqq
		\sum_{S\subseteq N\setminus \{i\}}
		\frac{|S|!(|N|-|S|-1)!}{|N|!}\bigl(v(S\cup\{i\}) - v(S)\bigr).
	\end{equation}
\end{definition}

The Shapley value is \textit{linear} and \textit{efficient}, which captures the fact that exactly $v(N)$ is distributed among the players. Specifically,
\begin{equation}\label{eq:efficiency}
	\sum_{i\in N}\phi_i (v) = v(N).
\end{equation}

The characteristic function $v$ is represented by $2^n$ real values.
In many applications, however, obtaining all the values might be too expensive.
To deal with this problem, $\K \subseteq 2^N$ is introduced in the definition of an \textit{incomplete game} which represents only coalitions with known value.
In our application, we model obtaining new information about the unknown (but well defined) values by extending the set $\K$.
Therefore, we view $\K$ as a `masking set' applied to some complete game and use the characteristic function of that complete game in our definition.\footnote{This is in contrast with the standard definition where only $v:\K\to\mathbb{R}$ is used.}
Another distinction from the standard definition is that we require knowledge of at least \textit{minimal information}\footnote{
	In the standard definition, only $\emptyset\in\K$ is required~\cite{Masuya2016,Bok2023}.}
\begin{equation}
	\K_0 \coloneqq \{\emptyset,N\} \cup \{\{i\} \mid i \in N\}.
\end{equation}

\begin{definition}
	An \textit{incomplete cooperative game} is $(N,\K,v)$ where $N = \{1,\dots,n\}$, $\K_0 \subseteq \K \subseteq 2^N$, $v\colon2^N \to \mathbb{R}$ is the characteristic function, and $v(\emptyset)=0$.
	We say that $(N, v)$ is the \textit{underlying game} of $(N,\K,v)$.
	Further, an incomplete cooperative game $(N,\K,v)$ is \textit{minimal} if $\K=\K_0$.
\end{definition}

If further properties of the underlying game are assumed, one may impose restrictions on values of $S \notin \K$ even though they are not known exactly. The property we consider is superadditivity, which is natural in many situations and common throughout the literature on exchange economies, cost-distribution problems, or supermodular optimization~\cite{Masuya2016, Peleg2007}.\footnote{{The immediacy of this property is not always guaranteed. For instance, SHAP may not consistently yield a superadditive game. However, within this particular context, the property is observed, for example, in uncorrelated models within an ensemble.}} Under this assumption and based on partial knowledge represented by $\K$, one can describe the set of \textit{candidates} for the underlying game. The set consists of extensions of the partial set function, which satisfy superadditivity.
\begin{definition}
	Let $(N, \K, v)$ be an incomplete cooperative game. Then $(N,w)$ is a \emph{$\S$-extension} of $(N,\K,v)$ if $(N,w) \in \S$ and
	\begin{equation}
		v(S) = w(S), \hspace{5ex} S\in \K.
	\end{equation}
	We say $(N,\K,v)$ is \emph{$\S$-extendable} if it has a $\S$-extension and we denote the set	of $\S$-extensions by $\S(\K,v)$.
\end{definition}

Since the set of $\S$-extensions is given by a system of linear inequalities, it forms a convex polyhedron in $\mathbb{R}^{2^n}$.
Furthermore, assuming non-negativity of the values, set $\S(\K,v)$ can be tightly enclosed by an hyper-rectangle given by the so called \emph{lower}/\emph{upper} games~\cite{Masuya2016}. Specifically, the \textit{lower game} $(N,\underline{v})$ of $(N,\K,v)$ is
\begin{equation}\label{eq: lower game}
	\underline{v}_{\K}(S) \coloneqq \max\limits_{\substack{S_1,\dots,S_k \in \K \\
			\bigcup_i S_i = S \\
			S_i \cap S_j = \emptyset}}
	\sum_{i=1}^kv(S_i),
\end{equation}
and the \textit{upper game} $(N,\overline{v})$ of $(N,\K,v)$ is
\begin{equation}
	\label{eq: upper game}
	\overline{v}_{\K}(S) \coloneqq \min\limits_{T \in \K: S \subseteq T}v(T) - \underline{v}(T \setminus S).
\end{equation}

Within this specific context, we restate the following established results that will hold importance for our later findings.

\begin{theorem}\label{lem:super bounds}
	Let $(N,\K,v)$ be an $\S$-extendable incomplete game with non-negative values. Then for every $\S$-extension $(N,w)$ of $(N,\K,v)$ it holds
	\begin{equation}
		\underline{v}_{\K}(S) \leq w(S) \leq \overline{v}_{\K}(S), \hspace{5ex} \forall S\subseteq N.
	\end{equation}
	Further, $\forall S \notin \K$, there are $\S$-extensions $(N,w_1), (N,w_2)$ such that
	\[
		w_1(S) = \underline{v}_{\K}(S) {\rm \hspace{2ex} and\hspace{2ex}} w_2(S) = \overline{v}_{\K}(S).
	\]
\end{theorem}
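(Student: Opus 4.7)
\smallskip
\noindent\textbf{Proof plan.} The statement has two halves. For the inequality $\underline{v}_\K(S) \le w(S) \le \overline{v}_\K(S)$, I would reduce each bound to a single, iterated application of superadditivity. For the lower bound, fix any disjoint partition $S_1,\dots,S_k \in \K$ of $S$; induction on $k$ using~\eqref{eq: superaddive constraints} gives $w(S) \ge \sum_{i=1}^k w(S_i) = \sum_{i=1}^k v(S_i)$, where the equality uses $S_i \in \K$ and the fact that $w$ is an $\S$-extension. Taking the maximum over all admissible partitions yields $w(S) \ge \underline{v}_\K(S)$ by the definition in~\eqref{eq: lower game}. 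For the upper bound, fix any $T \in \K$ with $S \subseteq T$. Applied to the disjoint pair $S$ and $T \setminus S$, superadditivity yields $w(T) \ge w(S) + w(T \setminus S)$, and since $w(T)=v(T)$ this rearranges to $w(S) \le v(T) - w(T \setminus S)$. Using the already-established lower bound on $w(T\setminus S)$ gives $w(S) \le v(T) - \underline{v}_\K(T\setminus S)$, and taking the minimum over $T \in \K$ with $S \subseteq T$ produces $w(S) \le \overline{v}_\K(S)$ via~\eqref{eq: upper game}.

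For the tightness claim, I would construct, for each fixed $S \notin \K$, two specific $\S$-extensions $w_1, w_2$ attaining the bounds at $S$. For $w_1$, the natural candidate is the lower game itself: set $w_1(T) := \underline{v}_\K(T)$ for every $T \subseteq N$. I would check (i) that $w_1$ agrees with $v$ on $\K$ (since the singleton partition $\{T\}$ is always admissible when $T\in\K$ and non-negativity prevents a larger partition from exceeding $v(T)$, which is exactly where the non-negativity hypothesis is used) and (ii) that $w_1$ is superadditive, using that concatenating optimal partitions of disjoint $A$ and $B$ produces an admissible partition of $A \cup B$. For $w_2$, by definition of $\overline{v}_\K(S)$ there is a witnessing $T^\star \in \K$ with $S \subseteq T^\star$; I would then define $w_2$ as an extension that coincides with $w_1$ on $2^N \setminus \{S\}$ and sets $w_2(S) := \overline{v}_\K(S)$, and verify that superadditivity is preserved. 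The verification reduces to checking pairs $(A,B)$ with $A\cup B=S$ and pairs with $S$ on the smaller side, each handled by the optimality defining $\overline{v}_\K(S)$ together with non-negativity.

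The routine half is the two inequalities; the real obstacle is the superadditivity verification in the tightness construction, in particular making sure that replacing a single coordinate of $w_1$ by $\overline{v}_\K(S)$ does not violate~\eqref{eq: superaddive constraints} for any disjoint pair involving $S$. I would handle this by unpacking the minimum defining $\overline{v}_\K(S)$: any potential violator $(S,B)$ with $S\cap B=\emptyset$ either satisfies $S\cup B \in \K$, in which case the bound is immediate from the definition, or can be enlarged to some $T\in\K$ with $S\cup B \subseteq T$ so that the inequality follows by adding $\underline{v}_\K(T\setminus(S\cup B))$ to both sides. Since this result is stated in~\cite{Masuya2016}, I would cite it for the tightness step and only sketch the verification, relying on the two direct arguments above for the bounds themselves.
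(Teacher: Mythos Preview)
Your argument for the two-sided bound is correct and matches what the paper cites from~\cite{Masuya2016}; likewise your choice $w_1=\underline{v}_\K$ coincides with the paper's $w_1=v^N$. (One small quibble: it is $\S$-extendability, not non-negativity, that forces $\underline{v}_\K(T)=v(T)$ for $T\in\K$; non-negativity alone does not prevent a partition of $T$ into $\K$-pieces from summing above $v(T)$.)

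The genuine gap is your construction of $w_2$. Modifying $\underline{v}_\K$ at the single coordinate $S$ does not in general preserve superadditivity, because raising $w_2(S)$ to $\overline{v}_\K(S)$ while leaving every strict superset of $S$ at its lower value breaks the constraint $w_2(S)+w_2(B)\le w_2(S\cup B)$ whenever $S\cup B\notin\K$. Concretely, take $N=\{1,2,3,4\}$, $\K=\K_0$, $v(\{i\})=0$, $v(N)=4$. For $S=\{1,2\}$ one has $\overline{v}_\K(S)=v(N)-\underline{v}_\K(\{3,4\})=4$, yet $\underline{v}_\K(\{1,2,3\})=0$; your $w_2$ then gives $w_2(\{1,2\})+w_2(\{3\})=4>0=w_2(\{1,2,3\})$. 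Your sketched verification only yields $\overline{v}_\K(S)+\underline{v}_\K(B)\le v(T)-\underline{v}_\K\bigl(T\setminus(S\cup B)\bigr)$, and the right-hand side is an upper bound for $\overline{v}_\K(S\cup B)$, not a lower bound for $\underline{v}_\K(S\cup B)$, so the inequality you need does not follow. The paper, following~\cite{Masuya2016}, fixes this by taking $w_2=v^S$ with $v^S(T)=\overline{v}_\K(T)$ for \emph{every} $T\supseteq S$ and $v^S(T)=\underline{v}_\K(T)$ otherwise; raising all supersets of $S$ simultaneously is precisely what makes the superadditivity constraints with $S$ on the smaller side go through.
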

\begin{proof}
	The first part of the theorem is equivalent to Theorem 1 in~\cite{Masuya2016}. The second part follows from Theorem 3 in~\cite{Masuya2016}, which states that any cooperative game with a characteristic function defined for a non-empty coalition $S$ as
	\begin{equation}\label{eq: utopian value function}
		v^S(T) \coloneqq \begin{cases}
			\overline{v}_{\K}(T)  & S\subseteq T,    \\
			\underline{v}_{\K}(T) & S\not\subseteq T \\
		\end{cases}
	\end{equation}
	is an $\S$-extension. For $S \notin \K$, choose $w_1=v^N$ and $w_2=v^S$.
\end{proof}

Note that the hyper-rectangle given by lower/upper games might contain non-superadditive games.

\section{Optimism Bias and Its Minimization}\label{sec:optimism-bias}

The uncertainty arising from $\K$ in an incomplete cooperative game allows each player to speculate about their payment. As players experience the bias towards optimism, they can consider every $\S$-extension of the incomplete game and choose the one which benefits them the most. Such extension can be easily derived based on the properties of the Shapley value.

\begin{definition}\label{def:player-preferred-games}
	Let $(N,\K,v)$ be a $\S$-extendable incomplete cooperative game and $i \in N$.
	Then player $i$'s \emph{utopian game} is $(N,v_i)$, where
	\begin{equation}
		v_i(S) \coloneqq \begin{cases}
			\overline{v}_{\K}(S)  & i \in S,    \\
			\underline{v}_{\K}(S) & i \notin S. \\
		\end{cases}
	\end{equation}
\end{definition}

\begin{proposition}
	Let $(N,\K,v)$ be a $\S$-extendable incomplete game.
	Then the utopian game $(N,v_i) \in \S(\K,v)$ for every $i \in N$.
\end{proposition}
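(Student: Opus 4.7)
The plan is to recognize that the utopian game $(N,v_i)$ is exactly a special case of the parametric family $v^S$ constructed in the proof of Theorem~\ref{lem:super bounds}, taken at $S=\{i\}$. Specifically, since $\{i\}\subseteq T$ is equivalent to $i\in T$, substituting $S=\{i\}$ in Equation~\eqref{eq: utopian value function} yields
\[
    v^{\{i\}}(T)=\begin{cases}\overline{v}_{\K}(T) & i\in T,\\ \underline{v}_{\K}(T) & i\notin T,\end{cases}
\]
which is identical to the definition of $v_i$ in Definition~\ref{def:player-preferred-games}.

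Next, I would invoke the second part of Theorem~\ref{lem:super bounds} (i.e.\ Theorem~3 of~\cite{Masuya2016}), which asserts that for every non-empty coalition $S$ the game $(N,v^S)$ is a $\S$-extension of $(N,\K,v)$. Applying this with $S=\{i\}$, which is non-empty because $i\in N$, immediately gives $(N,v_i)=(N,v^{\{i\}})\in\S(\K,v)$.

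As a brief sanity check I would also note that $v_i$ genuinely extends the partial data encoded by $\K$: for every $S\in\K$ both bounds satisfy $\underline{v}_{\K}(S)=\overline{v}_{\K}(S)=v(S)$ (the decomposition $\{S\}$ is admissible in~\eqref{eq: lower game}, and $T=S$ is admissible in~\eqref{eq: upper game}), so $v_i(S)=v(S)$ regardless of whether $i\in S$. Consequently no additional case analysis on $\K$ is needed.

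The main obstacle has already been handled upstream in Theorem~\ref{lem:super bounds}: proving that $v^S$ is superadditive is the substantive content, and the present proposition is essentially a reindexing of that result. Hence I do not anticipate any new technical difficulty here beyond matching the definitions.
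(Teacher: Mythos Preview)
Your proposal is correct and follows essentially the same approach as the paper: both identify $v_i$ with $v^{\{i\}}$ from Equation~\eqref{eq: utopian value function}, invoke Theorem~3 of~\cite{Masuya2016} (as cited in Theorem~\ref{lem:super bounds}) for superadditivity, and note that the lower/upper games agree with $v$ on $\K$ so that $(N,v_i)$ is indeed an extension.
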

\begin{proof}
	From~\eqref{eq: lower game}, \eqref{eq: upper game}, it is immediate $(N,v_i)$ extends $(N,\K,v)$. Further, as mentioned earlier, Theorem 3 in~\cite{Masuya2016} states that any cooperative game with a characteristic function \eqref{eq: utopian value function}
	is superadditive. We obtain the result by setting $S=\{i\}$.
\end{proof}

When each player demands the payoff given to them in their utopian game, it leads to an efficiency violation due to the players' cumulative demands surpassing the value of $v(N)$. We refer to the disparity between the players' expectations and the attainable outcome as the {\it cumulative utopian gap}.

\begin{definition}
	\label{def: exploitability}
	Let $(N,\hat{\K},v)$ be a $\S$-extendable incomplete cooperative game and $\K = \hat{\K} \setminus \K_0$. Then the \textit{cumulative utopian gap} of $(N,\hat{\K},v)$ is
	\begin{equation}
		\label{eq: utopian gap}
		\utopianGap_{(N,v)}(\K) \coloneqq \sum_{i\in N} \left(\max\limits_{w\in \S(\hat{\K},v)} \phi_i(w)\right) - v(N).
	\end{equation}
\end{definition}
\noindent

Due to the properties of utopian games, the cumulative utopian gap can be expressed in terms of individual $v_i$'s. As a consequence, one can express the gap as an affine combination of difference between the values of the upper and the lower functions.

\begin{proposition}\label{prop:gap-is-utopian}
	\label{lem:preferred-game}
	Let $(N,\hat{\K},v)$ be an $\S$-extendable incomplete game. Then the cumulative utopian gap is
	\begin{equation}
		\utopianGap_{(N,v)}(\K) = \sum_{i \in N}\phi_i(v_i) - v(N) = \sum_{S \subseteq N}\alpha_S\Delta_{\hat{\K}}(S),
	\end{equation}
	where $\alpha_S = \frac{|S|!(|N| - |S|)!}{|N|!}$ and $\Delta_{\hat{\K}}(S) = \overline{v}_{\hat{\K}}(S) - \underline{v}_{\hat{\K}}(S)$.
\end{proposition}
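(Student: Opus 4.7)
The proof splits naturally into two parts, corresponding to the two equalities.

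For the first equality, the plan is to show that $\phi_i(v_i) = \max_{w \in \S(\hat{\K},v)} \phi_i(w)$ for every $i \in N$. Writing out the Shapley formula~\eqref{eq:shapley} and separating terms by whether $i$ is in the marginal coalition, each term $w(S\cup\{i\})$ appears with a positive weight and each $w(S)$ (with $i \notin S$) with a negative weight. By Theorem~\ref{lem:super bounds}, every $\S$-extension $w$ satisfies $\underline{v}_{\hat{\K}}(T) \leq w(T) \leq \overline{v}_{\hat{\K}}(T)$, so termwise maximization yields an upper bound of $\sum_{S \subseteq N\setminus\{i\}} \frac{|S|!(n-|S|-1)!}{n!}(\overline{v}_{\hat{\K}}(S\cup\{i\}) - \underline{v}_{\hat{\K}}(S))$. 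By Definition~\ref{def:player-preferred-games} this upper bound is exactly $\phi_i(v_i)$, and by the preceding proposition $(N,v_i) \in \S(\hat{\K},v)$, so the bound is attained. Summing over $i$ and subtracting $v(N)$ gives the claimed identity for the gap.

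For the second equality, the plan is to expand $\sum_i \phi_i(v_i)$ via~\eqref{eq:shapley}, substitute $v_i(S \cup \{i\}) = \overline{v}_{\hat{\K}}(S\cup\{i\})$ and $v_i(S) = \underline{v}_{\hat{\K}}(S)$ (for $i \notin S$), and then \emph{reindex by coalition}. For a fixed $T \subseteq N$, the value $\overline{v}_{\hat{\K}}(T)$ appears once for each $i \in T$ with $S = T \setminus \{i\}$, contributing a total coefficient of $|T| \cdot \frac{(|T|-1)!(n-|T|)!}{n!} = \alpha_T$. Similarly, $\underline{v}_{\hat{\K}}(T)$ appears once for each $i \in N \setminus T$, giving coefficient $(n-|T|)\cdot\frac{|T|!(n-|T|-1)!}{n!} = \alpha_T$. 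Hence $\sum_i \phi_i(v_i) = \sum_{T \neq \emptyset}\alpha_T \overline{v}_{\hat{\K}}(T) - \sum_{T \neq N}\alpha_T \underline{v}_{\hat{\K}}(T)$.

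To finish, I would use the boundary values: $\overline{v}_{\hat{\K}}(\emptyset)=0$ (so the $T=\emptyset$ term is harmless to append), and $\alpha_N \underline{v}_{\hat{\K}}(N) = v(N)$ (since $N \in \K_0 \subseteq \hat{\K}$ and superadditivity forces the singleton partition $\{N\}$ to realize the maximum in~\eqref{eq: lower game}). Adding and subtracting these terms converts the expression into $\sum_{T\subseteq N}\alpha_T(\overline{v}_{\hat{\K}}(T) - \underline{v}_{\hat{\K}}(T)) + v(N) = \sum_S \alpha_S \Delta_{\hat{\K}}(S) + v(N)$. Subtracting $v(N)$ yields the second equality. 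I expect the main obstacle to be bookkeeping: keeping track of which coalitions are excluded in each partial sum and carefully handling the $T=\emptyset$ and $T=N$ boundary terms, since neither index is symmetric between the upper- and lower-game sums.
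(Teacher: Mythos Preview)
Your proposal is correct and follows essentially the same approach as the paper: for the first equality you bound $\phi_i(w)$ termwise using Theorem~\ref{lem:super bounds} and observe that $v_i$ attains the bound, and for the second you reindex $\sum_i \phi_i(v_i)$ by coalition, collect the coefficients $|T|\beta_{|T|-1}=(n-|T|)\beta_{|T|}=\alpha_T$, and handle the $T=\emptyset$ and $T=N$ boundary cases exactly as the paper does in Appendix~\ref{app:mere-technicality}. The only cosmetic difference is that the paper absorbs the $-v(N)$ term directly into the $T=N$ slot of the $\underline{v}$-sum, whereas you phrase it as ``adding and subtracting'' $\alpha_N\underline{v}_{\hat{\K}}(N)=v(N)$; the arithmetic is identical.
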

\begin{proof}[Proof Sketch]
	Notice from~\eqref{eq:shapley} and the definition of $\S(\hat{\K},v)$ that $\phi_i(w)$ is bounded from above by
	\begin{equation*}
		\sum_{S \subseteq N \setminus \{i\}}\frac{|S|!(|N|-|S|-1)!}{|N|!}(\overline{v}_{\hat{\K}}(S \cup \{i\}) - \underline{v}_{\hat{\K}}(S)),
	\end{equation*}
	which is exactly $\phi_i(v_i)$. The proof of the second equality is merely a technicality left for the Appendix~\ref{app:mere-technicality}.
\end{proof}

Now, we shall demonstrate several key properties of the utopian gap, starting with its non-negativity.

\begin{proposition}
	\label{prop: gap non-negative}
	Let $(N, \hat{\K}, v)$ be an \S-extendable incomplete game. Then $\utopianGap_{(N,v)}(\K) \ge 0$. Moreover, $\utopianGap_{(N,v)}(\K)=0$ if and only if $\S(\hat{\K},v)$ is a singleton.
\end{proposition}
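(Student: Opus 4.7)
The plan is to reduce everything to the closed-form expression from Proposition~\ref{prop:gap-is-utopian}, namely
\[
\utopianGap_{(N,v)}(\K) = \sum_{S \subseteq N} \alpha_S \, \Delta_{\hat{\K}}(S),
\]
with $\alpha_S = |S|!(|N|-|S|)!/|N|! > 0$ and $\Delta_{\hat{\K}}(S) = \overline{v}_{\hat{\K}}(S) - \underline{v}_{\hat{\K}}(S)$. Theorem~\ref{lem:super bounds} immediately gives $\Delta_{\hat{\K}}(S) \geq 0$ for every $S$, so the gap is a non-negative combination of non-negative terms, yielding non-negativity at once.

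For the characterization of when the gap is zero, I would argue both directions separately. For ($\Leftarrow$), if $\S(\hat{\K},v) = \{w^*\}$ is a singleton, Theorem~\ref{lem:super bounds} asserts that there exist $\S$-extensions realizing both the lower and the upper value at every coalition; since there is only one such extension, these two values must coincide, $\Delta_{\hat{\K}}(S) = 0$ for all $S$, and the gap vanishes. For ($\Rightarrow$), if the gap is zero then, as every $\alpha_S > 0$, each $\Delta_{\hat{\K}}(S)$ must vanish; any two $\S$-extensions $w_1, w_2$ then satisfy $w_1(S), w_2(S) \in [\underline{v}_{\hat{\K}}(S), \overline{v}_{\hat{\K}}(S)]$, which is a single point, so $w_1 = w_2$. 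Since $\S$-extendability guarantees the set is non-empty, it is a singleton.

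The argument is essentially structural, so I do not anticipate any serious obstacle. The one subtlety worth flagging is that the equivalence hinges on Theorem~\ref{lem:super bounds} being tight in both directions---the lower and upper values are actually attained by some $\S$-extension---since this is precisely what allows us to pass from the vanishing of every $\Delta_{\hat{\K}}(S)$ to uniqueness of the extension.
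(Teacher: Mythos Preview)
Your argument is correct. Both directions of the biconditional go through exactly as you describe, and the appeal to the tightness clause of Theorem~\ref{lem:super bounds} is precisely what is needed to pass from ``$\Delta_{\hat{\K}}(S)=0$ for all $S$'' to ``$\S(\hat{\K},v)$ is a singleton''.

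The route differs from the paper's. The paper works with the \emph{first} equality in Proposition~\ref{prop:gap-is-utopian}: it observes that the lower game $(N,\underline{v}_{\hat{\K}})$ is itself an $\S$-extension, so $\phi_i(v_i)\ge \phi_i(\underline{v}_{\hat{\K}})$ for each $i$, and then invokes efficiency of the Shapley value to get $\sum_i \phi_i(\underline{v}_{\hat{\K}})=v(N)$. Strictness of the inequality for some $i$ is then argued to be equivalent to non-uniqueness of the extension. You instead use the \emph{second} equality in Proposition~\ref{prop:gap-is-utopian}, reducing the gap to a positive combination of the non-negative quantities $\Delta_{\hat{\K}}(S)$. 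Your route is arguably cleaner for the ``moreover'' clause: once the gap is written as $\sum_S \alpha_S\Delta_{\hat{\K}}(S)$ with every $\alpha_S>0$, vanishing of the gap immediately forces every $\Delta_{\hat{\K}}(S)$ to vanish, and the tightness statement in Theorem~\ref{lem:super bounds} finishes the job without any further reasoning about Shapley values. The paper's argument, by contrast, has to justify why $\phi_i(v_i)>\phi_i(\underline{v}_{\hat{\K}})$ for some $i$ whenever the extension set is not a singleton, which is true but slightly less immediate.
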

\begin{proof}
	Non-negativity follows from the fact that for $(N,\underline{v}) \in \S(\hat{\K},v)$ it holds that
	\begin{equation}\label{eq:gap-not-nega2}
		\sum_{i \in N}\phi_i(v_i) \ge \sum_{i \in N}\phi_i\left(\underline{v}_{\hat{\K}}\right) = v(N).
	\end{equation}
	The inequality in~\eqref{eq:gap-not-nega2} is strict if and only if there is more than one $\S$-extension as $\phi_i(v_i) > \phi_i(\underline{v}_{\hat{\K}})$ for at least one $i \in N$.
\end{proof}

Note that the set of extensions can be non-trivial even if we are missing any single value of the underlying game, see Appendix~\ref{app:zero-gap-not-attainable}.

\begin{proposition}
	\label{prop: gap in K}
	The cumulative utopian gap $\utopianGap$ is in the set $\K$
	\begin{enumerate}
		\item monotonically non-increasing, i.e., for any $S\in 2^N$
		      \begin{equation}\label{eq:gap-is-mono}
			      \utopianGap_{(N,v)}(\K) \ge \utopianGap_{(N,v)}(\K\cup \{S\}),
		      \end{equation}
		\item subadditive
		      , i.e., for $\K,\mathcal{L} \subseteq 2^N$, $\K \cap \mathcal{L} = \emptyset$,
		      \begin{equation}\label{eq:gap-is-sub}
			      \utopianGap_{(N,v)}(\K) + \utopianGap_{(N,v)}(\mathcal{L}) \geq \utopianGap_{(N,v)}(\K \cup \mathcal{L}).
		      \end{equation}
	\end{enumerate}
\end{proposition}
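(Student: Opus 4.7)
The plan for monotonicity is to exploit the definitional observation that expanding $\hat{\K}$ amounts to imposing additional equality constraints on the set of $\S$-extensions. Concretely, for any $S \in 2^N$, the set $\S(\hat{\K} \cup \{S\}, v)$ consists precisely of those $w \in \S(\hat{\K}, v)$ that additionally satisfy $w(S) = v(S)$, so $\S(\hat{\K} \cup \{S\}, v) \subseteq \S(\hat{\K}, v)$. Taking a maximum of $\phi_i(\cdot)$ over a subset yields a value no larger than the maximum over the superset, hence each summand in (\ref{eq: utopian gap}) is non-increasing, and subtracting the constant $v(N)$ preserves the inequality, proving (\ref{eq:gap-is-mono}).

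An alternative and perhaps more transparent route goes through Proposition~\ref{prop:gap-is-utopian}. As $\hat{\K}$ grows, the feasible set of partitions in the definition~(\ref{eq: lower game}) of $\underline{v}_{\hat{\K}}(S)$ only enlarges, so $\underline{v}_{\hat{\K}}(S)$ is non-decreasing in $\hat{\K}$; symmetrically, the feasible set of supersets in~(\ref{eq: upper game}) enlarges, so $\overline{v}_{\hat{\K}}(S)$ is non-increasing. Thus $\Delta_{\hat{\K}}(S)$ is non-increasing coordinate-wise, and since the weights $\alpha_S \geq 0$, the utopian gap $\sum_{S \subseteq N} \alpha_S \Delta_{\hat{\K}}(S)$ is non-increasing in $\hat{\K}$.

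For subadditivity, I would simply combine monotonicity with non-negativity (Proposition~\ref{prop: gap non-negative}). Since $\K \subseteq \K \cup \mathcal{L}$, the monotonicity part gives
\begin{equation*}
\utopianGap_{(N,v)}(\K \cup \mathcal{L}) \leq \utopianGap_{(N,v)}(\K),
\end{equation*}
and adding the non-negative quantity $\utopianGap_{(N,v)}(\mathcal{L}) \geq 0$ to the right side yields the desired inequality (\ref{eq:gap-is-sub}). The disjointness assumption $\K \cap \mathcal{L} = \emptyset$ is not actually needed for the argument, which suggests the stated bound is a rather weak consequence of the other properties.

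The main subtle point to get right is the justification of the set inclusion $\S(\hat{\K} \cup \{S\}, v) \subseteq \S(\hat{\K}, v)$ under the paper's convention that the characteristic function $v$ is defined on all of $2^N$ (so $v(S)$ is a well-defined real number regardless of whether $S \in \hat{\K}$); once this is in place both parts are essentially immediate. No delicate case analysis on whether $(N,\hat{\K} \cup \{S\}, v)$ remains $\S$-extendable is required, since the $\S$-extendability of the coarser game implies the same for any refinement whose newly revealed values come from an actual superadditive underlying game.
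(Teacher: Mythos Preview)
Your proof is correct. Your first route---observing directly from Definition~\ref{def: exploitability} that $\S(\hat{\K}\cup\{S\},v)\subseteq\S(\hat{\K},v)$ and hence each term $\max_{w}\phi_i(w)$ can only decrease---is more elementary than the paper's argument, which instead works through the $\Delta_{\hat{\K}}$ formula of Proposition~\ref{prop:gap-is-utopian} and computes how an $\varepsilon$-perturbation of $\overline{v}_{\hat{\K}}(S)$ or $\underline{v}_{\hat{\K}}(S)$ propagates to the gap. Your second route is essentially the paper's argument stated more cleanly; one small point there is that the monotonicity of $\overline{v}_{\hat{\K}}(S)$ in $\hat{\K}$ comes from \emph{two} effects (more candidate supersets $T$ in the outer $\min$, and a larger $\underline{v}_{\hat{\K}}(T\setminus S)$ inside), both of which push in the same direction, so your conclusion stands even though you only mention the first. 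For subadditivity, your argument---monotonicity plus non-negativity---is exactly what the paper does, and your remark that disjointness of $\K$ and $\mathcal{L}$ is not actually used is a nice observation.
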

\begin{proof}

	To show~\eqref{eq:gap-is-mono}, for an incomplete game $(N, \hat{\K}, v)$ and a coalition $S\subseteq N$, consider the projection of $\S(\hat{\K}, v)\subseteq \mathbb{R}^{2^n}$ onto the axis, which is given by $S$. By Theorem~\ref{lem:super bounds} and supermodularity of the set of $\S$-extensions, such projection is $[\underline{v}_{\hat{\K}}(S), \overline{v}_{\hat{\K}}(S)]$. By Proposition~\ref{prop:gap-is-utopian} and the definition of $\phi_i(v_i)$, an $\varepsilon$ change in $\overline{v}_{\hat{\K}}(S)$ leads to
	\begin{equation*}
		\varepsilon \cdot \lvert S \rvert \cdot \frac{(\lvert S \rvert -1)!(\lvert N \rvert - \lvert S \rvert)!}{\lvert N \rvert !}
	\end{equation*}
	change in $\utopianGap$, and an $\varepsilon$ change in $\underline{v}(S)$ leads to
	\begin{equation*}
		-\varepsilon \cdot \lvert S \rvert \cdot \frac{\lvert S \rvert ! (\lvert N \rvert - \lvert S \rvert - 1)!}{\lvert N \rvert !}
	\end{equation*}
	change in $\utopianGap$.
	This means that $\utopianGap$ is increasing in $\overline{v}_{\hat{\K}}(S)$ and decreasing in $\underline{v}_{\hat{\K}}(S)$. As $\S(\hat{\K}\cup\{S\}, v)\subseteq \S(\hat{\K}, v)$, the projection of $\S(\hat{\K}\cup \{S\},v)$ on axis $S$ is contained in the projection $\S(\hat{\K}\cup \{S\},v)$ on axis $S$, which implies~\eqref{eq:gap-is-mono}.
	Finally, from non-negativity and monotonicity follows~\eqref{eq:gap-is-sub}.
\end{proof}

Finally, let us touch on the geometric interpretation of the utopian gap.
As stated in Theorem~\ref{lem:super bounds}, the upper and lower games define a hyper-rectangle which tightly bounds the set of extensions $\S(\hat{\K}, v)$.
Since the utopian gap is a linear function of the values of the lower/upper games, it can be seen as a \enquote{measure} of the size of $\S(\hat{\K}, v)$.
Uncertainty in each coalition is additionally weighted according to its contribution to the Shapley value.

\subsection{Principal's Optimization Problems}

A large utopian gap makes it difficult to distribute payoff, as each player tends to their utopian game. The more we know about the underlying game, the smaller the cumulative utopian gap gets, being zero for a game with a single $\S$-extension. However, to obtain all the necessary unknown values might be too expensive, since there are exponentially many of them.

To minimize the utopian gap, we hence assume the existence of a non-affiliated party who we refer to as the {\it principal}. Her task is to choose which coalitions should be investigated to reduce the utopian gap the most. We further assume the principal holds a level of expertise that guides the selection process. This expertise is formalized as a {\it prior distribution} over a set of potential characteristic functions. For example, in a medical context, a doctor acting as the principal might seek to assess a patient's response to a combination of drugs (represented as a coalition) and base their expectations on past clinical experience. Similarly, a machine learning engineer could rely on their prior knowledge of feature importance gained from previous problem-solving experiences. Consequently, we assume that each problem instance can be viewed as a sample drawn from this known prior distribution, denoted as $\valueDistribution$. To put bluntly, the principal is aware of the prior distribution, but not of the specific instance of the underlying game.

There are two basic approaches to choosing which coalitions to investigate, \textit{online} and \textit{offline}.  In the online approach, the principal operates sequentially, utilizing information from previously revealed coalition values.

\begin{definition}[Online Principal's Problem]\label{def: online principal problem}
	Let $t\in\mathbb{N}$, $\valueDistribution$, $\text{supp}\ \valueDistribution\subseteq\S$ be a distribution of superadditive games.
	Then $\K^*_t\subseteq 2^N \setminus \K_0$ is a solution of the \textit{online principal's problem} of size $t$ if
	\begin{equation}
		\label{eq: online principals problem}
		\K^*_t \in
		\argmin_{\K_t\subseteq 2^N\setminus \K_0, |\K_t|= t}\left\{
		\mathop{\mathbb{E}}_{v\sim \valueDistribution}\left[\utopianGap_{(N,v)}(\K_t)\right]\right\}
	\end{equation}
	where $\K_\tau$, $\tau \le t$ is such that $\K_\tau = \K_{\tau-1} \cup \{S_\tau\}$, $S_\tau = \pi(N,\K_{\tau-1},v)$, and $\pi$ is a policy function that chooses $S_\tau$ based on the known values of $v$, i.e. values of coalitions in $\K_{\tau-1}$.
\end{definition}
\noindent In contrast, the offline approach entails a lack of such information.

\begin{definition}[Offline Principal's Problem]
	Let $t\in\mathbb{N}$, $\valueDistribution$, $\text{supp}\ \valueDistribution\subseteq\S$ be a distribution of superadditive games.
	Then $\K^*_t\subseteq 2^N \setminus \K_0$ is a solution of the \emph{offline principal's problem} of size $t$ if
	\begin{equation}
		\label{eq: offline principals problem}
		\K^*_t \in
		\argmin_{\K\subseteq 2^N\setminus \K_0, |\K|= t}\left\{
		\mathop{\mathbb{E}}_{v\sim \valueDistribution}\left[\utopianGap_{(N,v)}(\K)\right]\right\}.
	\end{equation}
\end{definition}

\subsection{Algorithms Solving the Principal's Problems}

In this section, we discuss various methods for finding (approximate) solutions to the principal's problems defined above. We defer further details about all algorithms to Appendix~\ref{app: alg specs}.

\subsubsection{Offline Algorithms}

At each step $t$, the {\sc Offline Optimal} algorithm chooses coalitions $\{S_i\}_{i=1}^t$ which minimize the expected utopian gap under $\valueDistribution$. We estimate the expectation w.r.t. $\valueDistribution$ in Eq.~(\ref{eq: offline principals problem}) by $\kappa$ samples, see Algorithm~\ref{algo: offline optimal}.
\begin{algorithm}[h!]
	\caption{\sc Offline Optimal}
	\label{algo: offline optimal}
	\SetAlgoLined
	\DontPrintSemicolon
	\KwIn{distribution of superadditive functions $\valueDistribution$, number of steps $t$, number of samples $\kappa$}
	\vspace{1ex}
	$ \overline \K \gets 2^N \setminus \K_0 $\;
	$G \gets \left\{ \right\}$ \tcp*{trajectories and their average gap}
	\For(\tcp*[f]{for each trajectory}){$\mathcal{S} \subseteq \overline \K: \absolute{\mathcal{S}} = t$}
	{
		$\mu \gets 0$\;
		\For(\tcp*[f]{approximate expectation}){$j \in \{1, \dots, \kappa\}$}
		{
			$ v \sim \valueDistribution$\;
			$\mu \gets \mu + \utopianGap_{(N,v)}(\mathcal{S})$\;
		}
		$ \mu \gets \mu \slash \kappa $\;
		$ G \left[ \mathcal{S} \right] \gets \mu $
	}
	$ \left\{ S_i \right\}_{i=1}^t \gets \argmin_{\mathcal S \subseteq \overline \K: \absolute{\mathcal S} = t} G \left[ \mathcal S \right]$\;
	\Return{$\{S_i\}_{i=1}^{t}$}\;
\end{algorithm}

A computationally less demanding variant of the {\sc Offline Optimal} is the {\sc Offline Greedy} algorithm.
It chooses the next coalition $S_t$ such that, given the previous trajectory $\{S_i\}_{i=1}^{t-1}$, it minimizes the expected utopian gap.
Consequently, it can perform no better than the {\sc Offline Optimal}.
We again estimate the expectation in Eq.~(\ref{eq: offline principals problem}) by $\kappa$ samples, see Algorithm~\ref{algo: offline greedy}.
\begin{algorithm}[ht]
	\caption{\sc Offline Greedy}
	\label{algo: offline greedy}
	\SetAlgoLined
	\DontPrintSemicolon
	\KwIn{distribution of superadditive functions $\valueDistribution$, number of steps $t$, number of samples $\kappa$}
	\vspace{1ex}
	\SetKwFunction{FNC}{Offline Greedy}
	\SetKwProg{Fn}{Function}{:}{}
	$\{S_i\}_{i=1}^{t - 1} \gets $ {\FNC{$\valueDistribution, t-1$}}\;
	$ \overline \K \gets 2^N \setminus ( \K_0 \cup \left\{ S_i \right\}_{i=1}^{t-1} ) $\;
	$G \gets \left\{ \right\}$ \tcp*{trajectories \& their average gap}
	\For(\tcp*[f]{for each trajectory}){$S \in \overline \K$}
	{
		$\mu \gets 0$\;
		\For(\tcp*[f]{approximate expectation}){$j \in \{1, \dots, \kappa\}$}
		{
			$ v \sim \valueDistribution$\;
			$\mu \gets \mu + \utopianGap_{(N,v)}(\left\{ S_i \right\}_{i=1}^{t-1} \cup \left\{ S \right\})$\;
		}
		$ \mu \gets \mu \slash \kappa $\;
		$ G \left[ S \right] \gets \mu $
	}
	$S_{t} \gets \argmin_{S\in \overline \K} G \left[ S \right]$\;
	\Return{$\{S_i\}_{i=1}^{t}$}\;
\end{algorithm}

\subsubsection{Online Algorithm}

In comparison to the offline problem, solving the online problem poses a significantly greater challenge. Intuitively, one key reason is that an algorithm for the online problem must compute (or approximate) a restriction of $\valueDistribution$ that remains consistent with the values it has uncovered in prior steps. However, this can be particularly challenging, especially in case the only access to $\valueDistribution$ is through sampling. To tackle the online problem and derive an approximate solution, we employ reinforcement learning~\cite{sutton2014reinforcement}, specifically, the proximal policy optimization~({\sc PPO})~\cite{Schulman2017}.

At each step $\tau$, {\sc PPO} receives values of coalitions $\K_{\tau-1}=\{S_i\}_{i=1}^{\tau-1}$ it uncovered in the past and chooses the next coalition $S_\tau$.
To get a strategy which efficiently minimizes the utopian gap, we define the reward (which is maximized by the {\sc PPO} algorithm) as the negative expected utopian gap averaged over $\tau \le t$.

\medskip

As previously mentioned, the greedy algorithm is significantly more computationally efficient, with a linear complexity in the number of coalitions, while the optimal variant exhibits an exponential time complexity. A question which naturally arises is under which conditions the local greedy search can reliably yield a globally optimal solution. While our empirical observations indicate similar performance between both approaches, the greedy method is not optimal in general. To illustrate, we provide a simple example in Appendix~\ref{app: local is not global example}, where the local search fails to identify the optimum. However, it is known that, if the optimized function is supermodular, the locally optimal steps are guaranteed to yield a $(1-1/e)$-approximation of the global optimum~\cite[Proposition 3.4]{Nemhauser1978}.
In our case, the utopian gap is supermodular for all small games.

\begin{figure*}[t!]
	\includegraphics[width=0.48\textwidth]{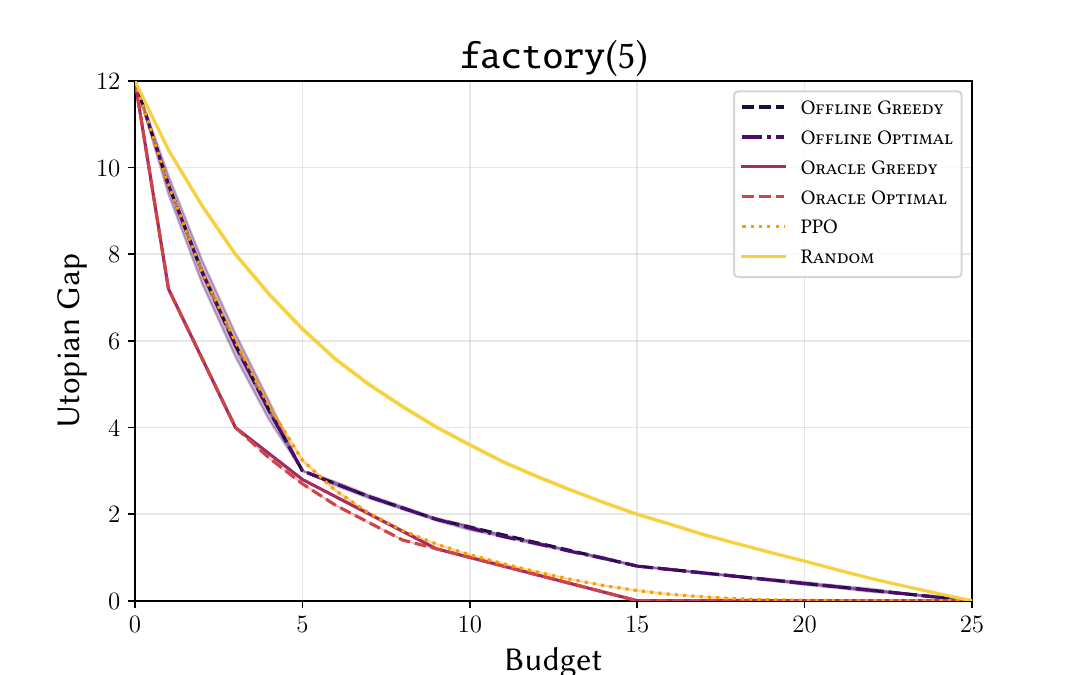}
	\includegraphics[width=0.48\textwidth]{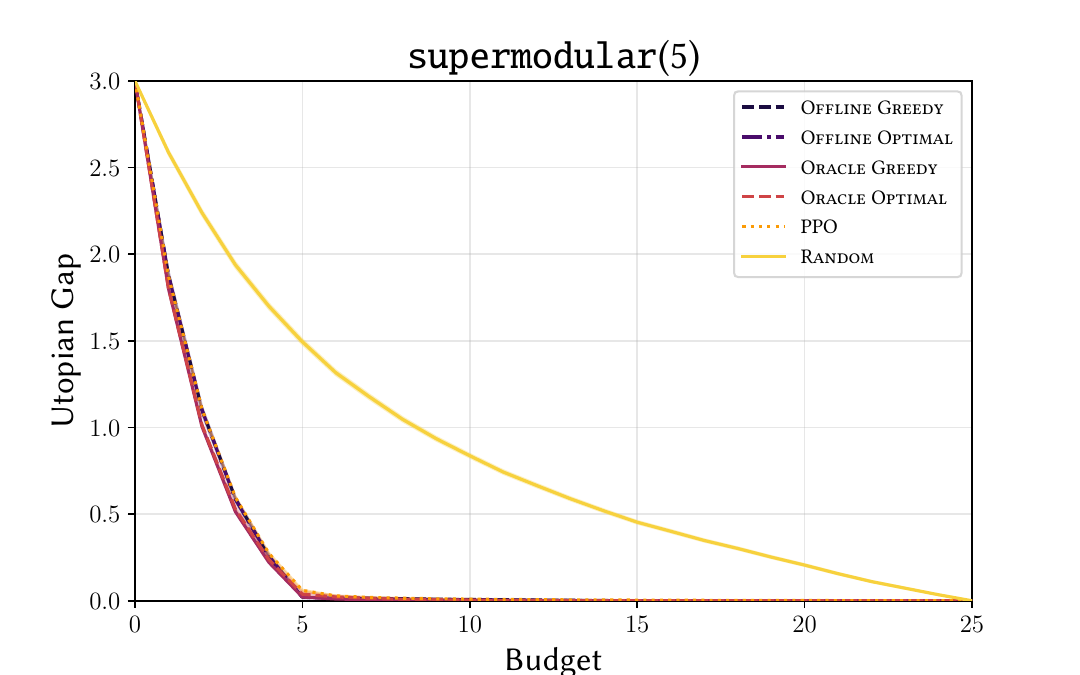}
	\caption{
		The utopian gap as a function of number of revealed coalitions (i.e. steps of the Principals problem) for different algorithms.
		We show {\tt factory}(5) (left), and {\tt supermodular}(5) (right) games.
		All algorithms outperform the {\sc Random} benchmark considerably.
		The greedy versions of each algorithm exhibit similar performance to the optimal variants.
		The {\sc PPO} algorithm is initially close to the offline algorithms, and uses the online information to approach the oracle algorithms.
	}
	\label{fig: exploitability over time}
\end{figure*}

\begin{proposition}\label{prop:gap-supermodularity}
	For $|N|\le 4$, the gap of an incomplete $\S$-extendable game is supermodular.
\end{proposition}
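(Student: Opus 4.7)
The plan is to verify the defining inequality
\[
\utopianGap_{(N,v)}(\mathcal{L}\cup\{T\}) - \utopianGap_{(N,v)}(\mathcal{L}) \;\ge\; \utopianGap_{(N,v)}(\K\cup\{T\}) - \utopianGap_{(N,v)}(\K)
\]
for every $\K\subseteq \mathcal{L}\subseteq 2^N\setminus\K_0$, every $T\in 2^N\setminus(\mathcal{L}\cup\K_0)$, and every superadditive $v$ with $|N|\le 4$. By Proposition~\ref{prop: gap in K} both sides are non-positive, so the statement is the diminishing-returns property of gap reduction that Nemhauser's theorem requires. Via the decomposition in Proposition~\ref{prop:gap-is-utopian}, each of the four gaps rewrites as $\sum_{S\subseteq N}\alpha_S\Delta_{\hat{\K}}(S)$ for the appropriate $\hat{\K}\in\{\K,\K\cup\{T\},\mathcal{L},\mathcal{L}\cup\{T\}\}\cup\K_0$, and the target inequality becomes a linear combination, with positive weights $\alpha_S$, of such terms.

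The next step is a piecewise-linear reduction. The bounds $\overline{v}_{\hat{\K}}(S)$ and $\underline{v}_{\hat{\K}}(S)$ from~\eqref{eq: lower game} and~\eqref{eq: upper game} are max/min over finitely many linear forms in the coordinates of $v$, indexed by partitions of $S$ into elements of $\hat{\K}$ and by supersets of $S$ in $\hat{\K}$ respectively. The difference of the two sides of the target inequality is therefore piecewise linear in $v\in\S$, and it suffices to verify non-negativity on each full-dimensional cell of the common refinement of the four arrangements attached to the four choices of $\hat{\K}$. With $|N|\le 4$ all these enumerations are small: at most $10$ coalitions lie outside $\K_0$, the partitions of a coalition of size at most $4$ number at most $15$, and any coalition has at most $8$ supersets. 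Thus the triples $(\K,\mathcal{L},T)$ and the cells over each triple are few, and I would finish the proof by running, on every cell, the linear program that either certifies the inequality on its intersection with $\S$ or exhibits a superadditive vertex violating it.

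The main obstacle, and the reason I would not seek a slick structural argument, is that the map $\hat{\K}\mapsto\Delta_{\hat{\K}}(S)$ does not decouple across $S$: adding $v(T)$ to $\hat{\K}$ simultaneously refines $\underline{v}_{\hat{\K}}(S)$ at every $S$ whose optimal partition uses $T$, and refines $\overline{v}_{\hat{\K}}(S)$ at every $S\subseteq T$ for which $T$ realises the min in~\eqref{eq: upper game}; these two cascades compose nonlinearly as one passes from $\K$ to $\mathcal{L}$. For $|N|\le 4$ the chains in the subset lattice are short enough (length at most four) that the cascades remain bounded and the case analysis closes, whereas a structural argument extending to larger $n$ seems out of reach by this route -- consistent with the statement being restricted to $|N|\le 4$.
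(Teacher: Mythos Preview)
Your piecewise-linear LP enumeration is a sound strategy, but as written it is a plan rather than a proof: you stop at ``I would finish the proof by running\ldots'' with no certificate that the LPs actually close. More to the point, the paper takes exactly the structural route you dismiss. Its key observation is that the supermodularity inequality holds \emph{term by term} in the decomposition $\utopianGap_{(N,v)}(\K)=\sum_{A}\alpha_A\Delta_{\hat{\K}}(A)$: for every single coalition $A$ and every $\hat{\K},S,Z$ one has, when $|N|\le 4$,
\[
\Delta_{\hat{\K}\cup\{S,Z\}}(A) - \Delta_{\hat{\K}\cup\{S\}}(A)\ \ge\ \Delta_{\hat{\K}\cup\{Z\}}(A) - \Delta_{\hat{\K}}(A),
\]
and summing with the nonnegative weights $\alpha_A$ gives the claim. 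The paper establishes this stronger statement via three short lemmas recording exactly when revealing one coalition can move $\underline{v}_{\hat{\K}}(A)$ or $\overline{v}_{\hat{\K}}(A)$, followed by a case split on the inclusion relation between $Z$ and $A$. For $|N|=3$ the argument is one line (all unknown coalitions have size~$2$, so no two are comparable and the gap is in fact modular); for $|N|=4$ each inclusion case reduces to monotonicity of the bounds in $\hat{\K}$. The coupling you worry about is real in general---and indeed forces the term-by-term inequality to fail for $|N|\ge 5$, which is why the proposition stops there---but for $|N|\le 4$ the subset chains are short enough that each $A$ can be dispatched in isolation, without hiding the case analysis inside an enumeration over cells of~$\S$.
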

\begin{proof}[Proof Sketch]
or $\lvert N \rvert =3$, the gap is modular due to invariant bounds post-revelation. For $|N|=4$, the supermodularity is demonstrated through a more technical case analysis.
The full proof can be found in Appendix~\ref{app: supermodularity for n<5}.
\end{proof}
However, already for $|N|=5$, it can be shown the gap is not supermodular for several significant subsets of superadditive games, such as totally monotonic games (a subset of convex games, which under the name \emph{belief measures} forms the foundation on the \emph{evidence theory}~\cite{Grabisch2016, shafer1976}), symmetric games (often studied for their robust properties with respect to solution concepts~\cite{Peleg2007}), or graph games (games with a compact structure represented by $O(n^2)$ values with application in scenarios where only bilateral collaboration between players occurs~\cite{Chalkiadakis2012}). 

\begin{observation}\label{prop:examples-of-games}
For $|N|=5$, there exist a totally monotonic game, convex game, symmetric superadditive game, and graph game for which the utopian gap is not supermodular.
\end{observation}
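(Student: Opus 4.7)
The plan is to exhibit an explicit counterexample for each of the four classes. For each class I would construct a concrete $(N,v)$ with $|N|=5$, belonging to the class in question, together with sets $\K \subsetneq \mathcal{L} \subseteq 2^N \setminus \K_0$ and a coalition $S \notin \mathcal{L}$ such that
\[
\utopianGap_{(N,v)}(\K) - \utopianGap_{(N,v)}(\K \cup \{S\}) < \utopianGap_{(N,v)}(\mathcal{L}) - \utopianGap_{(N,v)}(\mathcal{L} \cup \{S\}),
\]
which directly contradicts supermodularity of $\utopianGap_{(N,v)}$ as a set function of $\K$. By Proposition~\ref{prop:gap-is-utopian} each of the four quantities is an affine combination of finitely many $\overline{v}(T)$ and $\underline{v}(T)$ values, so every candidate triple can be certified by direct arithmetic.

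To obtain the witnesses I would search within a natural low-dimensional parameterization of each class. Totally monotonic games are parameterized by non-negative M\"obius coefficients $m(T)\ge 0$ on $T\subseteq N$ with $m(\emptyset)=0$ and $v(S)=\sum_{T\subseteq S}m(T)$, yielding both non-negativity and total monotonicity automatically; convex games admit an equally simple sampling via non-negative second differences. A symmetric superadditive game is determined by five numbers $f(1),\dots,f(5)$ with $f(k)+f(\ell)\le f(k+\ell)$, a family small enough to enumerate. A graph game is specified by non-negative edge weights $w_e$ on a graph over five vertices with $v(S)=\sum_{e\subseteq S}w_e$, which is always superadditive. For each sampled game I would enumerate short chains $(\K,\mathcal{L},S)$ — in practice $\mathcal{L} = \K\cup\{T\}$ for a single coalition $T$ already suffices — compute $\underline{v}_{\,\cdot\,}$ and $\overline{v}_{\,\cdot\,}$ from the definitions~(\ref{eq: lower game}) and~(\ref{eq: upper game}), and evaluate the gap via Proposition~\ref{prop:gap-is-utopian}.

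The main obstacle is organizational rather than conceptual: the lower-game computation involves optimizing over partitions of each coalition into blocks drawn from $\K$, so each configuration produces a moderately sized table of numerical values. I would therefore present each counterexample as a self-contained tabular witness in the appendix, listing $v$, the four sets of $\underline{v}$, $\overline{v}$ values, and the four resulting gap values, from which the inequality above can be checked by inspection. Since the observation asserts only existence, a single verified witness per class completes the argument; the conceptual point is the negative one, namely that the structural coincidences which force supermodularity for $|N|\le 4$ (Proposition~\ref{prop:gap-supermodularity}) break down simultaneously across several natural subclasses once $|N|=5$.
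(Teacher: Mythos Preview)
Your proposal is correct and takes essentially the same approach as the paper: exhibit an explicit game in each class together with a choice of $\K$ and coalitions for which the supermodularity inequality of $\utopianGap_{(N,v)}$ fails, then verify the arithmetic via Proposition~\ref{prop:gap-is-utopian}. The paper streamlines your search step by fixing a \emph{single} common configuration $\K = \K_0 \cup \{\{1,2,3\}\}$, $S=\{1,2\}$, $Z=\{3,4\}$ that works across all four classes and then simply writes down the witnessing games (e.g., $u_{\{3,4\}}+u_{\{1,2,3\}}$ for the totally monotonic case, a symmetric game with $s_2=s_3=1$, $s_5=2$, and a graph game with $w_{1,2}=w_{3,4}=1$), so no parameter sweep or tabular appendix is needed.
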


For even larger games, we can formualate a precise criterion to ascertain the non-supermodularity of the utopian gap.

\begin{proposition}\label{prop:gap-criterion}
For $|N|\geq 6$, the gap of an incomplete $\S$-extendable game is not supermodular if there exist $i,j,k,l \in N$ such that $\tilde{v}(ij) \le \tilde{v}(jk) \le \tilde{v}(kl)$, where $\tilde{v}(S) = v(S) - \sum_{i\in S}v(\{i\})$, and
    \begin{equation}\label{gap-criterion}
        \tilde{v}(kl) < \left[{\binom{n}{2}}{\binom{n}{\lfloor n/2 \rfloor}}^{-1}\left(2^{n-3}-n+2\right)-1\right]\tilde{v}(ij).
    \end{equation}
\end{proposition}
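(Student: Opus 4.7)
The plan is to disprove supermodularity by producing, under the stated hypotheses, an explicit triple $(\K,\mathcal{L},T)$ with $\K\subseteq\mathcal{L}\subseteq 2^N$ and $T\notin\mathcal{L}$ that violates the supermodular inequality
\begin{equation*}
\utopianGap_{(N,v)}(\K\cup\{T\}) - \utopianGap_{(N,v)}(\K) \le \utopianGap_{(N,v)}(\mathcal{L}\cup\{T\}) - \utopianGap_{(N,v)}(\mathcal{L}).
\end{equation*}
Guided by the chain structure of the hypothesis $\tilde v(ij)\le\tilde v(jk)\le\tilde v(kl)$, the natural candidate is to take $\K=\K_0$, $\mathcal{L}=\K_0\cup\{\{i,j\},\{j,k\}\}$, and $T=\{k,l\}$, and show the reverse strict inequality. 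Intuitively, when only minimal information is known, revealing $\{k,l\}$ has limited effect because lower/upper bounds on supersets of $\{k,l\}$ remain slack for lack of other pairwise data, whereas revealing $\{k,l\}$ on top of $\mathcal{L}$ activates the full chain and triggers a cascade of tightenings via (\ref{eq: lower game})--(\ref{eq: upper game}).

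The computation then proceeds by invoking Proposition~\ref{prop:gap-is-utopian} to rewrite each of the four gaps as $\sum_{S\subseteq N}\alpha_S\Delta_{\hat\K}(S)$ and tracking, coalition by coalition, the changes in $\Delta_{\hat\K}(S)=\overline v_{\hat\K}(S)-\underline v_{\hat\K}(S)$. Three classes of coalitions appear: those whose bounds are unchanged by inserting $T$; those for which the partition attaining the maximum in (\ref{eq: lower game}) is refined by $\{k,l\}$; and those whose upper bound in (\ref{eq: upper game}) tightens through a superset that acquires a better subtraction term. Under the chain hypothesis, the improvement in $\Delta_{\hat\K}(S)$ produced by adding $T$ on top of $\mathcal{L}$ is of order $\tilde v(kl)-\tilde v(jk)$ on roughly $2^{n-3}$ supersets (those containing $\{k,l\}$ but neither the grand coalition nor any singleton, accounting for the $-n+2$ correction), whereas on top of $\K$ the improvement essentially collapses to the single-coalition contribution $\alpha_{\{k,l\}}\tilde v(kl)$. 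The left-hand side of the desired strict inequality is then bounded above by terms scaling like $\binom{n}{2}\tilde v(ij)$ (reflecting the $\binom{n}{2}$ pairs whose upper bounds can still shift via the $\tilde v(ij)$ slack), while the right-hand side is bounded below by a factor $\binom{n}{\lfloor n/2\rfloor}^{-1}(2^{n-3}-n+2)\tilde v(kl)$, using that the minimal Shapley weight $\alpha_S=|S|!(n-|S|)!/n!$ equals $1/\binom{n}{\lfloor n/2\rfloor}$. Rearranging the resulting numerical inequality yields exactly condition (\ref{gap-criterion}).

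The main obstacle is the second step: the $\max$ in (\ref{eq: lower game}) and the $\min$ in (\ref{eq: upper game}) can be attained by several distinct partitions or witnesses depending on the exact values in the chain, so one must verify that the inequalities $\tilde v(ij)\le\tilde v(jk)\le\tilde v(kl)$ are strong enough to pin down the attaining configurations consistently across all four incomplete games involved. A case analysis on which of $\{i,j\},\{j,k\},\{k,l\}$ lie inside a given superset $S$, together with the non-negativity of the singleton values, should resolve this bookkeeping, but it is where the precise form of the constant in (\ref{gap-criterion}) is forged and where the argument is most delicate.
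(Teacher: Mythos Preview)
Your proposal has a genuine gap: the particular triple you choose does \emph{not} violate supermodularity. With $\K=\K_0$, $\mathcal{L}=\K_0\cup\{\{i,j\},\{j,k\}\}$ and $T=\{k,l\}$, write the would-be violation as $\sum_{S}\alpha_S\bigl[(\Delta_{\K_0\cup T}-\Delta_{\K_0})-(\Delta_{\mathcal{L}\cup T}-\Delta_{\mathcal{L}})\bigr](S)>0$ and check it termwise. For every $S\supseteq\{i,j,k,l\}$ (and likewise for $S\subseteq N\setminus\{i,j,k,l\}$) the bracket equals $\tilde v(ij)-\tilde v(jk)\le 0$; for $S=\{i,j\}$ it equals $-\tilde v(kl)$; for $S=\{k,l\}$ it equals $-\tilde v(ij)$; and the remaining cases give $0$ or further non-positive values. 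Every summand is $\le 0$, so the inequality you need is never strict in the direction you want. In particular, your intuition is reversed: under the chain hypothesis $\tilde v(ij)\le\tilde v(jk)$, revealing $\{k,l\}$ on top of $\mathcal{L}$ produces a \emph{smaller} drop in the lower bound of supersets of $\{i,j,k,l\}$ (namely $\tilde v(ij)+\tilde v(kl)-\tilde v(jk)$) than revealing it on top of $\K_0$ (namely $\tilde v(kl)$), so the ``cascade'' argument supports supermodularity rather than refuting it. Your claim that on $\K_0$ the effect ``collapses to the single-coalition contribution $\alpha_{\{k,l\}}\tilde v(kl)$'' is also incorrect: revealing $\{k,l\}$ already shifts the bounds of all supersets of $\{k,l\}$ and all subsets of $N\setminus\{k,l\}$.

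The paper's choice is different and this difference is essential: one takes the base to already contain $\{j,k\}$, i.e.\ $\K=\K_0\cup\{\{j,k\}\}$, and then compares the marginal of $Z=\{k,l\}$ on $\K$ versus on $\K\cup\{i,j\}$. With $\{j,k\}$ present, revealing $\{k,l\}$ alone improves the lower bound of any $S\supseteq\{i,j,k,l\}$ only by $\tilde v(kl)-\tilde v(jk)$ (the two pairs share $k$, so they cannot both sit in a partition). But once $\{i,j\}$ is also known, revealing $\{k,l\}$ unlocks the disjoint partition $\{i,j\}\cup\{k,l\}$, giving an improvement of $\tilde v(ij)+\tilde v(kl)-\tilde v(jk)$. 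The excess $\tilde v(ij)$ over $c_1(n)+c_2(n)=2^{n-3}-n+2$ such coalitions, weighted by at least $\binom{n}{\lfloor n/2\rfloor}^{-1}$, is what drives the violation; the only positive terms come from $S\in\{\{i,j\},\{k,l\}\}$ with weight $\binom{n}{2}^{-1}$, and balancing them yields exactly~\eqref{gap-criterion}. The crucial ingredient you are missing is placing the ``middle'' pair $\{j,k\}$ in the base so that it blocks $\{k,l\}$ from acting alone.
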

\begin{proof}[Proof Sketch]
We rephrase the supermodularity condition~\eqref{eq: supermodular constraints} in relation to the utopian gap, and demonstrate that this condition is never met when Eq.~\eqref{gap-criterion} holds for the pairs in some $\{i,j,k,l\}$.
The full proof can be found in Appendix~\ref{app:gap-criterion}.
\end{proof}

This criterion allows to construct examples also for all earlier mentioned classes of games. The proofs and examples are available in Appendix~\ref{app:counter-examples}. Despite these findings, empirically, the greedy algorithm demonstrates similar performance to its optimal counterpart, as illustrated in the following section. 

\section{Empirical Evaluation}\label{sec: experiments}

Finally, we demonstrate the performance of our algorithms on practical examples. We outline the domains used for the evaluation, introduce baseline methods for comparison, detail the algorithmic setups, and conduct a comprehensive analysis of the gathered results. All the details that could not be accommodated within the main text have been addressed in the relevant appendices.

\subsection{Experimental Domains}

We conduct our experiments on two representative families of cooperative games. Their main difference is the degree to which the values of different coalitions are correlated.

For the tightly correlated scenario, we use a simple model of a factory.
The set of players $N$, $|N|\ge 2$ consists of a single factory owner $o$ and $n-1$ identical workers.
The value of a coalition $S$ is
\begin{equation}
	v(S) =
	\begin{cases}
		|S|-1 & \text{\ if\ }o\in S, \\
		0     & \text{\ otherwise}.
	\end{cases}
	\label{eq: factory definition}
\end{equation}
In words, each worker contributes equally to the value of a coalition as long as he has somewhere to work.
It is easy to verify that the resulting game is superadditive. We denote the distribution of factory games of $n$ players where the owner is selected uniformly at random as {\tt factory}$(n)$.
We show that the gap is not supermodular already for {\tt factory}($5$) in Appendix~\ref{app: local is not global example}.

The second family is the broad class of supermodular games. These games are among the most studied in the cooperative game theory~\cite{Grabisch2016} -- some authors consider this property so important they even impose it in the definition of a cooperative game~\cite{owen2013}. In our context, the problem of identifying a subset of coalitions which represents a supermodular function well has been studied in a broader context of set functions~\cite{Vondrak2014}. In~\cite{Beliakov2022}, the authors derived an efficient algorithm for uniform sampling of monotone supermodular functions. We refer to the distribution of corresponding games of $n$ players as {\tt supermodular}$(n)$.

We conduct further experiments on several other classes of cooperative games as well, and present these results in Appendix~\ref{app: additional experiments}. 
We observe similar trends as those presented here in the main text.

\subsection{Benchmarks}

We compare our algorithms introduced in the previous section to three baselines: a random algorithm and two oracle algorithms. The random algorithm selects the next coalition uniformly at random. We refer to it in the results as {\sc Random}. Next, we introduce two oracle methods that are not deployable in practice because they assume the knowledge of the underlying true characteristic function. However, they provide an upper bound on what an optimal online algorithm could possibly achieve.

The {\sc Oracle Optimal} algorithm operates similarly to {\sc Offline Optimal}, with the key difference being that it leverages an oracle to acquire values of the underlying game $(N, v)$ before making the next coalition selection. Consequently, {\sc Oracle Optimal} can harness complete knowledge of the underlying game to minimize the gap effectively. Its pseudocode as Algorithm~\ref{algo: online optimal} is in Appendix~\ref{app:oracle-algs}.

\begin{figure*}[t!]
	\vspace{-5ex}
	\includegraphics[width=\textwidth]{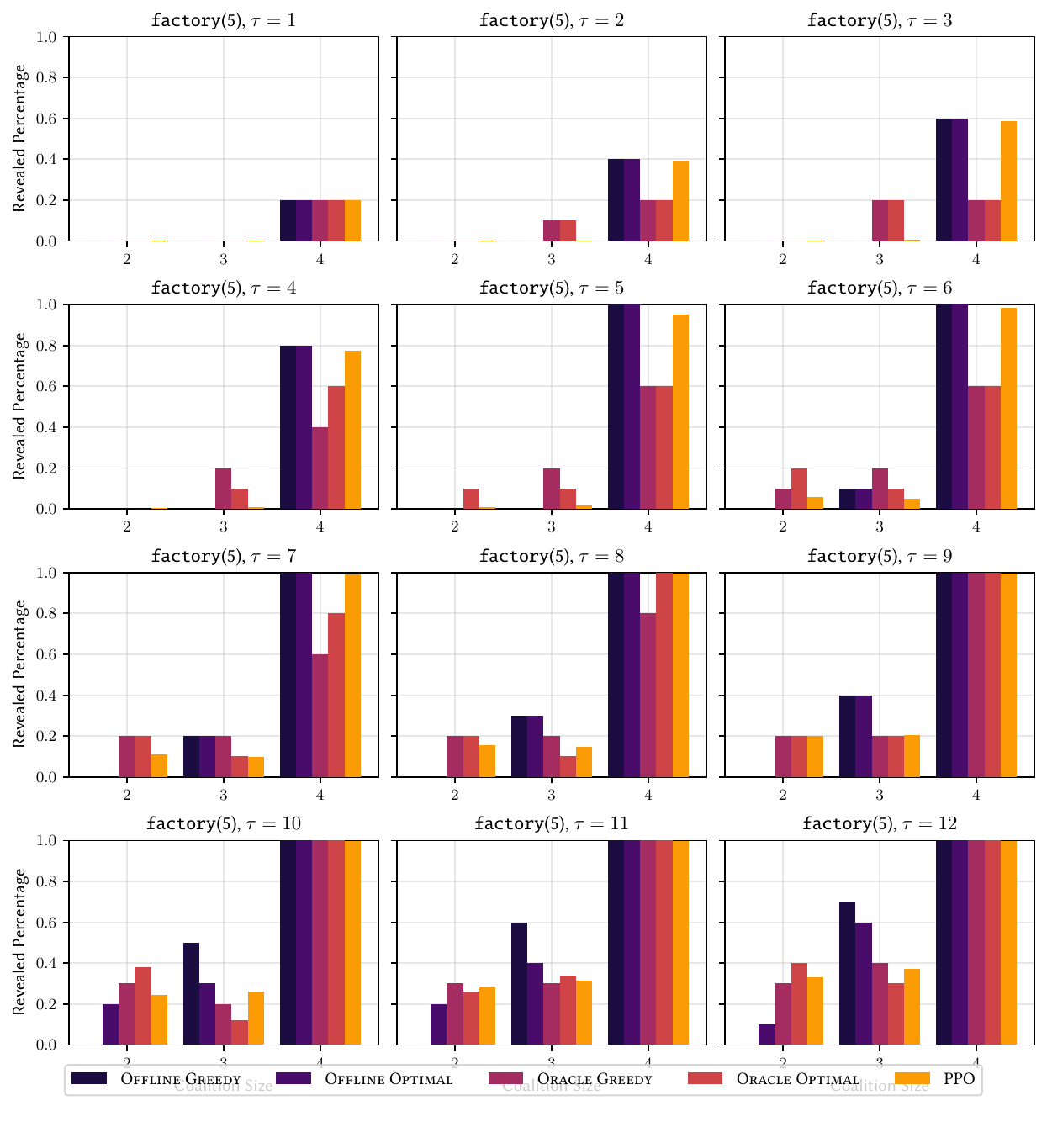}
	\caption{
		Percentage of coalitions of the same size selected up to step twelve for {\tt factory}(5) and each algorithm.
		Results show clear preference for larger coalitions, i.e. they contribute more information about the cooperative game on average.
		The oracle algorithms favor smaller coalitions earlier, suggesting the representation of a specific game can efficiently use even smaller coalitions.
			{\sc PPO} initially behaves similarly to the offline algorithms.
		At later steps, it uses the previously obtained values and its selections resemble the oracle methods.
		See Figure~\ref{fig: app factory5 bar plots} in Appendix~\ref{app: additional experiments} for a plot showing individual coalitions.
	}
	\label{fig: factory5 bar cum plots}
\end{figure*}

Similar to the {\sc Offline Greedy}, the {\sc Oracle Greedy} algorithm selects next coalition $S_t$ such that, in combination with the previous trajectory $\{S_i\}_{i=1}^{t-1}$, it minimizes the utopian gap.
It also uses the oracle to gather all information about the underlying game $(N, v)$.
Its pseudocode is depicted as Algorithm~\ref{algo: online greedy} in Appendix~\ref{app:oracle-algs}.

\begin{figure}[t!]
	\includegraphics[width=0.48\textwidth]{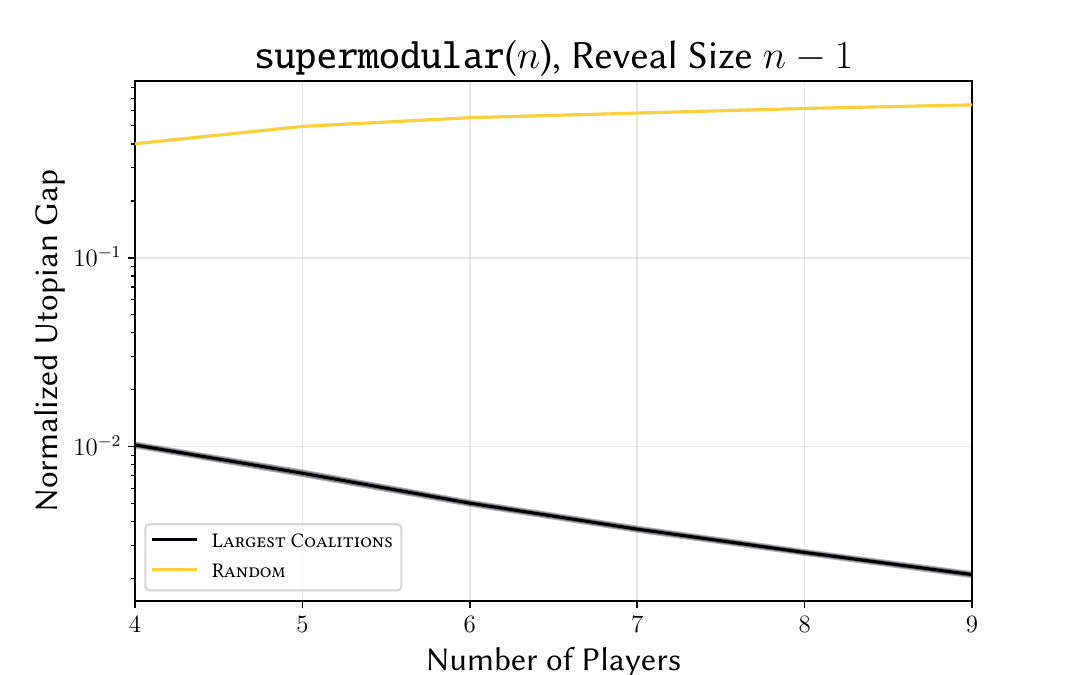}
	\caption{
		The utopian gap as a function of the number of players.
		The figure compares expected gap of {\tt supermodular}($n$) when choosing coalitions randomly, and when all coalitions of size $n-1$ are selected.
		Not only is the utopian gap in the latter case small, it decreases with the number of players.
	}
	\label{fig: reveal larges coalitions}
\end{figure}

\subsection{Experimental Setup}

During the training phase, we execute the {\sc PPO} algorithm for a total of \ppoTimesteps time-steps, which, in the case of 12-step instances, translates to \ppoIterations iterations. In each iteration, we repeatedly collect \ppoTrajectories trajectories by sampling batches from the distribution $\valueDistribution$. Following each iteration, we optimize the {\sc PPO} surrogate objective for \ppoEpochs epochs. To ensure uniformity, we normalize the input values $v(S)$ to a unit interval, as elaborated in Appendix~\ref{app: value normalization}. For the two optimal algorithms, we employ $\kappa=\numSamplesFive$ samples to estimate the mean value in Eq.~\eqref{eq: offline principals problem}.
See Appendix~\ref{app: alg specs} for more details.

Among the algorithms we have proposed, both {\sc Offline Optimal} and {\sc Oracle Optimal} come with significantly higher computational costs (their complexity scales as $\mathcal{O}(2^{2^n})$) when compared to their more computationally efficient greedy counterparts (which scale $\mathcal{O}({2^n})$). To provide some perspective, our evaluation for $n=5$ with twelve steps and ten samples necessitated approximately \hoursFiveOptimal hours to finish. Extrapolating, we estimate that completing the evaluations for the remaining steps in this setup, across all ten samples, would require roughly \hoursFiveOptimalAll hours. Furthermore, extending these experiments to $n=6$ would demand \timeSixOptimal.

\paragraph{Hardware and Software}
All experiments are conducted on a computational cluster with AMD~EPYC~7532~CPUs running at 2.4~GHz.
When running algorithms on five players games, we utilize 15~cores and 12~GB of RAM.
The code was implemented in Python~3.10 using pytorch~2.0, stable\_baselines3~2.0, and gymnasium~0.28.
Additional details are presented in Appendix~\ref{app: alg specs}, and is available at GitHub~\cite{gitrepo}.

\subsection{Results}
\label{ssec: factory experiments}

First, we study the {\tt factory}($5$) distribution.
Figure~\ref{fig: exploitability over time} shows the dependence of the utopian gap on the number of revealed coalitions. As expected, the oracle algorithms outperform their offline variants, especially at the beginning. {\sc PPO} is initially similar to the offline algorithms, and later uses the online information to reach similar values of the utopian gap as the oracle methods. The greedy algorithms exhibit similar performance to their optimal counterparts.

To further illustrate which coalitions are important, Figure~\ref{fig: factory5 bar cum plots} shows the percentage of coalitions of a given size selected by each algorithm. The offline algorithms favor the largest coalitions, suggesting they carry the most information overall. The oracle algorithms can gain small improvements by selecting smaller coalitions. This signals that, for a specific game, a tailored representation including also smaller coalitions is better.

Next, we focus on the {\tt supermodular}($5$) distribution.
We show the evolution of the utopian gap as a function of the number of revealed coalitions in Figure~\ref{fig: exploitability over time}.
Surprisingly, the utopian gap can be minimized very fast, by about $99\%$ by step five.
We observed similar trends for {\tt supermodular}(4) presented in Appendix~\ref{app: factory additional results}.
In each case, the algorithms prefered coalitions of size $n-1$, suggesting they are the most important among the unknown coalitions.
We further evaluate the impact of the largest coalitions, by computing the utopian gap as a function of $n$ when all coalitions of size $n-1$ are revealed.
We compare this simple heuristic with {\sc Random} in Figure~\ref{fig: reveal larges coalitions}.
The utopian gap is consistently very low and, surprisingly, decreases with the number of players.
This suggest that most of the information about a supermodular game can be captured by values of $\mathcal{O}(n)$ coalitions.

\section{Conclusion}
\label{sec: cenclusion}

In this paper, we study strategies for efficiently mitigating uncertainty within incomplete cooperative games. We introduce the concept of the \enquote{utopian gap}, which quantifies the disparity between players' expectations and a realistic game outcome. We show fundamental properties of the utopian gap that enables us to compute it more efficiently. Geometrically, the utopian gap reflects the number of potential extensions of an incomplete game. We focus on reducing the utopian gap through well-informed queries about the unknown values within the incomplete game, effectively constructing a representation tailored to capture a maximum information about the game in both online and offline fashion. Our findings indicate that our approach significantly outperforms random queries and approaches optimality. Particularly noteworthy is our heuristic for supermodular games, which reduces the utopian gap by orders of magnitude while requiring only $\mathcal{O}(n)$ queries -- an amount logarithmic in the number of coalitions.

\paragraph{Future Work}

We posit that our approach is not confined solely to the boundaries of cooperative game theory and can be extended to more general set functions. This extension assumes the existence of criteria involving functions (in this work, it is the Shapley value).
that map the powerset of a set $N$ to subspaces of dimensions linear in the size of $N$. Still in the context of games, we see applications of our approach within the SHAP~\cite{NIPS2017_7062} framework, as the original formulation can only handle comparatively small models. Finally, we would also like to strengthen our result by providing guarantees by working within the regret minimization framework.

\clearpage
\begin{acks}
The authors would like to thank Martin Loebl and Milan Hlad\'{i}k for their insightful comments. David Sychrovsk\'{y} and Martin \v{C}ern\'{y} received support from the Charles University Grant Agency (GAUK 206523). This work was supported by the Czech Science Foundation grant no. P403-22-11117S, the CoSP Project grant no. 823748, and the Charles University project UNCE 24/SCI/008. 
Computational resources were supplied by the project e-Infrastruktura CZ (e-INFRA LM2018140) provided within the program Projects of Large Research, Development and Innovations Infrastructures.
\end{acks}

\bibliographystyle{ACM-Reference-Format}
\bibliography{main}


\begin{thebibliography}{31}


\ifx \showCODEN    \undefined \def \showCODEN     #1{\unskip}     \fi
\ifx \showDOI      \undefined \def \showDOI       #1{#1}\fi
\ifx \showISBNx    \undefined \def \showISBNx     #1{\unskip}     \fi
\ifx \showISBNxiii \undefined \def \showISBNxiii  #1{\unskip}     \fi
\ifx \showISSN     \undefined \def \showISSN      #1{\unskip}     \fi
\ifx \showLCCN     \undefined \def \showLCCN      #1{\unskip}     \fi
\ifx \shownote     \undefined \def \shownote      #1{#1}          \fi
\ifx \showarticletitle \undefined \def \showarticletitle #1{#1}   \fi
\ifx \showURL      \undefined \def \showURL       {\relax}        \fi
\providecommand\bibfield[2]{#2}
\providecommand\bibinfo[2]{#2}
\providecommand\natexlab[1]{#1}
\providecommand\showeprint[2][]{arXiv:#2}

\bibitem[\protect\citeauthoryear{Beliakov}{Beliakov}{2022}]%
        {Beliakov2022}
\bibfield{author}{\bibinfo{person}{Gleb Beliakov}.}
  \bibinfo{year}{2022}\natexlab{}.
\newblock \showarticletitle{On Random Generation of Supermodular Capacities}.
\newblock \bibinfo{journal}{\emph{IEEE TRANSACTIONS ON FUZZY SYSTEMS}}
  \bibinfo{volume}{30}, \bibinfo{number}{1} (\bibinfo{year}{2022}),
  \bibinfo{pages}{293--295}.
\newblock


\bibitem[\protect\citeauthoryear{Bok and \v{C}ern\'{y}}{Bok and
  \v{C}ern\'{y}}{2023}]%
        {Bok2023}
\bibfield{author}{\bibinfo{person}{Jan Bok} {and} \bibinfo{person}{Martin
  \v{C}ern\'{y}}.} \bibinfo{year}{2023}\natexlab{}.
\newblock \showarticletitle{1-convex extensions of incomplete cooperative games
  and the average value}.
\newblock \bibinfo{journal}{\emph{Theory and Decisions}}
  (\bibinfo{year}{2023}).
\newblock


\bibitem[\protect\citeauthoryear{{\v{C}}ern\'{y}}{{\v{C}}ern\'{y}}{2023}]%
        {Cerny2023}
\bibfield{author}{\bibinfo{person}{Martin {\v{C}}ern\'{y}}.}
  \bibinfo{year}{2023}\natexlab{}.
\newblock \bibinfo{title}{Bounds on solution concepts of incomplete cooperative
  games}.
\newblock
\newblock
\showeprint[arxiv]{arXiv:2212.04748}~[cs.GT]


\bibitem[\protect\citeauthoryear{{\v{C}}ern\'{y} and Grabisch}{{\v{C}}ern\'{y}
  and Grabisch}{2024}]%
        {CernyGrabisch2023}
\bibfield{author}{\bibinfo{person}{Martin {\v{C}}ern\'{y}} {and}
  \bibinfo{person}{Michel Grabisch}.} \bibinfo{year}{2024}\natexlab{}.
\newblock \showarticletitle{Incomplete cooperative games with player-centered
  information}.
\newblock \bibinfo{journal}{\emph{Discrete Applied Mathematics}}
  \bibinfo{volume}{346} (\bibinfo{year}{2024}), \bibinfo{pages}{62--79}.
\newblock


\bibitem[\protect\citeauthoryear{Chalkiadakis, Elkind, and
  Wooldridge}{Chalkiadakis et~al\mbox{.}}{2012}]%
        {Chalkiadakis2012}
\bibfield{author}{\bibinfo{person}{Georgios Chalkiadakis},
  \bibinfo{person}{Edith Elkind}, {and} \bibinfo{person}{Michael Wooldridge}.}
  \bibinfo{year}{2012}\natexlab{}.
\newblock \bibinfo{booktitle}{\emph{Computational Aspects of Cooperative Game
  Theory} (\bibinfo{edition}{1} ed.)}.
\newblock \bibinfo{publisher}{Springer Cham}. XVI, 150 pages.
\newblock


\bibitem[\protect\citeauthoryear{Choquet}{Choquet}{1954}]%
        {choquet1954theory}
\bibfield{author}{\bibinfo{person}{Gustave Choquet}.}
  \bibinfo{year}{1954}\natexlab{}.
\newblock \showarticletitle{Theory of capacities}. In
  \bibinfo{booktitle}{\emph{Annales de l'institut Fourier}},
  Vol.~\bibinfo{volume}{5}. \bibinfo{pages}{131--295}.
\newblock


\bibitem[\protect\citeauthoryear{Finus}{Finus}{2008}]%
        {finus2008game}
\bibfield{author}{\bibinfo{person}{Michael Finus}.}
  \bibinfo{year}{2008}\natexlab{}.
\newblock \showarticletitle{Game theoretic research on the design of
  international environmental agreements: insights, critical remarks, and
  future challenges}.
\newblock \bibinfo{journal}{\emph{International Review of environmental and
  resource economics}} \bibinfo{volume}{2}, \bibinfo{number}{1}
  (\bibinfo{year}{2008}), \bibinfo{pages}{29--67}.
\newblock


\bibitem[\protect\citeauthoryear{George, Laurence, and Marshall}{George
  et~al\mbox{.}}{1978}]%
        {Nemhauser1978}
\bibfield{author}{\bibinfo{person}{Nemhauser George}, \bibinfo{person}{Wolsey
  Laurence}, {and} \bibinfo{person}{Fisher Marshall}.}
  \bibinfo{year}{1978}\natexlab{}.
\newblock \showarticletitle{An analysis of approximations for maximizing
  submodular set functions I}.
\newblock \bibinfo{journal}{\emph{Mathematical Programming}}
  \bibinfo{volume}{14} (\bibinfo{year}{1978}), \bibinfo{pages}{265--294}.
\newblock


\bibitem[\protect\citeauthoryear{Grabisch}{Grabisch}{2016}]%
        {Grabisch2016}
\bibfield{author}{\bibinfo{person}{Michel Grabisch}.}
  \bibinfo{year}{2016}\natexlab{}.
\newblock \bibinfo{booktitle}{\emph{Set Functions, Games and Capacities in
  Decision Making}}.
\newblock Number 978-3-319-30690-2 in \bibinfo{series}{Theory and Decision
  Library C}. \bibinfo{publisher}{Springer}.
\newblock


\bibitem[\protect\citeauthoryear{Ieong and Shoham}{Ieong and Shoham}{2005}]%
        {ieong2005marginal}
\bibfield{author}{\bibinfo{person}{Samuel Ieong} {and} \bibinfo{person}{Yoav
  Shoham}.} \bibinfo{year}{2005}\natexlab{}.
\newblock \bibinfo{title}{Marginal Contribution Nets: A Compact Representation
  Scheme for Coalitional Games}.
\newblock
\newblock


\bibitem[\protect\citeauthoryear{Irwin}{Irwin}{1953}]%
        {irwin1953stated}
\bibfield{author}{\bibinfo{person}{Francis~W. Irwin}.}
  \bibinfo{year}{1953}\natexlab{}.
\newblock \showarticletitle{Stated expectations as functions of probability and
  desirability of outcomes.}
\newblock \bibinfo{journal}{\emph{Journal of Personality}}
  (\bibinfo{year}{1953}).
\newblock


\bibitem[\protect\citeauthoryear{Lozano, Moreno, Adenso-D{\'\i}az, and
  Algaba}{Lozano et~al\mbox{.}}{2013}]%
        {lozano2013cooperative}
\bibfield{author}{\bibinfo{person}{Sebasti{\'a}n Lozano},
  \bibinfo{person}{Placido Moreno}, \bibinfo{person}{Belarmino
  Adenso-D{\'\i}az}, {and} \bibinfo{person}{Encarnacion Algaba}.}
  \bibinfo{year}{2013}\natexlab{}.
\newblock \showarticletitle{Cooperative game theory approach to allocating
  benefits of horizontal cooperation}.
\newblock \bibinfo{journal}{\emph{European Journal of Operational Research}}
  \bibinfo{volume}{229}, \bibinfo{number}{2} (\bibinfo{year}{2013}),
  \bibinfo{pages}{444--452}.
\newblock


\bibitem[\protect\citeauthoryear{Lundberg and Lee}{Lundberg and Lee}{2017}]%
        {NIPS2017_7062}
\bibfield{author}{\bibinfo{person}{Scott~M. Lundberg} {and}
  \bibinfo{person}{Su-In Lee}.} \bibinfo{year}{2017}\natexlab{}.
\newblock \showarticletitle{A Unified Approach to Interpreting Model
  Predictions}.
\newblock In \bibinfo{booktitle}{\emph{Advances in Neural Information
  Processing Systems 30}}, \bibfield{editor}{\bibinfo{person}{I.~Guyon},
  \bibinfo{person}{U.~V. Luxburg}, \bibinfo{person}{S.~Bengio},
  \bibinfo{person}{H.~Wallach}, \bibinfo{person}{R.~Fergus},
  \bibinfo{person}{S.~Vishwanathan}, {and} \bibinfo{person}{R.~Garnett}}
  (Eds.). \bibinfo{publisher}{Curran Associates, Inc.},
  \bibinfo{pages}{4765--4774}.
\newblock


\bibitem[\protect\citeauthoryear{Masuya and Inuiguchi}{Masuya and
  Inuiguchi}{2016}]%
        {Masuya2016}
\bibfield{author}{\bibinfo{person}{Satoshi Masuya} {and}
  \bibinfo{person}{Masahiro Inuiguchi}.} \bibinfo{year}{2016}\natexlab{}.
\newblock \showarticletitle{A fundamental study for partially defined
  cooperative games}.
\newblock \bibinfo{journal}{\emph{Fuzzy Optimization Decision Making}}
  \bibinfo{volume}{15}, \bibinfo{number}{1} (\bibinfo{year}{2016}),
  \bibinfo{pages}{281--306}.
\newblock


\bibitem[\protect\citeauthoryear{Murphy and Cleveland}{Murphy and
  Cleveland}{1995}]%
        {murphy1995understanding}
\bibfield{author}{\bibinfo{person}{Kevin~R. Murphy} {and}
  \bibinfo{person}{Jeanette~N. Cleveland}.} \bibinfo{year}{1995}\natexlab{}.
\newblock \bibinfo{booktitle}{\emph{Understanding performance appraisal:
  Social, organizational, and goal-based perspectives}}.
\newblock \bibinfo{publisher}{Sage}.
\newblock


\bibitem[\protect\citeauthoryear{Nagarajan and So{\v{s}}i{\'c}}{Nagarajan and
  So{\v{s}}i{\'c}}{2008}]%
        {nagarajan2008game}
\bibfield{author}{\bibinfo{person}{Mahesh Nagarajan} {and}
  \bibinfo{person}{Greys So{\v{s}}i{\'c}}.} \bibinfo{year}{2008}\natexlab{}.
\newblock \showarticletitle{Game-theoretic analysis of cooperation among supply
  chain agents: Review and extensions}.
\newblock \bibinfo{journal}{\emph{European journal of operational research}}
  \bibinfo{volume}{187}, \bibinfo{number}{3} (\bibinfo{year}{2008}),
  \bibinfo{pages}{719--745}.
\newblock


\bibitem[\protect\citeauthoryear{Owen}{Owen}{2013}]%
        {owen2013}
\bibfield{author}{\bibinfo{person}{Guillermo Owen}.}
  \bibinfo{year}{2013}\natexlab{}.
\newblock \bibinfo{booktitle}{\emph{Game Theory} (\bibinfo{edition}{4th} ed.)}.
\newblock \bibinfo{publisher}{Emerald Group Publishing},
  \bibinfo{address}{Bingley, U.K.}
\newblock


\bibitem[\protect\citeauthoryear{Peleg and Sudhölter}{Peleg and
  Sudhölter}{2007}]%
        {Peleg2007}
\bibfield{author}{\bibinfo{person}{Bezalel Peleg} {and} \bibinfo{person}{Peter
  Sudhölter}.} \bibinfo{year}{2007}\natexlab{}.
\newblock \bibinfo{booktitle}{\emph{Introduction to the Theory of Cooperative
  Games}}.
\newblock \bibinfo{publisher}{Springer}.
\newblock


\bibitem[\protect\citeauthoryear{Puri and Robinson}{Puri and Robinson}{2007}]%
        {puri2007optimism}
\bibfield{author}{\bibinfo{person}{Manju Puri} {and} \bibinfo{person}{David~T
  Robinson}.} \bibinfo{year}{2007}\natexlab{}.
\newblock \showarticletitle{Optimism and economic choice}.
\newblock \bibinfo{journal}{\emph{Journal of financial economics}}
  \bibinfo{volume}{86}, \bibinfo{number}{1} (\bibinfo{year}{2007}),
  \bibinfo{pages}{71--99}.
\newblock


\bibitem[\protect\citeauthoryear{Saad, Han, Debbah, Hjorungnes, and Basar}{Saad
  et~al\mbox{.}}{2009}]%
        {saad2009coalitional}
\bibfield{author}{\bibinfo{person}{Walid Saad}, \bibinfo{person}{Zhu Han},
  \bibinfo{person}{M{\'e}rouane Debbah}, \bibinfo{person}{Are Hjorungnes},
  {and} \bibinfo{person}{Tamer Basar}.} \bibinfo{year}{2009}\natexlab{}.
\newblock \showarticletitle{Coalitional game theory for communication
  networks}.
\newblock \bibinfo{journal}{\emph{Ieee signal processing magazine}}
  \bibinfo{volume}{26}, \bibinfo{number}{5} (\bibinfo{year}{2009}),
  \bibinfo{pages}{77--97}.
\newblock


\bibitem[\protect\citeauthoryear{Schulman, Wolski, Dhariwal, Radford, and
  Klimov}{Schulman et~al\mbox{.}}{2017}]%
        {Schulman2017}
\bibfield{author}{\bibinfo{person}{John Schulman}, \bibinfo{person}{Filip
  Wolski}, \bibinfo{person}{Prafulla Dhariwal}, \bibinfo{person}{Alec Radford},
  {and} \bibinfo{person}{Oleg Klimov}.} \bibinfo{year}{2017}\natexlab{}.
\newblock \showarticletitle{Proximal Policy Optimization Algorithms}.
\newblock \bibinfo{journal}{\emph{CoRR}}  \bibinfo{volume}{abs/1707.06347}
  (\bibinfo{year}{2017}).
\newblock


\bibitem[\protect\citeauthoryear{Seshadhri and Vondrák}{Seshadhri and
  Vondrák}{2014}]%
        {Vondrak2014}
\bibfield{author}{\bibinfo{person}{Comandur Seshadhri} {and}
  \bibinfo{person}{Jan Vondrák}.} \bibinfo{year}{2014}\natexlab{}.
\newblock \showarticletitle{Is Submodularity Testable?}
\newblock \bibinfo{journal}{\emph{Algorithmica}}  \bibinfo{volume}{69}
  (\bibinfo{year}{2014}), \bibinfo{pages}{1--25}.
\newblock


\bibitem[\protect\citeauthoryear{Settles}{Settles}{2009}]%
        {settlet2009active}
\bibfield{author}{\bibinfo{person}{Burr Settles}.}
  \bibinfo{year}{2009}\natexlab{}.
\newblock \showarticletitle{Active Learning Literature Survey}.
\newblock \bibinfo{journal}{\emph{University of Wisconsin-Madison Department of
  Computer Sciences}} (\bibinfo{year}{2009}).
\newblock


\bibitem[\protect\citeauthoryear{Shafer}{Shafer}{1976}]%
        {shafer1976}
\bibfield{author}{\bibinfo{person}{Glenn Shafer}.}
  \bibinfo{year}{1976}\natexlab{}.
\newblock \bibinfo{booktitle}{\emph{A Mathematical Theory of Evidence}}.
\newblock \bibinfo{publisher}{Princeton University Press}.
\newblock


\bibitem[\protect\citeauthoryear{Shapley}{Shapley}{1953}]%
        {Shapley1953}
\bibfield{author}{\bibinfo{person}{Lloyd Shapley}.}
  \bibinfo{year}{1953}\natexlab{}.
\newblock \showarticletitle{A value for n-person game}.
\newblock \bibinfo{journal}{\emph{Annals of Mathematical Studies}}
  \bibinfo{volume}{28} (\bibinfo{year}{1953}), \bibinfo{pages}{307--317}.
\newblock


\bibitem[\protect\citeauthoryear{Slovic}{Slovic}{1987}]%
        {slovic87}
\bibfield{author}{\bibinfo{person}{Paul Slovic}.}
  \bibinfo{year}{1987}\natexlab{}.
\newblock \showarticletitle{Perception of Risk}.
\newblock \bibinfo{journal}{\emph{Science}} \bibinfo{volume}{236},
  \bibinfo{number}{4799} (\bibinfo{year}{1987}), \bibinfo{pages}{280--285}.
\newblock


\bibitem[\protect\citeauthoryear{Slovic, Fischhoff, and Lichtenstein}{Slovic
  et~al\mbox{.}}{2016}]%
        {slovic2016facts}
\bibfield{author}{\bibinfo{person}{Paul Slovic}, \bibinfo{person}{Baruch
  Fischhoff}, {and} \bibinfo{person}{Sarah Lichtenstein}.}
  \bibinfo{year}{2016}\natexlab{}.
\newblock \showarticletitle{Facts and fears: Understanding perceived risk}.
\newblock In \bibinfo{booktitle}{\emph{The perception of risk}}.
  \bibinfo{publisher}{Routledge}, \bibinfo{pages}{137--153}.
\newblock


\bibitem[\protect\citeauthoryear{Sutton and Barto}{Sutton and Barto}{2014}]%
        {sutton2014reinforcement}
\bibfield{author}{\bibinfo{person}{Richard~S. Sutton} {and}
  \bibinfo{person}{Andrew~G. Barto}.} \bibinfo{year}{2014}\natexlab{}.
\newblock \bibinfo{booktitle}{\emph{Reinforcement Learning: An Introduction}}.
\newblock


\bibitem[\protect\citeauthoryear{Weinstein}{Weinstein}{1980}]%
        {weinstein1980unrealistic}
\bibfield{author}{\bibinfo{person}{Neil~D. Weinstein}.}
  \bibinfo{year}{1980}\natexlab{}.
\newblock \showarticletitle{Unrealistic optimism about future life events.}
\newblock \bibinfo{journal}{\emph{Journal of personality and social
  psychology}} \bibinfo{volume}{39}, \bibinfo{number}{5}
  (\bibinfo{year}{1980}), \bibinfo{pages}{806}.
\newblock


\bibitem[\protect\citeauthoryear{Weinstein}{Weinstein}{1989}]%
        {weinstein1989optimistic}
\bibfield{author}{\bibinfo{person}{Neil~D. Weinstein}.}
  \bibinfo{year}{1989}\natexlab{}.
\newblock \showarticletitle{Optimistic biases about personal risks}.
\newblock \bibinfo{journal}{\emph{Science}} \bibinfo{volume}{246},
  \bibinfo{number}{4935} (\bibinfo{year}{1989}), \bibinfo{pages}{1232--1233}.
\newblock


\bibitem[\protect\citeauthoryear{Úradník}{Úradník}{2024}]%
        {gitrepo}
\bibfield{author}{\bibinfo{person}{Filip Úradník}.}
  \bibinfo{year}{2024}\natexlab{}.
\newblock \bibinfo{title}{Incomplete Cooperative}.
\newblock
  \bibinfo{howpublished}{\url{https://github.com/furadnik/IncompleteCooperative/releases/tag/AAMAS24}}.
\newblock
\newblock
\shownote{Accessed: February 10, 2024.}


\end{thebibliography}

\clearpage
\appendix
\section{All values are needed for zero utopian gap in general}\label{app:zero-gap-not-attainable}
We remark that throughout the paper, we do not aim to achieve a zero gap, as this is impossible for some games until all of the coalition values are known. To see this, consider strictly superadditve incomplete game $(N,\hat{\K},v)$, for which a single coalition value is unknown. Formally, let $(N,\hat{\K},v)$ with $\hat{\K} = 2^N \setminus \{S^*\}$ where $S^* \in 2^N \setminus\K_0$, be such that
\[
	v(S) + v(T) < v(S \cup T)
\]
for all $S,T \subseteq N, S \cap T = \emptyset$. Then as $v(S^*) < v(S^* \cup T) - v(T)$ for every $T \subseteq N \setminus S^*$ and $v(X) + v(S^* \setminus X) < v(S^*)$ for every $X \subseteq S^*$, it follows from the definition of the lower/upper games that
\[
	\underline{v}_{\hat{\K}}(S^*) < v(S^*) < \overline{v}_{\hat{\K}}(S^*).
\]
This means there is more than one $\S$-extension, which by Proposition~\ref{prop: gap non-negative} yields $\utopianGap_{(N,v)}(\K) > 0$.

\section{Algorithm Specifications}
\label{app: alg specs}

\paragraph{Oracle and Offline Algorithms}
Greedy strategies are computationally straightforward compared to their optimal counterparts.
A single step using a greedy approach demands $ \mathcal{O}(g \cdot 2^n) $ time for a single sample.
Here, $ g $ represents the time required to compute the utopian gap.

On the other hand, optimal algorithms prove to be computationally intensive due to the necessity of examining every sequence of actions, that is, every subset of $ 2^N $.
For each sample, the time complexity amounts to $ \mathcal{O}(g \cdot 2^{2^n}) $.
Despite our efforts to parallelize the computation where possible, given our available resources, we were only able to compute the complete optimal strategies for up to 5 players.
The computation for 5 players took roughly \hoursFiveOptimal hours.
We estimate, that to compute the optimal strategies for 6 players, for all steps, would take\timeSixOptimal.

The online variants are easier to parallelize, since each sample can be computed and evaluated independently, in contrast to the offline variants, where all the samples need to be computed, put together, averaged, and then finally evaluated.

When estimating the expectation with respect to $\valueDistribution$, we used \numSamplesStd samples for 4 players. For 5 players, we ended up using only \numSamplesFive samples.

\paragraph{Reinforcement Leaning}
We apply reinforcement leaning~\cite{sutton2014reinforcement} to approximate the optimal strategy of the online principal's problem.
Namely, we use the Proximal policy optimization~({\sc PPO})~\cite{Schulman2017}.
We want to find a strategy of the principal which efficiently minimizes the average cumulative utopian gap.
As such, we train {\sc PPO} to minimize the utopian gap
\begin{equation*}
	r(\K_{\tau-1}, S_\tau) = -\utopianGap_{(N,v)}(\K_{\tau-1}\cup \{S_\tau\}).
\end{equation*}
at every step, which provides a stronger learning signal compared to the final reward.
This is equivalent to training over a distribution of the online principal's problems with uniformly distributed size $t$.

In our implementation, we parametrize both actor and critic of the {\sc PPO} algorithm with a two-layer fully-connected neural network with 64 hidden units and ReLU activation each.
To optimize the surrogate {\sc PPO} objective, we used the Adam optimizer.
The rest of the hyperparameteres can be found in Table~\ref{tab: hyperparameters}.

\begin{table}[t]
	\begin{tabular}{c|c|c}
		\hline
		Parameter  & Value             & Description                    \\
		\hline
		\hline
		$\alpha_a$ & $3\cdot10^{-4}$   & Actor learning rate            \\
		$\alpha_c$ & $1.5\cdot10^{-4}$ & Critic learning rate           \\
		$\beta$    & $0.1$             & Entropy regularization         \\
		$\gamma$   & 1                 & Reward discounting rate        \\
		$\lambda$  & 0.95              & Generalized advantage estimate \\
		$\epsilon$ & 0.2               & Surrogate clip range           \\
		$B$        & $5\cdot 10^{4}$   & Rollout buffer size            \\
		$M$        & 0.5               & Max gradient norm              \\
		$n_e$      & 10                & Number of training epochs      \\
		\hline\end{tabular}
	\caption{Hyperparameters used during training.}
	\label{tab: hyperparameters}
\end{table}

\paragraph{Random Algorithm}
The {\sc Random} algorithm is computationally simple, requiring only $\mathcal{O}(g)$ time to compute a single sample.
As such, it took only a few minutes to compute for 5 players.
We again used \numSamplesStd samples to approximate the expectation and the standard deviation, for both 4 and 5 players.

\section{Example: Local Optimum is not Global Optimum}
\label{app: local is not global example}

\begin{figure}
	\centering
	\includegraphics[width=\columnwidth]{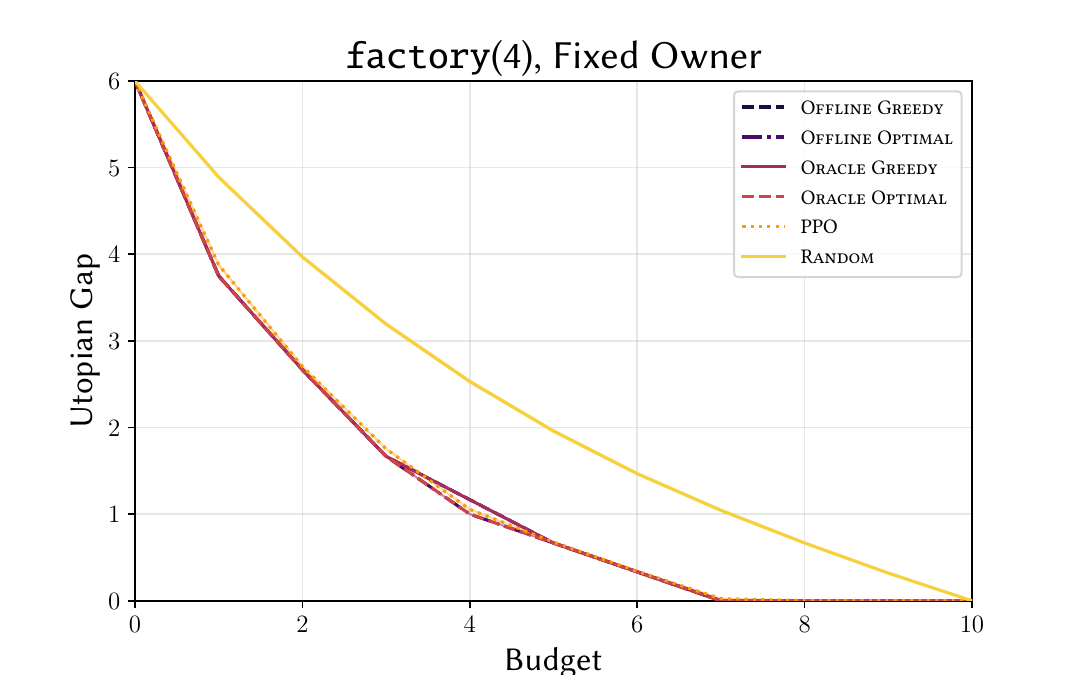}
	\caption{The cumulative utopian gap as a function of the number of revealed coalitions on a {\tt factory}(4) where the owner is fixed. The greedy algorithms fail to find the global optimum at step four, see Appendix~\ref{app: local is not global example}. The performance of oracle and offline algorithms is the same in this case, because $\valueDistribution$ includes just a single game. Finally, {\sc PPO} finds a strategy which is close to optimal, not greedy.}
	\label{fig: app fixed factory}
\end{figure}

In this section, we present an example of a sequence of locally optimal steps on the principal's problem which does {\it not} leading to a global optimum.
We will use a single {\tt factory}(4) game, i.e. the position of the owner is not randomized over.
We will refer to the owner as $o$, and the workers as $w_1, w_2, w_3$.
Note that since the game is fixed, {\sc Offline Optimal} has the same information as {\sc Oracle Optimal} and thus behaves the same.
Similar argument can be used to equate {\sc Offline Greedy} and {\sc Oracle Greedy}.
Thus, we only analyze the {\sc Oracle Optimal} and {\sc Oracle Greedy} algorithms.
The cumulative utopian gap as a function of the number of revealed coalitions is given in Figure~\ref{fig: app fixed factory}.

The {\sc Oracle Greedy} algorithm selects the coalitions $ \left\{ w_1, w_2, w_3 \right\} $, $ \left\{ o, w_1, w_2 \right\} $, and $ \left\{ o, w_3 \right\} $ at the first three steps, which matches the {\sc Oracle Optimal}.
The resulting final utopian gap is $\frac 59$ for both algorithms.
At step four, the {\sc Offline Greedy} chooses $\left\{ o, w_1, w_2 \right\}$.
However, because it chose $\left\{ o, w_3 \right\}$ as step three, it now doesn't match the trajectory of the {\sc Oracle Optimal}, which is to choose all the coalitions of size three.
As a result, the {\sc Oracle Greedy} strategy achieves utopian gap $\frac 7{18}$, while the {\sc Oracle Optimal} strategy achieves $\frac 13$.

Importantly, the {\sc Oracle Greedy} strategy strictly prefers its choice at $t=3$, resulting in $\frac{5}{9}$.
Choosing another one of the bigger coalitions, would result in utopian gap $\frac{7}{12}$.

Finally, the {\sc PPO} algorithm outperforms {\sc Oracle Greedy}.
As the latter bounds performance of all greedy online algorithms, it shows {\sc PPO} is not a greedy algorithm\footnote{This is because {\sc Oracle Greedy} outperforms all greedy algorithms.}.
This is consistent with it optimizing for the expected average utopian gap, see Appendix~\ref{app: alg specs}.

\subsection{Non-Supermodularity for $|N|>4$}
In order for the utopian gap to be supermodular, it needs to satisfy $\forall S,Z \in 2^N$, and $\forall \K \subseteq 2^N, \K_0 \subseteq \K$
\[
	\utopianGap_\Delta = \utopianGap_{(N,v)}(\K \cup S \cup Z) - \utopianGap_{(N,v)}(\K \cup S) - \utopianGap_{(N,v)}(\K \cup Z) + \utopianGap_{(N,v)}(\K),
\]
However, when the owner is player 1 and one selects
$\K = \K_0 \cup \{\{1,2,3\},\{1,4\} \}$, $S = \{1,2\}$, and $Z = \{1,2,3,5\}$,
the condition reads $ 7.3 - 7.9 - 8.1 + 8.6 = -0.1 < 0 $, which shows {\tt factory} ($5$) is not supermodular.
Similar argument can be made for $n>5$.

\section{Additional Experimental Results}
\label{app: additional experiments}

\subsection{Factory and Supermodular}
\label{app: factory additional results}
\begin{figure*}[t]
	\includegraphics[width=0.48\textwidth]{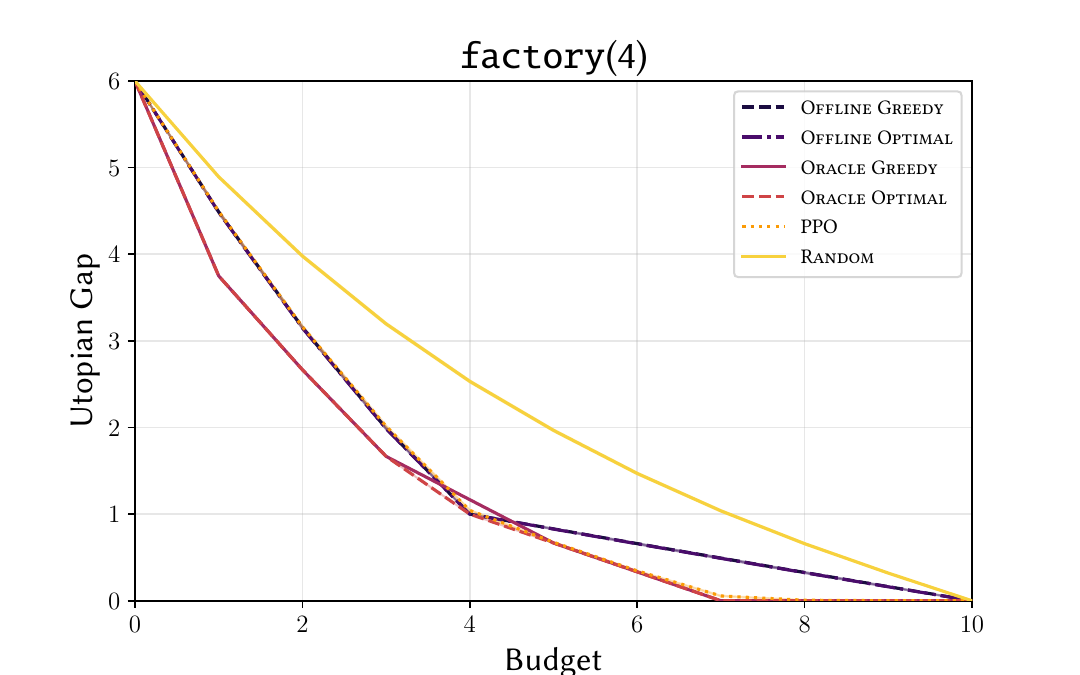}
	\includegraphics[width=0.48\textwidth]{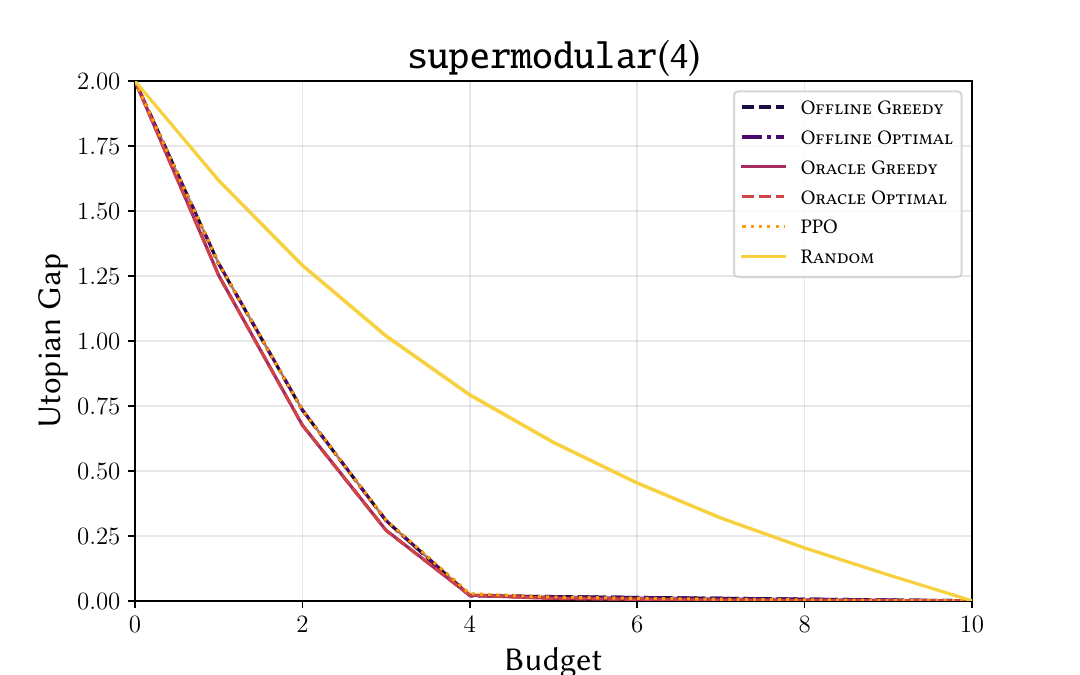}
	\caption{
		The utopian gap as a function of number of revealed coalitions (i.e. steps of the Principals problem) for different algorithms.
		We show {\tt factory}(4) (left), and {\tt supermodular}(4) (right) games.
		All algorithms outperform the {\sc Random} benchmark considerably.
		The greedy versions of each algorithm exhibit similar performance to the optimal variants.
		The {\sc PPO} algorithm is initially close to the offline algorithms, and uses the online information to approach the oracle algorithms.
	}
	\label{fig: app exploitability over time}
\end{figure*}
\begin{figure*}[t]
	\includegraphics[width=0.48\textwidth]{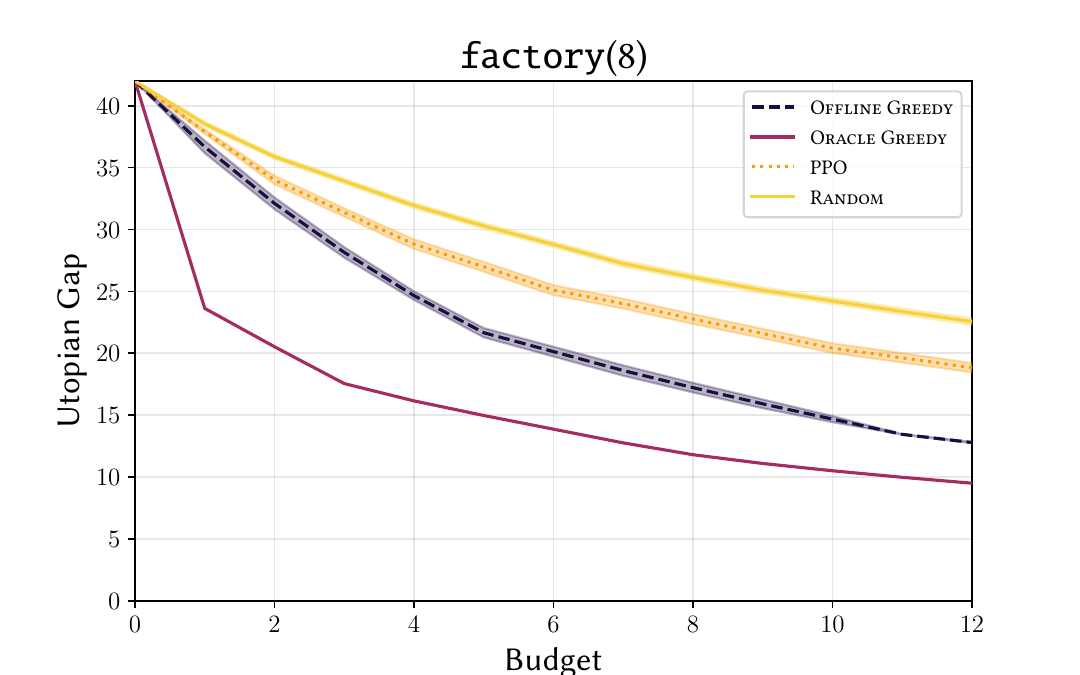}
	\includegraphics[width=0.48\textwidth]{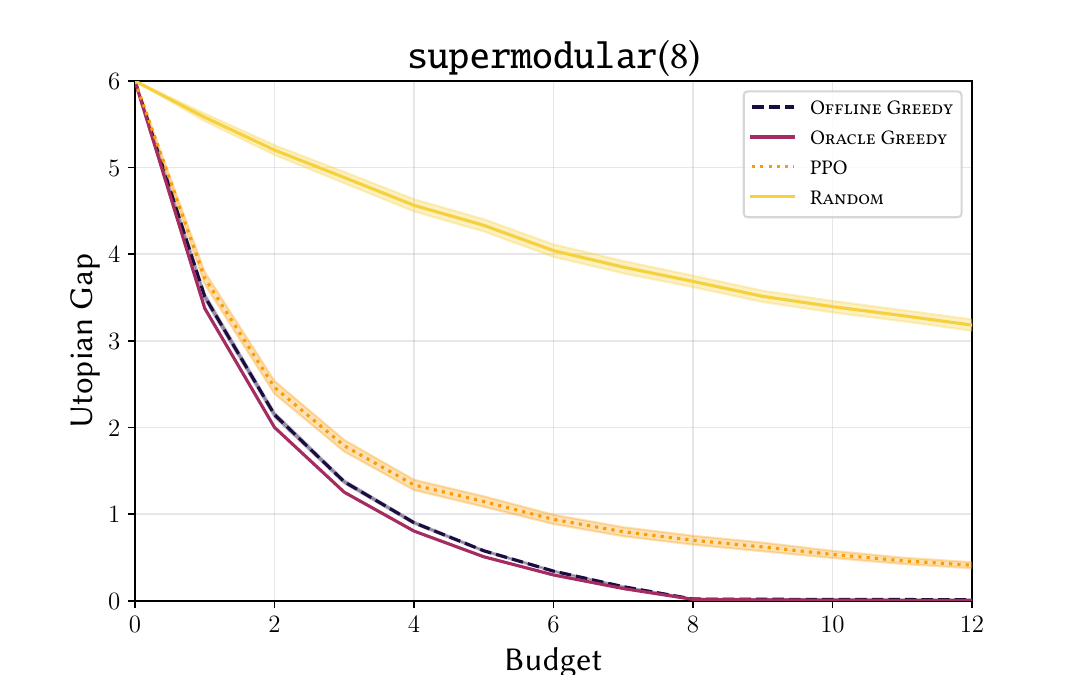}
	\includegraphics[width=0.48\textwidth]{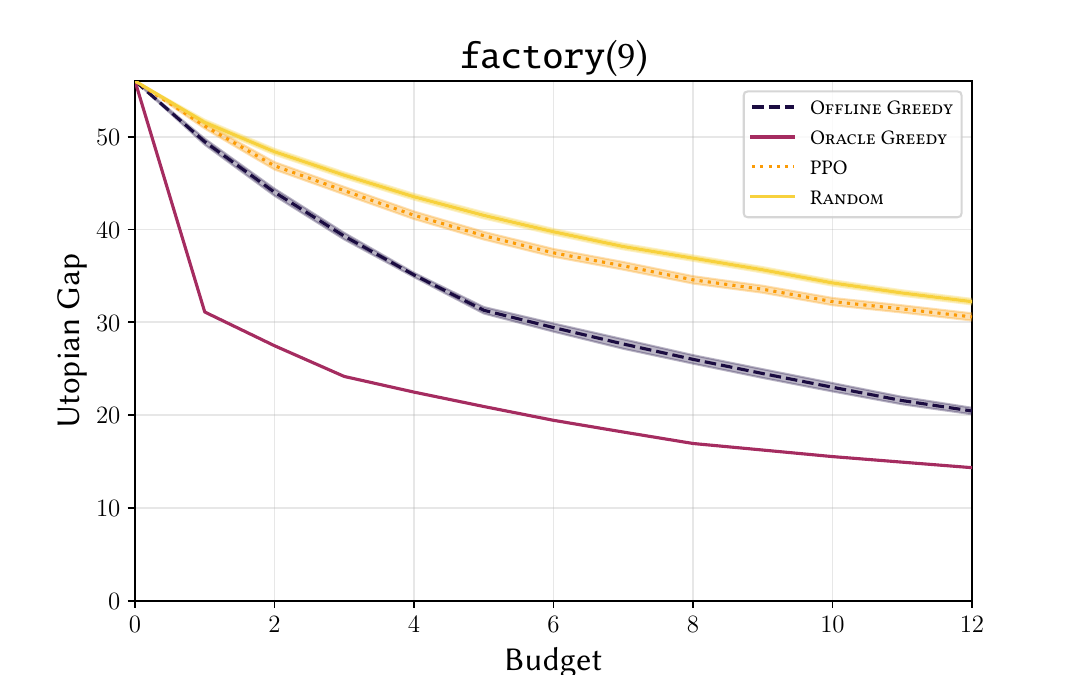}
	\includegraphics[width=0.48\textwidth]{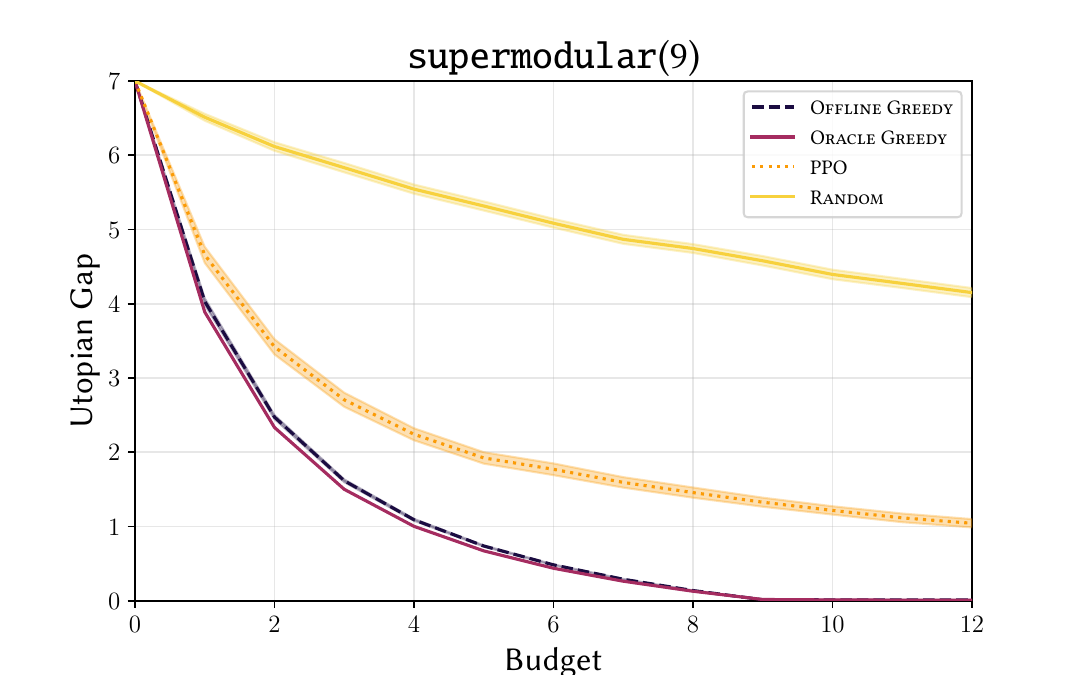}
	\includegraphics[width=0.48\textwidth]{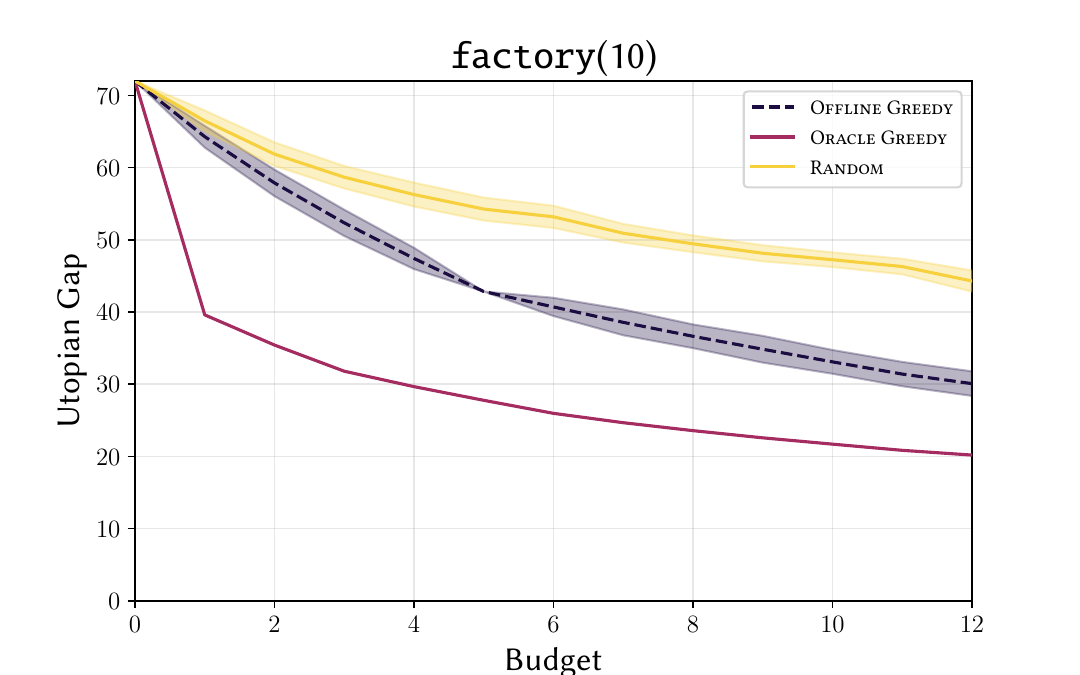}
	\includegraphics[width=0.48\textwidth]{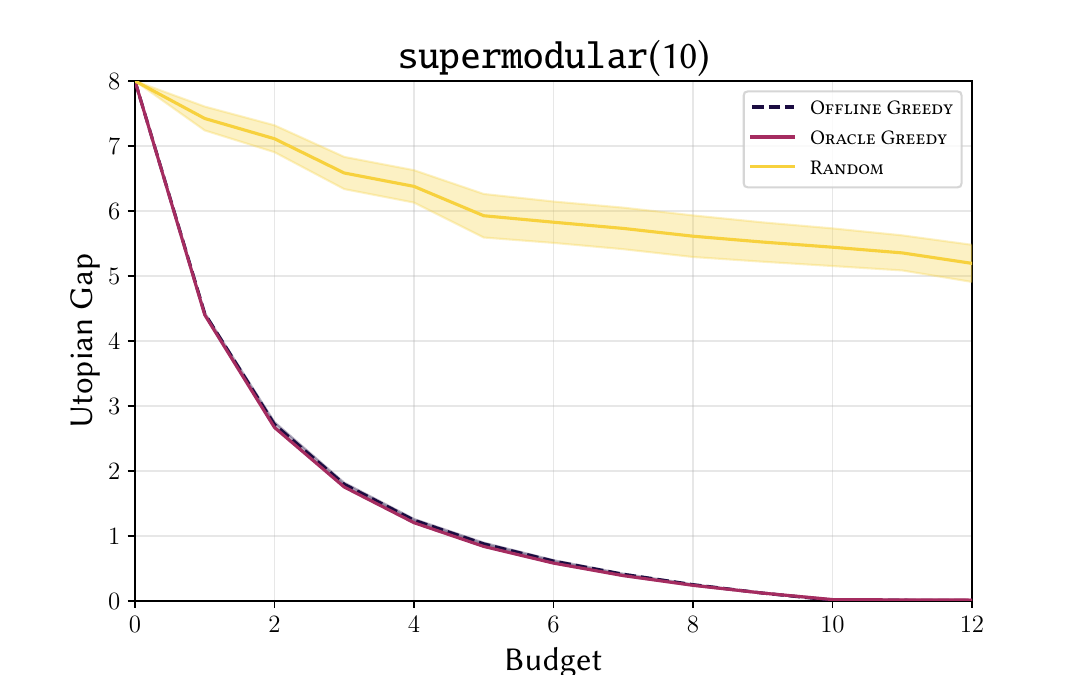}
	\caption{
		The utopian gap as a function of number of revealed coalitions (i.e. steps of the Principals problem) for different sizes.
		Note that the {\sc PPO} algorithm was trained on the same number of time steps as for the 4 player problem.
		Even though the action space size has increased dramatically (it grows exponentially with $ n $), it still beats the trivial random approach.
		Better performance of the {\sc PPO} approach can still be reached by increasing the learning period.
		Note also, that due to time constraints, when gathering the data for the plots, we used 100 samples for sizes 8 and 9, and 10 samples for size 10, while also dropping the {\sc PPO}, as the training process was too time demanding.
	}
	\label{fig: exploitability extensive}
\end{figure*}
\begin{figure*}[t]
	\vspace{5ex}
	\centering
	\includegraphics[width=\textwidth]{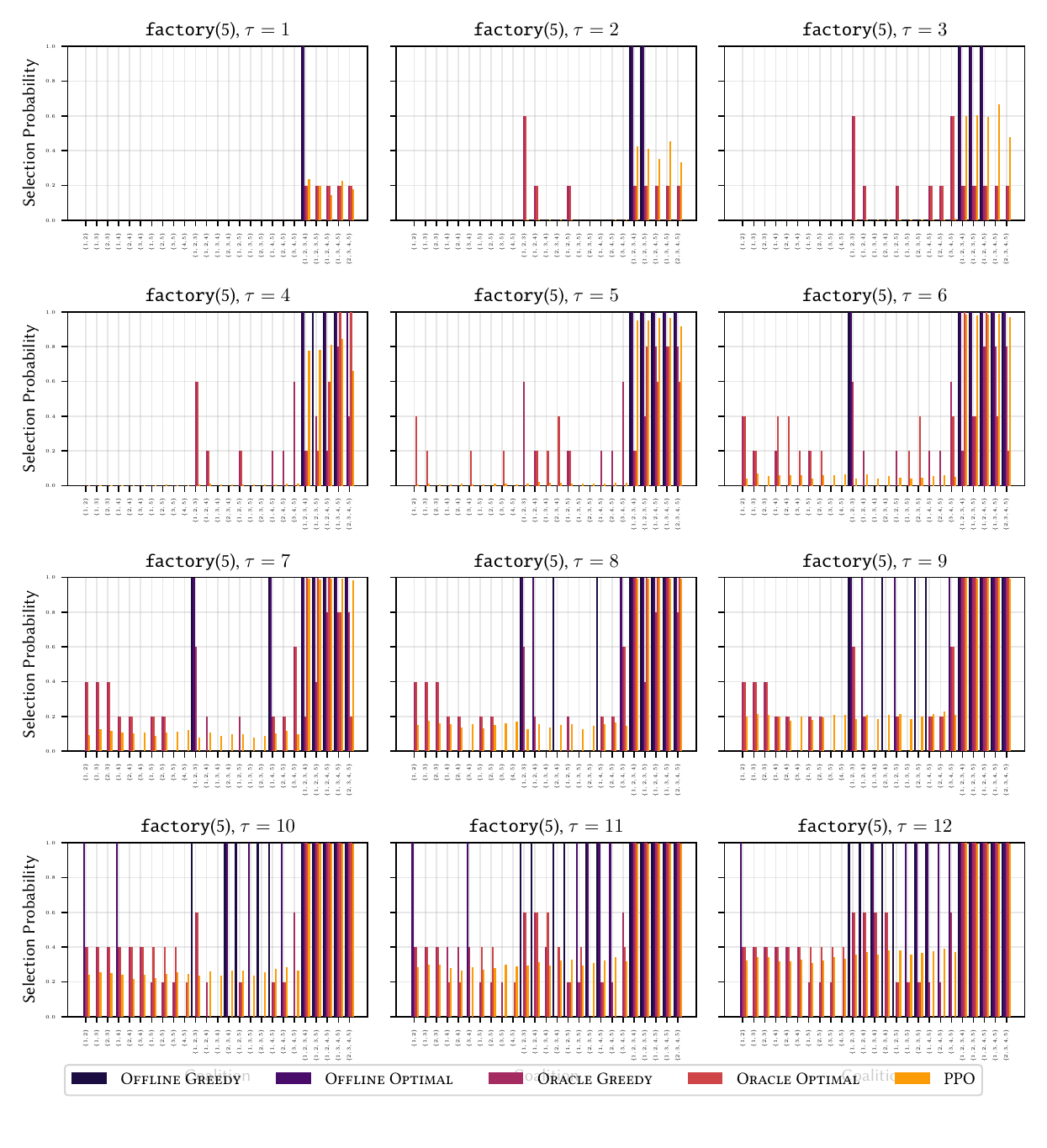}
	\caption{
		Percentage of coalitions selected up to step twelve for {\tt factory}(5) and each algorithm.
		Results show clear preference for larger coalitions, i.e. they contribute more information about the cooperative game on average.
		The oracle algorithms favor smaller coalitions earlier, suggesting the representation of a specific game can efficiently use even smaller coalitions.
			{\sc PPO} initially behaves similarly to the offline algorithms.
		At later steps, it uses the previously obtained values and its selections resemble the oracle methods.
		See Figure~\ref{fig: factory5 bar cum plots} for plot showing individual coalitions.
	}
	\label{fig: app factory5 bar plots}
	\vspace{5ex}
\end{figure*}

\begin{figure*}[t]
	\centering
	\includegraphics[width=\textwidth]{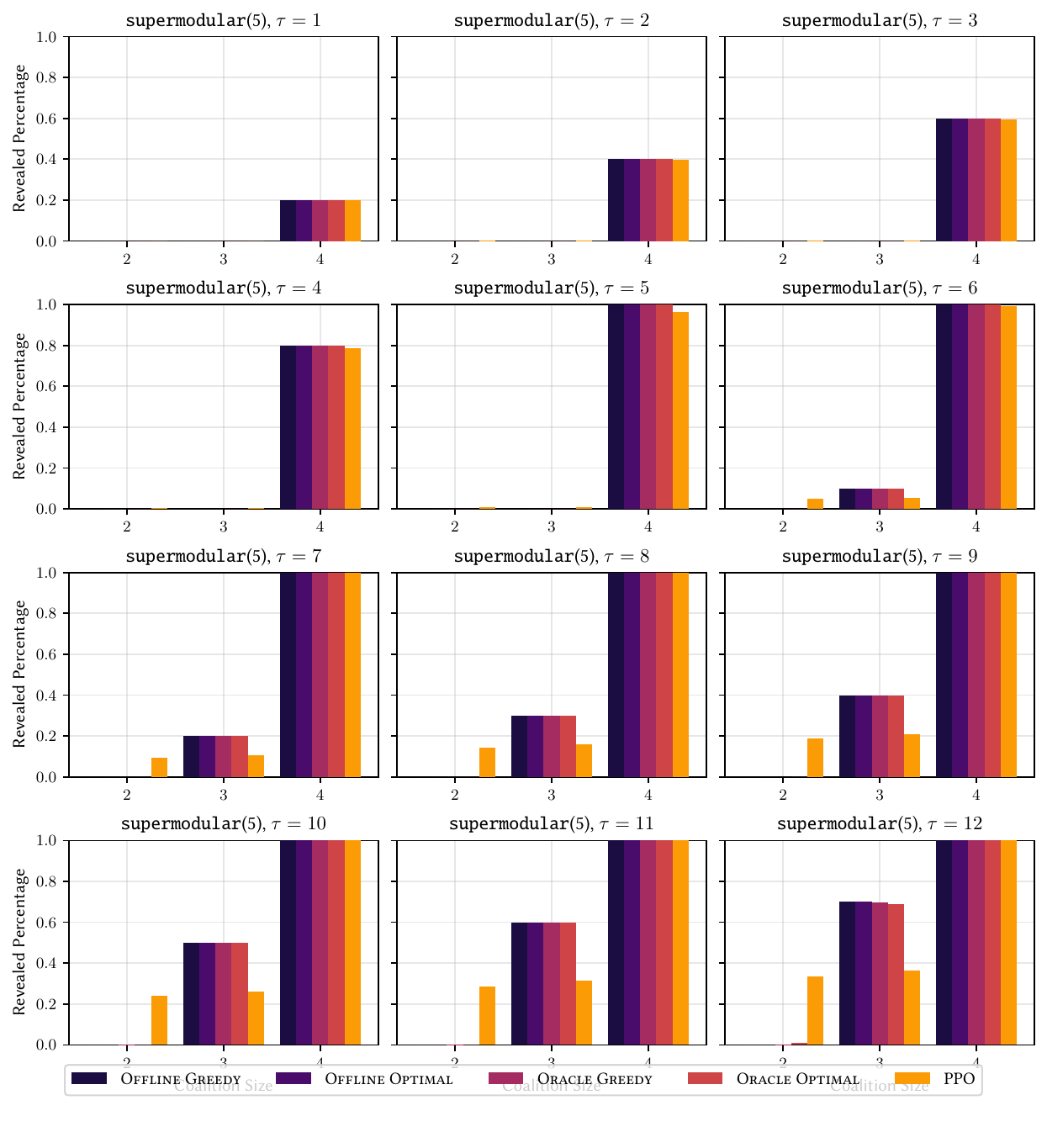}
	\caption{
		Percentage of coalitions selected up to step twelve for {\tt supermodular}(5) and each algorithm.
		Results show clear preference for larger coalitions, i.e. they contribute more information about the cooperative game on average.
		The oracle algorithms favor smaller coalitions earlier, suggesting the representation of a specific game can efficiently use even smaller coalitions.
			{\sc PPO} initially behaves similarly to the offline algorithms.
		At later steps, it uses the previously obtained values and its selections resemble the oracle methods.
		See Figure~\ref{fig: app convex5 bar coalition} for plot showing individual coalitions.
	}
	\label{fig: app convex5 bar coalition sum}
\end{figure*}

\begin{figure*}[t]
	\centering
	\includegraphics[width=\textwidth]{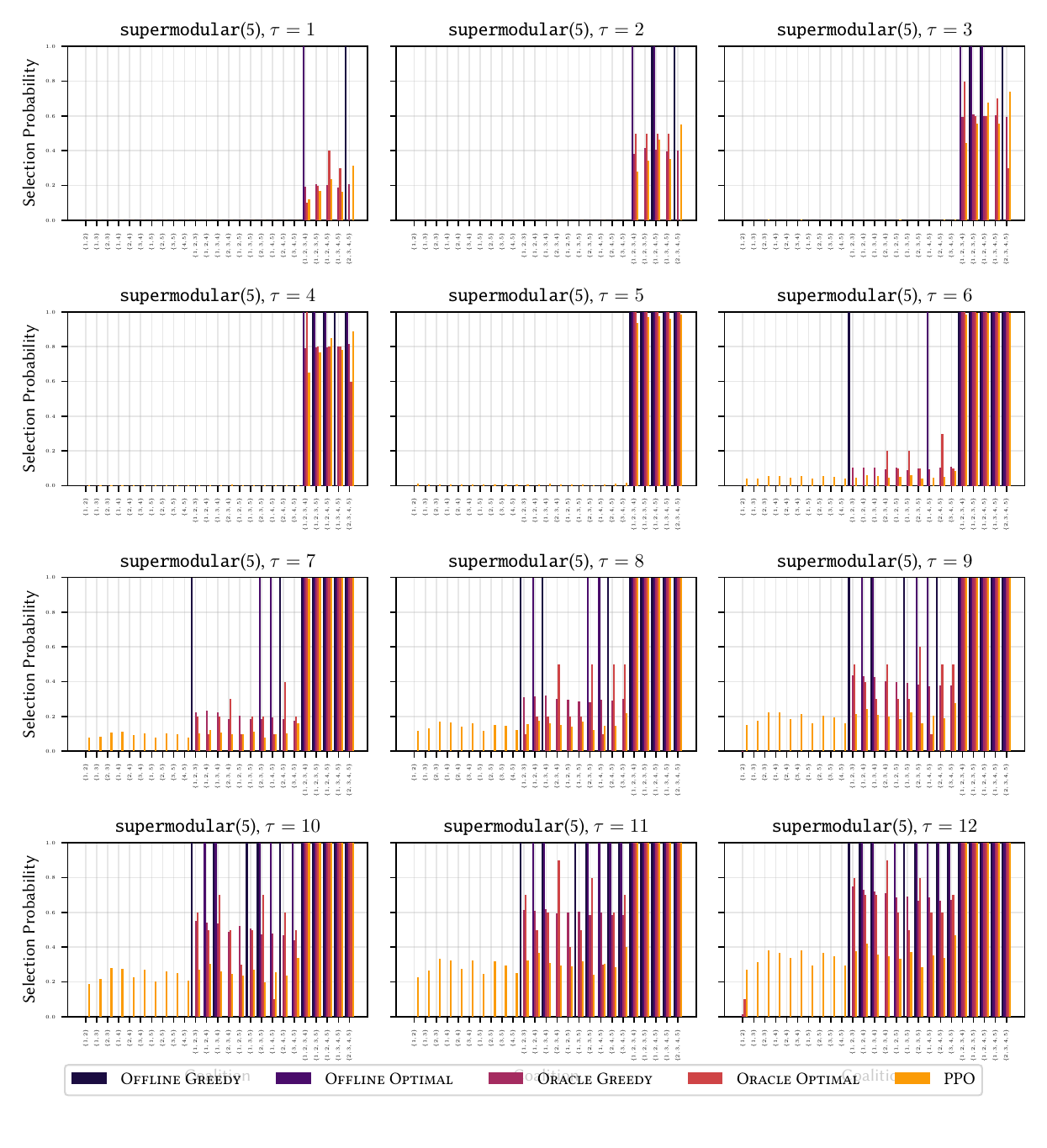}
	\caption{
		Percentage of coalitions selected up to step twelve for {\tt supermodular}(5) and each algorithm.
		Results show clear preference for larger coalitions, i.e. they contribute more information about the cooperative game on average.
		The oracle algorithms favor smaller coalitions earlier, suggesting the representation of a specific game can efficiently use even smaller coalitions.
			{\sc PPO} initially behaves similarly to the offline algorithms.
		At later steps, it uses the previously obtained values and its selections resemble the oracle methods.
		See Figure~\ref{fig: app convex5 bar coalition sum} for plot showing individual coalitions.
	}
	\label{fig: app convex5 bar coalition}
\end{figure*}

In this section, we present studies of classes of superadditive cooperative games which were omitted from the main text due to space constraints.
In Section~\ref{app: factory additional results}, we present results extending Section~\ref{sec: experiments}.
In the remainder of this section, we present focus on other class of superadditive games.
All results were obtained in the same way as those presented in Section~\ref{sec: experiments}.

In this section, we extend the results presented in Section~\ref{sec: experiments} by showing how the utopian gap depends on the number of revealed coalitions for four players.
Our results are presented in Figure~\ref{fig: app exploitability over time}.
The same general trends seen in the main text are present here.
The oracle algorithms outperform their offline counterparts, especially at for lower values of $t$.
At $t=4$, the optimal strategy is to uncover values of all coalitions of size three, which is impossible for {\sc Oracle Greedy}, see Appendix~\ref{app: local is not global example}.
In general, {\sc PPO} initially performs similarly to the offline algorithms since it has no information about the underlying game.
At later steps, it can leverage the known values to outperform the offline algorithms, nearly matching the performance of the {\sc Online Optimal}.

For {\tt supermodular}($4$), revealing values of the largest coalition first rapidly decreases the utopian gap.
In this case, revealing all coalitions of size three decreases the utopian gap by $\approx 99\%$.

Next, we show in Figure~\ref{fig: app factory5 bar plots} a variation of Figure~\ref{fig: factory5 bar cum plots}, where we do not sum over coalitions of the same size.
The results show that all algorithms prefer to select the largest coalitions first.
In later steps, a weak preference towards coalitions of size three emerges.
Since there are more coalitions of size three than two, Figure~\ref{fig: app convex5 bar coalition sum} suggested there is a preference for those coalitions.

Finally, we present the study of percentage of selected coalitions for the {\tt supermodular}(5) class of games.
Our results are presented in Figure~\ref{fig: app convex5 bar coalition sum}~and~\ref{fig: app convex5 bar coalition}
All algorithms show preference towards larger coalitions initially.
At later steps, {\sc PPO} deviates from the other strategies and selects coalitions pseudo-uniformly, while the rest favors coalitions of size three.
We speculate this is because the utopian gap is very small at this stage, making the actions seem near equal in terms of the reward, i.e. the utopian gap.

\subsection{Graph Games}
\label{app: graph games}

\begin{figure*}[t]
	\centering
	\includegraphics[width=0.3\textwidth]{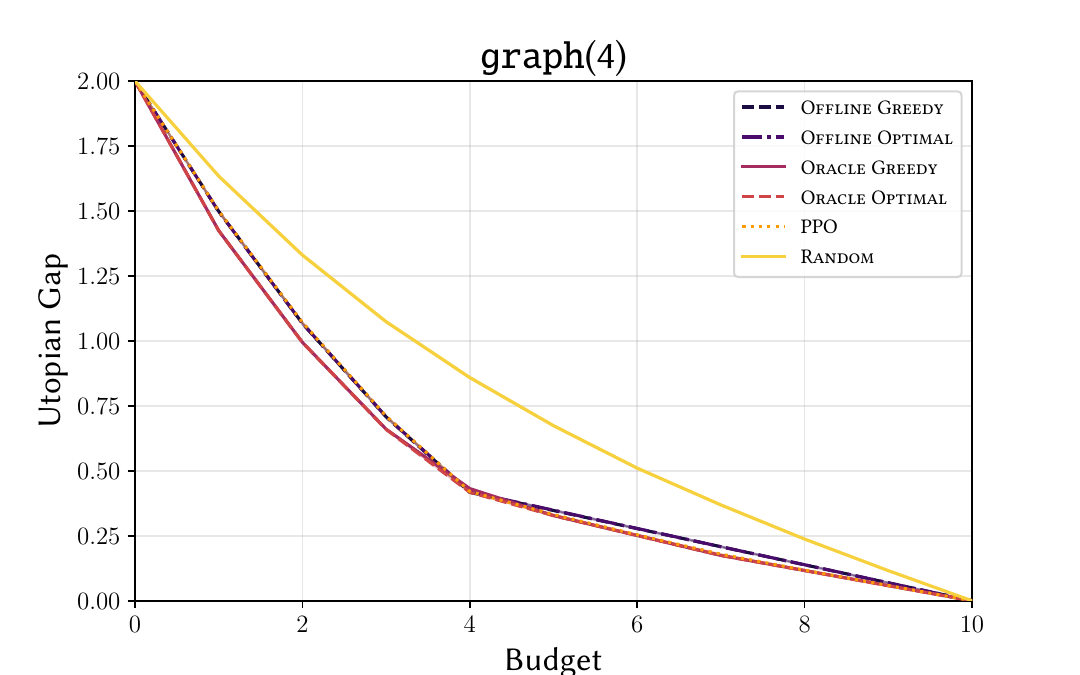}
	\includegraphics[width=0.3\textwidth]{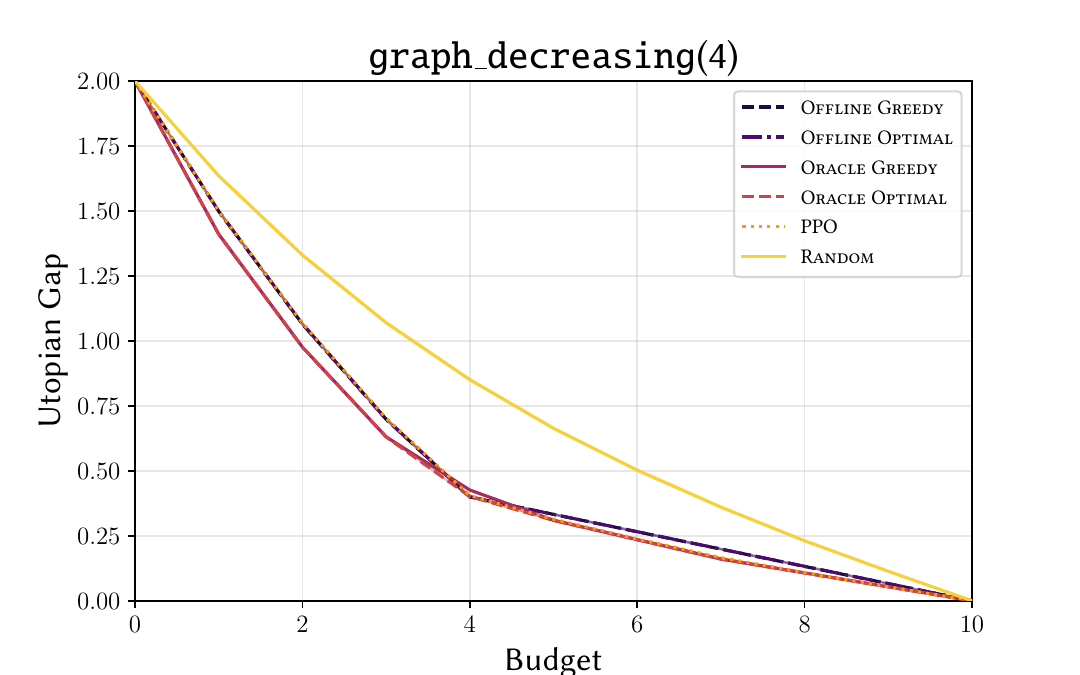}
	\includegraphics[width=0.3\textwidth]{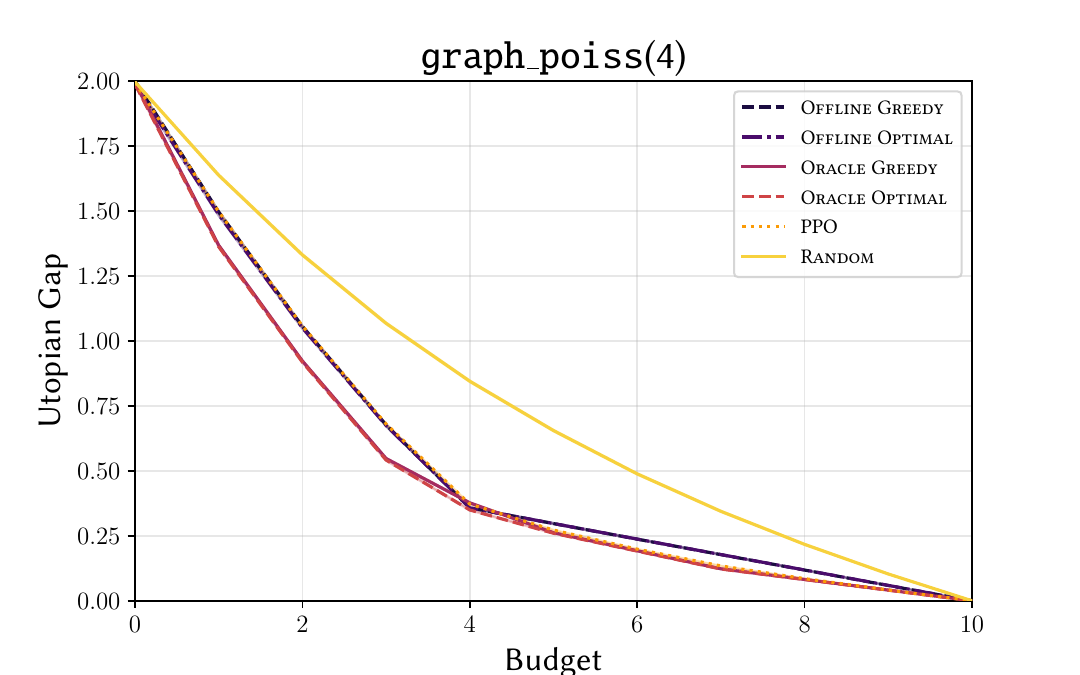}
	\caption{The cumulative utopial gap as a function of the number of revealed coalitions on graph cooperative games. We use {\tt graph}(4) (left), {\tt graph\_decreasing}(4) (middle) and {\tt graph\_poiss}(4) (right) distributions to generate the weights of the graph. See Appendix~\ref{app: graph games} for more details.}
	\label{fig: app graph games}
\end{figure*}

\begin{figure*}[t!]
	\centering
	\includegraphics[width=0.3\textwidth]{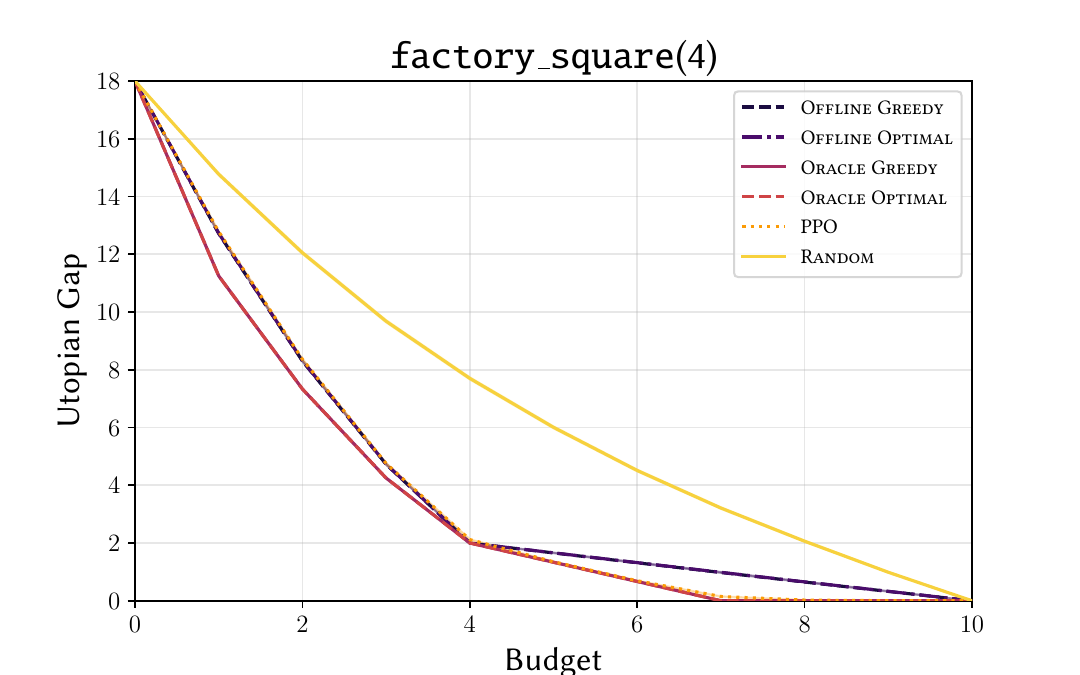}
	\includegraphics[width=0.3\textwidth]{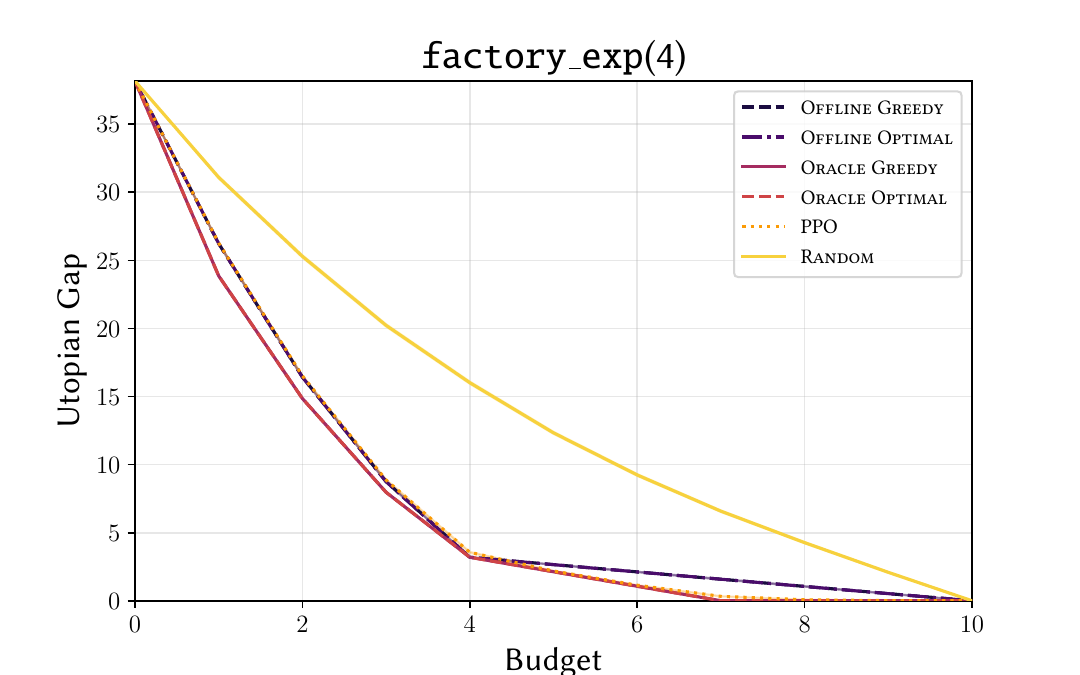}
	\includegraphics[width=0.3\textwidth]{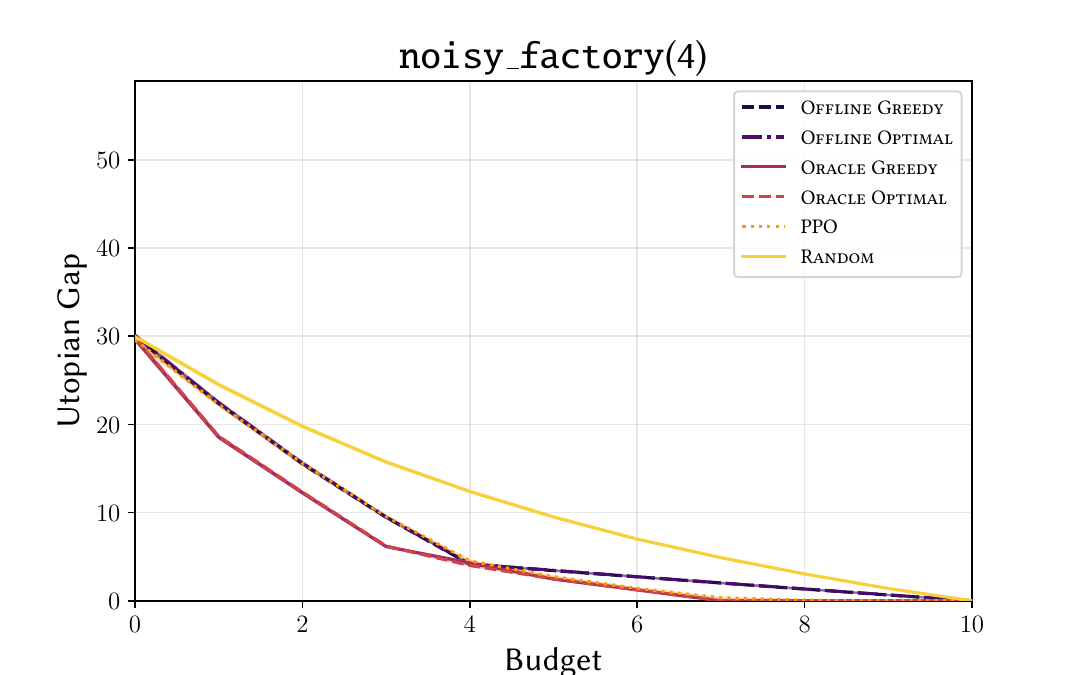}
	\caption{The cumulative utopial gap as a function of the number of revealed coalitions on graph cooperative games. We use {\tt factory\_square}(4) (left), {\tt factory\_exp}(4) (middle) and {\tt noisy\_factory}(4) (right) distributions to generate the weights of the graph. See Appendix~\ref{app: modifications of factory} for more details.}
	\label{fig: app modifications of factory}
\end{figure*}
Consider a complete weighted graph $G=(N, E)$ on the set of players $N$. Then we define the value of characteristic function for a coalition $S$ as the sum of weights of the subgraph induced by $S$

\begin{equation*}
	v(S) = \sum_{i,j \in S}w_{i,j},
\end{equation*}
where $w_{i,j}\ge 0$ is the weight of $e_{i,j}$.
We construct several distributions $\valueDistribution$ based on how $w$'s are generated; in each case, i.i.d.

We investigate three distributions in particular. The {\tt graph}($n$) generates weights from the uniform distribution $w\sim\mathcal{U}(0,1)$.
The {\tt graph\_decreasing}($n$) generates weights from a distribution, whose probability density function is $ 2-2x $ on the interval $ \left[ 0,1 \right] $.
Finally, the {\tt graph\_poiss}($n$) generates weights from the Poisson distribution with $\lambda=1$.

We focus on $n=4$ for these experiments.
The evolution of the utopian gap as a function of the number of revealed coalitions for each algorithm is presented in Figure~\ref{fig: app graph games}.
While all algorithms outperform the {\sc Random} strategy, they all give similar results.
The oracle methods are slightly more efficient at minimizing the utopian gap compared to the offline algorithms.
	{\sc PPO} is initially close to the offline methods, and later outperforms them, showing it can utilize the limited knowledge of the underlying game.

\subsection{Modifications of Factory}
\label{app: modifications of factory}
In this section, we discuss three modification of the {\tt factory} class defined in the main text.
In the {\tt factory\_square}($n$) all values of coalitions are squared
\begin{equation*}
	v(S) =
	\begin{cases}
		(|S|-1)^2 & \text{if } o \in S, \\
		0         & \text{otherwise}.
	\end{cases}
\end{equation*}
Similarly, the {\tt factory\_exp}($n$) makes all values exponentially larger
\begin{equation*}
	v(S) =
	\begin{cases}
		\exp(|S|-1) & \text{if } o \in S, \\
		1           & \text{otherwise}.
	\end{cases}
\end{equation*}
Finally, in the the {\tt noisy\_factory}($n$), each worked $w_i$ has a random productivity $p_i\sim\mathcal{U}(0,1)$.
The value of a coalition $S\in2^N$ is
\begin{equation*}
	v(S) =
	\begin{cases}
		\sum_{i\in S}p_i & \text{if } o \in S, \\
		0                & \text{otherwise},
	\end{cases}
\end{equation*}
where the owner's productivity is $p_o=0$.

We can draw similar conclusions as in the main text.

\section{Oracle algorithms}\label{app:oracle-algs}
Following are the pseudocodes of Oracle algorithms discussed in Section 4.
\begin{algorithm}[h!]
	\caption{\sc Oracle Optimal}
	\label{algo: online optimal}
	\SetAlgoLined
	\DontPrintSemicolon
	\KwIn{characteristic function $v\in \S$, number of steps $t$}
	\vspace{1ex}
	$ \overline \K \gets 2^N \setminus \K_0 $\;
	$ \left\{ S_i \right\}_{i=1}^t \gets \argmin_{\mathcal S \subseteq \overline \K: \absolute{\mathcal S} = t} \utopianGap_{(N,v)}(\mathcal{S})$\;
	\Return{$\{S_i\}_{i=1}^{t}$}\;
\end{algorithm}

\begin{algorithm}[h!]
	\caption{\sc Oracle Greedy}
	\label{algo: online greedy}
	\SetAlgoLined
	\DontPrintSemicolon
	\KwIn{characteristic function $v\in \S$, number of steps $t$}
	\vspace{1ex}
	\SetKwFunction{FNC}{Oracle Greedy}
	\SetKwProg{Fn}{Function}{:}{}
	$\{S_i\}_{i=1}^{t - 1} \gets $ {\FNC{$v, t-1$}}\;
	$ \overline \K \gets 2^N \setminus (\K_0 \cup \left\{ S_i \right\}_{i=1}^{t - 1} $)\;
	$S_{t} \gets \argmin_{S\in \overline \K} \utopianGap_{(N,v)}(\{S_i\}_{i=1}^{t - 1}\cup \left\{ S \right\} ,v)$\;
		\Return{$\{S_i\}_{i=1}^{t}$}\;
\end{algorithm}

\section{Value Function Normalization}\label{app: value normalization}

In our application, it is convenient to transform a cooperative game $(N, v)$ by an affine mapping such that after the transformation, the values of the singletons are equal to 0 and the value of the grand coalition is equal to 1.
Affine transformations correspond to \emph{strategic equivalence} in cooperative game theory. We say two games $(N,v)$ and $(N,w)$ are \emph{strategically equivalent} if there is $\alpha > 0$ and $\beta \in \mathbb{R}^n$ such that
\begin{equation}\label{eq:SE-transform}
	w(S) = \alpha \cdot v(S) + \sum_{i \in S}\beta_i.
\end{equation}
By considering
\begin{equation*}
	\beta_i = \frac{v(\{i\})}{v(N) - \sum_{i \in N}v(\{i\})}
	\hspace{3ex}\text{and}\hspace{3ex}
	\alpha = \frac{1}{v(N)},
\end{equation*}
we achieve the desired transformation. Lemma~\ref{lem:gap-is-SE} describes the effect on the utopian gap, when such a transformation is applied.

\begin{lemma}\label{lem:gap-is-SE}
	The cumulative utopian gap $\utopianGap$ is in the characteristic function \textit{invariant under the strategic equivalence}, meaning
	\begin{equation}
		\utopianGap_{(\alpha v + \beta)}(\K) = \alpha \cdot \utopianGap_{(N,v)}(\K)
	\end{equation}
	where $\alpha > 0$,  $(\alpha v + \beta)(S) \coloneqq \alpha v(S) + \sum_{i \in N}\beta_i$ and $\beta_i \in \mathbb{R}$ for $i \in N$.
\end{lemma}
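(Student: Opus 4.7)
The plan is to unfold the definition of the cumulative utopian gap for the transformed game and reduce everything to three ingredients: (i) the affine transformation $w \mapsto \alpha w + \beta$ is a bijection between the sets of $\S$-extensions of $(N,\hat{\K},v)$ and of $(N,\hat{\K},\alpha v + \beta)$; (ii) the Shapley value satisfies $\phi_i(\alpha w + \beta) = \alpha \phi_i(w) + \beta_i$; and (iii) efficiency of the Shapley value exactly cancels the additive shift at the level of the grand coalition.

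First I would establish the bijection. The key observation is that the defining inequality $w(S) + w(T) \le w(S \cup T)$ for $S \cap T = \emptyset$ is preserved under $w \mapsto \alpha w + \beta$, because the additive shift contributes $\sum_{i \in S} \beta_i + \sum_{i \in T}\beta_i = \sum_{i \in S \cup T}\beta_i$ to both sides and $\alpha > 0$. Similarly, $w(S) = v(S)$ for all $S \in \hat{\K}$ translates, after applying the affine map on both sides, to $(\alpha w + \beta)(S) = (\alpha v + \beta)(S)$. Hence $w \in \S(\hat{\K}, v)$ if and only if $\alpha w + \beta \in \S(\hat{\K}, \alpha v + \beta)$, and the map is clearly invertible.

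Second, I would use linearity of the Shapley value together with the fact that for the additive game $u$ defined by $u(\{i\}) = \beta_i$ we have $\phi_i(u) = \beta_i$, to conclude $\phi_i(\alpha w + \beta) = \alpha \phi_i(w) + \beta_i$. Plugging into Definition~6 gives
\begin{equation*}
\utopianGap_{(\alpha v + \beta)}(\K) = \sum_{i\in N}\max_{w \in \S(\hat{\K},v)}\bigl(\alpha \phi_i(w) + \beta_i\bigr) - \Bigl(\alpha v(N) + \sum_{j \in N}\beta_j\Bigr),
\end{equation*}
where I have already substituted the bijection from step~(i) into the feasible set of the maximization. Since $\alpha > 0$, the scalar pulls out of the maximum, and the constant $\beta_i$ exits the maximum as well.

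Finally, grouping terms gives $\alpha \sum_i \max_w \phi_i(w) + \sum_i \beta_i - \alpha v(N) - \sum_j \beta_j = \alpha\bigl(\sum_i \max_w \phi_i(w) - v(N)\bigr)$, which is exactly $\alpha \utopianGap_{(N,v)}(\K)$. The only step that requires genuine care is verifying the bijection of $\S$-extensions, but this reduces to elementary manipulation of the superadditivity inequalities; there is no real obstacle here, only bookkeeping.
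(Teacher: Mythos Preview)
Your argument is correct and essentially matches the paper's proof: both rest on the preservation of superadditivity under the affine map and the covariance $\phi_i(\alpha w+\beta)=\alpha\phi_i(w)+\beta_i$, followed by cancellation of the additive shift. The only cosmetic difference is that the paper names the maximizer via the utopian game $(\alpha v+\beta)_i$ (Proposition~\ref{prop:gap-is-utopian}) and implicitly uses $(\alpha v+\beta)_i=\alpha v_i+\beta$, whereas your bijection of $\S$-extensions handles the $\max$ directly without identifying the argmax.
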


\begin{proof}
	It is a standard result~\cite{Peleg2007} that
	\begin{enumerate}
		\item $(N,v) \in \S \implies (N,\alpha v + \beta) \in \S$,
		\item $\phi_i(\alpha v + \beta) = \alpha\phi_i(v) + \beta_i$  for every $i \in N$.
	\end{enumerate}
	From these two results, we have
	\begin{align*}
		\utopianGap_{(N,\alpha v + \beta)}(\K)
		 & =
		\sum_{i\in N}\left(\max\limits_{w\in \S(\K,\alpha v + \beta)} \phi_i(w) \right) - (\alpha v +  \beta)(N) \\
		 & =
		\sum_{i\in N}\left[ \phi_i((\alpha v + \beta)_i) - \phi_i(\alpha v + \beta) \right]                      \\
		 & =
		\sum_{i\in N}\left[\alpha \phi(v_i) + \beta_i - \left(\alpha\phi_i(v) + \beta_i\right) \right]           \\
		 & =
		\sum_{i\in N}\alpha\left[\phi_i(v_i) - \phi_i(v)\right]                                                  \\
		 & =
		\alpha\cdot \utopianGap_{(N,v)}(\K)	.
	\end{align*}
\end{proof}

\section{Explicit form of the gap}\label{app:mere-technicality}

Denote $\hat{\K}$, such that $\K_0 \subseteq \hat{\K} \subseteq 2^N$ and $\K = \hat{\K} \setminus \K_0$ and recall the definition of the gap, i.e.

\[\utopianGap_{(N,v)}(\K) = \sum_{i \in N}\left(\max\limits_{w \in \S(\K \cup \K_0,v)}\phi_i(w)\right) - v(N).\]

In Proposition~\ref{prop:gap-is-utopian}, we showed it holds $\max\limits_{w \in \S(\K \cup \K_0,v)}\phi_i(w) = \phi_i(v_i)$, thus

\begin{equation}\label{eq:mere-technicality}
	\utopianGap_{(N,v)}(\K) = \sum_{i \in N}\sum_{S \subseteq N\setminus i}\beta_{|S|}\left(\overline{v}_{\hat{\K}}(S \cup i) - \underline{v}_{\hat{\K}}(S)\right)-v(N),
\end{equation}
where $\beta_{|S|} = \frac{|S|!(|N|-|S|-1)!}{|N|!}$. We rewrite this double sum as a sum of $S \subseteq N$. First, $\overline{v}_{\hat{\K}}(S)$ appears in~\eqref{eq:mere-technicality} for every $i \in S$ with coefficient $\beta_{|S|-1}$.
Second, $\underline{v}_{\hat{\K}}(S)$ appears in~\eqref{eq:mere-technicality} for every $i \in N \setminus S$ with coefficient $\beta_{|S|}$. We get for any $i \in S$,
\begin{equation*}\label{eq:mere-technicality-2}
	\utopianGap_{(N,v)}(\K) = \sum_{S \subseteq N}\left(\gamma_{|S|}\overline{v}_{\hat{\K}}(S) - \delta_{|S|}\underline{v}_{\hat{\K}}(S)\right)-v(N),
\end{equation*}
where
\begin{itemize}
	\item $\gamma_{|S|} = |S|\frac{(|S|-1)!(|N|-|S|)!}{|N|!} = \frac{|S|!(|N|-|S|)!}{|N|!}$,
	\item $\delta_{|S|} = (|N|-|S|) \frac{|S|!(|N|-|S|-1)!}{|N|!} = \frac{|S|!(|N|-|S|)!}{|N|!}$.
\end{itemize}
Notice there are two edge-cases, where the second equality does not hold:
\begin{enumerate}
	\item If $ S = \emptyset $, then $ \gamma_{0} = 0 $, which is lost when $ S $ joins the factorial expression, yielding $ 0! = 1 $. However, $ \gamma_0 $ is always multiplied by $ \left( \overline v_{\hat \K}(S) - \underline v_{\hat \K}(S) \right) = \left( 0-0 \right) = 0 $, so the value of this coefficient does not change the resulting value.
	\item If $ S = N $, then $ \gamma_{n} = 0 $, however, $\frac{|S|!(|N|-|S|)!}{|N|!}=1$.
		In fact, $ \overline v_{\hat \K}(S) $ appears exactly once in the sum, but $ \underline v_{\hat \K}(S) $ does not appear at all.
		Our original calculations, however, didn't take into account the final $ -v(N) $ term present at the end of the definition of the gap.
		With the subtraction of $ v(N) = \underline v_{\hat \K}(S) $, it is easy to see that the value of $ \gamma_n = 1 $ is correct.
\end{enumerate}

This concludes the proof of Proposition~\ref{prop:gap-is-utopian}, as
\[
	\utopianGap_{(N,v)}(\K) = \sum_{S \subseteq N}\frac{|S|!(|N|-|S|)!}{|N|!}\left(\overline{v}_{\hat{\K}}(S) - \underline{v}_{\hat{\K}}(S)\right).
\]

\section{Supermodularity of gap}\label{app:counter-examples}
Supermodularity of the utopian gap is a desirable property, as it is known that greedy algorithms perform well on supermodular function~\cite[Proposition 3.4]{Nemhauser1978}. We show that as long as $|N| \le 4$, the gap is supermodular, however, when $|N|\ge 5$, the supermodularity does not have to hold for different subsets of superadditive games. For $|N| = 5$, we give examples from different important subsets of superadditive games that do not satisfy supermodularity of the gap. Among these are totally monotonic (thus also supermodular), symmetric, and graph games. For $|N|\geq 6$, we derive a criterion, which, when satisfied, implies non-supermodularity of the gap. As a corollary, we show all three of the above mentioned classes contain games with non-supermodular gap.

Throughout this section, we implicitly assume that the value function $v: 2^N \to \mathbb{R}$ is normalized as described in Appendix~\ref{app: value normalization}.
This is without the loss of generality, as this preserves the value of the gap, see Lemma~\ref{lem:gap-is-SE}.
As a consequence, the lower function $\underline{v}(S)$ is zero for $\K = \K_0$ for any $S \in 2^N \setminus \K_0$.

\subsection{Supermodularity for $|N| \leq 4$}
\label{app: supermodularity for n<5}
Before we prove Proposition~\ref{prop:gap-supermodularity}, we state several useful lemmata.
\begin{lemma}
	\label{lem:bound-funcs-monotone}
	Let $\K_0 \subseteq \hat{\K}\subseteq 2^N \setminus S$ and $(N,v)\in \S$. Then $\forall T\in 2^N$
	\begin{equation*}
		\underline{v}_{\hat{\K}}(T) \le \underline{v}_{\hat{\K} \cup S}(T)
		\hspace{2ex}
		\text{and}
		\hspace{2ex}
		\overline{v}_{\hat{\K}}(T) \ge \overline{v}_{\hat{\K} \cup S}(T).
	\end{equation*}
\end{lemma}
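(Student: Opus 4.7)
The plan is to establish both inequalities directly from the variational definitions~\eqref{eq: lower game} and~\eqref{eq: upper game} of the lower and upper games, exploiting the fact that enlarging the index set of a maximum can only increase it, while enlarging the index set of a minimum can only decrease it.

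For the lower game, I would observe that $\underline{v}_{\hat{\K}}(T)$ is by definition the maximum of $\sum_i v(S_i)$ over all partitions $S_1,\dots,S_k$ of $T$ with parts in $\hat{\K}$. Passing from $\hat{\K}$ to $\hat{\K}\cup\{S\}$ only enlarges the collection of admissible partitions (every old partition is still admissible, plus possibly new ones using $S$ as a block). Hence the maximum is weakly larger, yielding $\underline{v}_{\hat{\K}}(T)\le \underline{v}_{\hat{\K}\cup\{S\}}(T)$. This is essentially a one-line monotonicity-of-max argument.

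For the upper game, the strategy is to combine two monotonicity effects. Recall $\overline{v}_{\hat{\K}}(T)=\min_{T'\in\hat{\K},\,T\subseteq T'}\bigl(v(T')-\underline{v}_{\hat{\K}}(T'\setminus T)\bigr)$. Going from $\hat{\K}$ to $\hat{\K}\cup\{S\}$ has two effects on this minimum: (i) the feasible set of $T'$'s weakly grows (it may gain $T' = S$ when $T\subseteq S$), which can only lower the minimum; and (ii) for every fixed feasible $T'$, the quantity $v(T')-\underline{v}(T'\setminus T)$ weakly decreases, because by the first part of the lemma $\underline{v}_{\hat{\K}}(T'\setminus T)\le \underline{v}_{\hat{\K}\cup\{S\}}(T'\setminus T)$. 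Both effects push the minimum down, giving $\overline{v}_{\hat{\K}}(T)\ge \overline{v}_{\hat{\K}\cup\{S\}}(T)$.

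I expect no substantial obstacle: the main subtlety is simply making sure that the upper-game argument correctly chains the two monotonicity steps (enlarging the min's index set and the pointwise decrease of the integrand), and that the edge cases $T\in\hat{\K}$ and $T=N$ are handled by noting the definitions still agree in those cases. I would also briefly remark that $\S$-extendability is preserved when passing from $\hat{\K}$ to $\hat{\K}\cup\{S\}$, since the underlying superadditive game $(N,v)$ itself witnesses an extension of both incomplete games, so the lower/upper games are well defined in both instances.
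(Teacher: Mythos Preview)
Your proposal is correct and follows essentially the same approach as the paper's proof, which is a terse two-sentence argument invoking monotonicity of the max for $\underline{v}$ and then stating that ``consequently'' $\overline{v}$ cannot increase. Your version is actually more complete: you explicitly disentangle the two effects on the upper game (enlarged index set of the outer minimum, and pointwise decrease of each term via the first part), whereas the paper compresses both into a single ``consequently.''
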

\begin{proof}
	By definition of the lower game, a larger set of known coalitions can only increase the value of the lower game.
	Consequently, the upper gane cannot increase if the set of known values becomes larger.
\end{proof}

\begin{lemma}
	\label{lem:change-of-bounds}
	Let $\K_0 \subseteq\hat{\K} \subseteq 2^N \setminus S$  and $T\in 2^N$.
	Then
	\begin{equation*}
		T\subsetneq S \implies \overline{v}_{\hat{\K}}(T) = \overline{v}_{\hat{\K}\cup S}(T),
	\end{equation*}
	and

	\begin{equation*}
		S\subsetneq T \implies \underline{v}_{\hat{\K}}(T) = \underline{v}_{\hat{\K}\cup S}(T).
	\end{equation*}

\end{lemma}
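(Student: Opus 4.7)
My plan is to handle both implications via the same two-step scheme. Lemma~\ref{lem:bound-funcs-monotone} already supplies one inequality for free in each case, since enlarging the set of known coalitions can only raise the lower game and lower the upper game. The task therefore reduces to the reverse inequality, which I would derive directly from the combinatorial definitions \eqref{eq: lower game} and \eqref{eq: upper game}.

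For the upper-game half ($T\subsetneq S$), I would expand $\overline{v}_{\hat{\K}\cup S}(T)$ as the minimum of $v(U)-\underline{v}_{\hat{\K}\cup S}(U\setminus T)$ over $U\in \hat{\K}\cup\{S\}$ with $T\subseteq U$, and split the candidates by whether $U$ is old or new. For an old $U\in\hat{\K}$, I would show that the inner lower game collapses to $\underline{v}_{\hat{\K}}(U\setminus T)$: the coalition $S$ can appear as an atom of a disjoint partition of $U\setminus T$ only if $S\subseteq U\setminus T$, but this is ruled out by $T\subseteq S$ together with $T\neq \emptyset$, which forces $S\cap T=T\neq \emptyset$ (the edge case $T=\emptyset$ is trivial, both sides vanishing). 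The genuinely new candidate $U=S$ contributes $v(S)-\underline{v}_{\hat{\K}\cup S}(S\setminus T)$, and the same argument applied to the proper subset $S\setminus T\subsetneq S$ collapses this to $v(S)-\underline{v}_{\hat{\K}}(S\setminus T)$.

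For the lower-game half ($S\subsetneq T$), I would expand $\underline{v}_{\hat{\K}\cup S}(T)$ as a maximum over disjoint partitions of $T$ into coalitions from $\hat{\K}\cup\{S\}$, and again split by whether the new atom $S$ participates. Partitions that avoid $S$ recover $\underline{v}_{\hat{\K}}(T)$ verbatim. An augmented partition of the form $\{S\}\cup\{S_i\}$, with $\{S_i\}\subseteq \hat{\K}$ covering $T\setminus S$ disjointly, contributes $v(S)+\sum_i v(S_i)$, and I would aim to match this by a partition drawn entirely from $\hat{\K}$ whose value is at least as large, leveraging superadditivity and the extremal $\S$-extensions supplied by Theorem~\ref{lem:super bounds}.

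The main obstacle in both halves is this final matching step: in the upper-game case, verifying that the extra candidate $U=S$ does not lower the minimum below the value $\overline{v}_{\hat{\K}}(T)$ computed over old candidates; in the lower-game case, verifying that admitting $S$ as a partition atom does not raise the maximum above $\underline{v}_{\hat{\K}}(T)$. Both collapse to calibrating superadditivity inequalities against an $\S$-extension of $(N,\hat{\K}\cup\{S\},v)$ extremal in the sense of Theorem~\ref{lem:super bounds}; making these estimates go through cleanly, while carefully separating the $T=\emptyset$ and $S=\emptyset$ edge cases, is where the delicate bookkeeping will be concentrated.
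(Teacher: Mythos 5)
Your decomposition into ``old'' and ``new'' candidates is the right framework, and the parts of your argument that do go through are exactly the content of the paper's own (two-line) proof. But the step you defer as ``delicate bookkeeping'' is not bookkeeping---it cannot be done, because the lemma as printed has $\overline{v}$ and $\underline{v}$ interchanged. Concretely, take $N=\{1,2,3,4\}$, $\hat{\K}=\K_0$, and a normalized superadditive $v$ with $v(\{i\})=0$, $v(N)=1$, $v(\{1,2,3\})=\tfrac{1}{2}$, all other values $0$. With $S=\{1,2,3\}$ and $T=\{1,2\}\subsetneq S$, the new candidate $U=S$ gives $\overline{v}_{\hat{\K}\cup S}(T)\le v(S)-\underline{v}(\{3\})=\tfrac{1}{2}$, whereas $\overline{v}_{\hat{\K}}(T)=v(N)-\underline{v}(\{3,4\})=1$; so the first implication fails. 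Dually, for $S=\{1,2\}\subsetneq T=\{1,2,3\}$ with $v(\{1,2\})>0$, the augmented partition $\{S,\{3\}\}$ raises $\underline{v}_{\hat{\K}\cup S}(T)$ strictly above $\underline{v}_{\hat{\K}}(T)=0$. Note also that superadditivity works \emph{against} your proposed matching step: $v(S)\ge\sum_{i\in S}v(\{i\})$ means the new atom dominates its singleton refinement, so no partition drawn from $\hat{\K}$ alone can match it in general, and no appeal to the extremal extensions of Theorem~\ref{lem:super bounds} repairs this.

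The statement that is actually true (and that the paper's proof establishes, despite its garbled wording---it calls the upper game a ``minimal partition'' and the lower game a ``maximum over supersets'') is the swapped one: $T\subsetneq S$ implies $\underline{v}_{\hat{\K}}(T)=\underline{v}_{\hat{\K}\cup S}(T)$, because $S\not\subseteq T$ means $S$ can never serve as an atom of a disjoint partition of $T$, so the maximum in \eqref{eq: lower game} ranges over the same set before and after revealing $S$; and $S\subsetneq T$ implies $\overline{v}_{\hat{\K}}(T)=\overline{v}_{\hat{\K}\cup S}(T)$, because $S$ is not a superset of $T$ (so it contributes no new candidate to the minimum in \eqref{eq: upper game}) and, being a nonempty subset of $T$, satisfies $S\not\subseteq U\setminus T$ for every candidate $U$ (so none of the inner lower-game terms change). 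These two observations are precisely the ``old candidates are unaffected'' halves of your argument; once the conclusions are swapped, there is no residual matching step, and the proof is complete in a few lines directly from the definitions.
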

\begin{proof}
	The value of $T$ for $\overline{v}_{\hat{\K}}$, resp. $\overline{v}_{\hat{\K} \cup S}$ is given by minimal partition of $T$ in $\K$, resp. $T$ in $\hat{\K} \cup S$. As $T \subsetneq S$, revealing $f(S)$ does not appear in any partition of $T$, thus $\overline{v}_{\hat{\K}}(T) = \overline{v}_{\hat{\K}\cup S}(T)$.

	The value of $T$ for $\underline{v}_{\hat{\K}}$, resp. $\underline{v}_{\hat{\K} \cup S}$ is given by considering maximum over $X \in \hat{\K}$, $T \subseteq X$. As $S \subsetneq T$, it does not appear as a possible $X$. The only possibility for the change of the upper value would be that by reavealing $S$, we get $\overline{v}_{\hat{\K}\cup S}(X \setminus T) \neq \overline{v}_{\hat{\K}}(X \setminus T)$ for some $X$. This means $S \subseteq X \setminus T$, but this is not possible as $S \subsetneq T$.
\end{proof}

\begin{lemma}
	\label{lem:unaffected-bounds}
	Let $\K_0 \subseteq\hat{\K} \subseteq 2^N\setminus S$ and $T\in 2^N$, such that $\emptyset \neq S\cap T$, $S\subsetneq T$ and $T \not\subseteq S$. Then
	\begin{equation*}
		\overline{v}_{\hat{\K}}(T) = \overline{v}_{\hat{\K}\cup S}(T),
		\hspace{3ex}
		\text{and}
		\hspace{3ex}
		\underline{v}_{\hat{\K}}(T) = \underline{v}_{\hat{\K}\cup S}(T).
	\end{equation*}
\end{lemma}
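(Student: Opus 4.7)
The plan is to analyze the two equalities separately and track exactly how enlarging $\hat{\K}$ by the single coalition $S$ could in principle alter the combinatorial extrema in the definitions~\eqref{eq: lower game} and~\eqref{eq: upper game}. Under the stated incomparability assumptions (interpreting the hypothesis so that $S$ and $T$ overlap but neither contains the other), I expect every potentially affected term to be ruled out, which forces both bounds to remain unchanged. The argument for the lower game is short and feeds into the argument for the upper game.

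First, I would treat $\underline{v}$. By~\eqref{eq: lower game}, $\underline{v}_{\hat{\K}}(T)$ is a maximum taken over families $S_1,\dots,S_k \in \hat{\K}$ that are pairwise disjoint and cover $T$. Replacing $\hat{\K}$ by $\hat{\K} \cup \{S\}$ can only add new feasible families — namely those using $S$ as one of the pieces — and this requires $S \subseteq T$. Since the hypotheses include $S \not\subseteq T$, no new family is admissible and the maximum is taken over the same set, so $\underline{v}_{\hat{\K}}(T) = \underline{v}_{\hat{\K} \cup \{S\}}(T)$.

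Next I would handle $\overline{v}$. By~\eqref{eq: upper game}, $\overline{v}_{\hat{\K}}(T) = \min\{\, v(U) - \underline{v}_{\hat{\K}}(U\setminus T) : U \in \hat{\K},\; T \subseteq U\,\}$. Enlarging $\hat{\K}$ by $\{S\}$ can affect this minimum in only two ways: (i) $S$ itself becomes a new feasible $U$, which requires $T \subseteq S$ and is ruled out by $T \not\subseteq S$; or (ii) for some existing feasible $U$, the value $\underline{v}_{\hat{\K}}(U \setminus T)$ changes. By the first paragraph applied to $U \setminus T$ in the role of $T$, case (ii) requires $S \subseteq U \setminus T$, hence $S \cap T = \emptyset$, contradicting $S \cap T \neq \emptyset$. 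Therefore the minimum is over exactly the same values before and after revealing $S$, yielding $\overline{v}_{\hat{\K}}(T) = \overline{v}_{\hat{\K} \cup \{S\}}(T)$.

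The main obstacle is purely bookkeeping and interpretational: one must notice that the upper-game analysis uses the lower-game invariance as an internal subroutine on the auxiliary coalition $U \setminus T$, and one must verify that the hypothesis $S \cap T \neq \emptyset$ is precisely what obstructs $S \subseteq U \setminus T$. Everything else follows directly from the definitions, with no inductive or quantitative estimates needed.
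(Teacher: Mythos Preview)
Your argument is correct and is exactly the kind of direct unpacking of the definitions that the paper gestures at when it writes ``Follows immediately from the definition of the upper/lower games.'' You also correctly diagnose the typo in the hypothesis: the condition ``$S\subsetneq T$'' should read ``$S\not\subseteq T$'', as is confirmed by both places where the lemma is invoked (case~4 in the proof of Proposition~\ref{prop:gap-supermodularity} and case~5 in the proof of Proposition~\ref{prop:gap-criterion}).
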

\begin{proof}
	Follows immediately from the definition of the upper/lower games.
\end{proof}
\begin{cmr}{Proposition~\ref{prop:gap-supermodularity}}
	For $|N|\le 4$, the gap of an incomplete $\S$-extendable game is supermodular.
\end{cmr}
\begin{proof}
	Denote $\hat{\K}$, such that $\K_0 \subseteq \hat{\K} \subseteq 2^N$ and $\K = \hat{\K} \setminus \K_0$. As $\utopianGap_{(N,v)}(\K) = \sum_{T \subseteq N}\alpha_T\Delta_{\hat{\K}}(T)-v(N)$ where $\alpha_T = \frac{|T|!(|N|-|T|)!}{|N|!}$, the condition of supermodularity is
	\[
		\sum_{T \subseteq N}\hspace{-0.1cm}\alpha_T\hspace{-0.05cm}\left(\Delta_{\hat{\K}\cup S\cup Z}(T) - \Delta_{\hat{\K}\cup S}(T)\right)\ge\hspace{-0.2cm}
		\sum_{T \subseteq N}\hspace{-0.1cm}\alpha_T\hspace{-0.05cm}\left(\Delta_{\hat{\K}\cup Z}(T) - \Delta_{\hat{\K}}(T)\right).
	\]
	To prove this proposition, we show an even a stronger condition holds, i.e., $\forall T \in 2^N$
	\[
		\alpha_T\left(\Delta_{\hat{\K}\cup S\cup Z}(T) - \Delta_{\hat{\K}\cup S}(T)\right)\ge
		\alpha_T\left(\Delta_{\hat{\K}\cup Z}(T) - \Delta_{\hat{\K}}(T)\right),
	\]
	or since $\alpha_T \geq 0$ for every $T$, this is equivalent to
	\begin{multline}
		\label{eq:SM-stronger-constraint}
		\overline{v}_{\hat{\K}\cup S\cup Z}(T) - \underline{v}_{\hat{\K}\cup S\cup Z}(T)
		- \overline{v}_{\hat{\K}\cup S}(T) +\underline{v}_{\hat{\K}\cup S}(T)
		\ge\\
		\overline{v}_{\hat{\K}\cup Z}(T) - \underline{v}_{\hat{\K}\cup Z}(T) - \overline{v}_{\hat{\K}}(T) + \underline{v}_{\hat{\K}}(T).
	\end{multline}

	For $T \in \hat{\K} \cup S$, condition~\eqref{eq:SM-stronger-constraint} holds for any superadditive $v\colon 2^N \to \mathbb{R}$ with $N$ of arbitrary size. To see this, notice $\overline{v}_{\hat{\K} \cup S \cup Z}(T) = \underline{v}_{\hat{\K} \cup S \cup Z}(T)$ and $\overline{v}_{\hat{\K} \cup S}(T) = \underline{v}_{\hat{\K} \cup S}(T)$, thus~\eqref{eq:SM-stronger-constraint} reduces to
	\[
		\overline{v}_{\hat{\K}\cup S}(T) - \underline{v}_{\hat{\K}\cup S}(T) \le \overline{v}_{\hat{\K}}(T) - \underline{v}_{\hat{\K}}(T).
	\]
	This inequality follows from Lemma~\ref{lem:bound-funcs-monotone}.

	First, consider case $|N|=3$. Since, only coalitions of size 2 are possibly unknown, their lower and upper bounds do not change when another subset of size 2 is revealed. Thus, for $\K_0 \subseteq \hat{\K} \subseteq 2^N \setminus \{S,Z\}$ and $T \notin \hat{\K}\cup S$, it holds
	\[\overline{v}_{\hat{\K} \cup S \cup Z}(T) - \underline{v}_{\hat{\K}\cup S\cup Z}(T) = \overline{v}_{\hat{\K} \cup S}(T) - \underline{v}_{\hat{\K} \cup S}(T)\]
	and similarly
	\[\overline{v}_{\hat{\K} \cup Z}(T) - \underline{v}_{\hat{\K}\cup Z}(T) = \overline{v}_{\hat{\K}}(T) - \underline{v}_{\hat{\K}}(T),\]
	which means~\eqref{eq:SM-stronger-constraint} holds.

	Now, consider caser $|N|=4$. To prove~\eqref{eq:SM-stronger-constraint} for $\K_0 \subseteq \hat{\K} \subseteq 2^N \setminus \{S,Z\}$ and $T \notin \hat{\K} \cup S$, we distinguish several cases based on the relation between $Z$ and $T$. All of the cases follow a similar pattern.
	\begin{enumerate}
		\item $Z \subsetneq T$:

		      Since $\lvert N \rvert = 4$ and $Z,T$ are unknown in $\hat{\K} \cup S$, they must be of form $Z = \{i,j\}$ and $T = \{i,j,k\}$. By Lemma~\ref{lem:change-of-bounds}, $\underline{v}_{\hat{\K}\cup Z}(T) = \underline{v}_\K(T)$ and $\underline{v}_{\hat{\K}\cup S \cup Z}(T) = \underline{v}_{\hat{\K}\cup S}(T)$, thus~\eqref{eq:SM-stronger-constraint} reduces to
		      \[
			      \overline{v}_{\hat{\K}\cup S \cup Z}(T) - \overline{v}_{\hat{\K}\cup S}(T) \ge \overline{v}_{\hat{\K}\cup Z}(T) - \overline{v}_{\K}(T).
		      \]
		      For a contradiction, suppose the converse holds. As, by Lemma~\ref{lem:bound-funcs-monotone}, $\overline{v}_{\hat{\K}\cup S \cup Z}(T) - \overline{v}_{\hat{\K}\cup Z}(T) \ge 0$, this means $\overline{v}_{\hat{\K}\cup Z}(T) > \overline{v}_{\K}(T)$, which leads, by Lemma~\ref{lem:bound-funcs-monotone}, to a contradiction.

		\item $T \subsetneq Z$:
		      Similarly to the first case, we have $T = \{i,j\}$, $Z = \{i,j,k\}$ and by Lemma~\ref{lem:change-of-bounds},~\eqref{eq:SM-stronger-constraint} reduces to
		      \[
			      \underline{v}_{\hat{\K}\cup S}(T) - \underline{v}_{\K\cup S\cup Z}(T) \ge \underline{v}_\K(T)-\underline{v}_{\hat{\K}\cup Z}(T).
		      \]
		      For a contradiction, suppose the converse holds. It means $\underline{v}_\K(T) > \underline{v}_{\hat{\K}\cup Z}(T)$, which by Lemma~\ref{lem:bound-funcs-monotone} leads to a contradiction.
		\item $Z = T$:
		      It holds $\overline{v}_{\hat{\K}\cup S \cup Z}(T) = \underline{v}_{\hat{\K}\cup S\cup Z}(T)$ and $\overline{v}_{\hat{\K}\cup Z}(T) = \underline{v}_{\hat{\K}\cup Z}(T)$, thus~\eqref{eq:SM-stronger-constraint} reduces to
		      \[
			      \overline{v}_{\hat{\K}}(T) - \underline{v}_\K(T) \ge \overline{v}_{\hat{\K}\cup S}(T) - \underline{v}_{\hat{\K}\cup S}(T)
		      \]
		      which holds by Lemma~\ref{lem:bound-funcs-monotone}.
		      In this case, $Z = \{i,j\}$, $T = \{k,\ell\}$ and~\eqref{eq:SM-stronger-constraint} reduces to
		      \[
			      \underline{v}_{\hat{\K}\cup S}(T) - \underline{v}_{\hat{\K}\cup S \cup Z}(T)\ge \underline{v}_{\K}(T) - \underline{v}_{\hat{\K}\cup Z}(T)
		      \]
		      as $\overline{v}_{\hat{\K}\cup S\cup Z}(T) = \overline{v}_{\hat{\K}\cup S}(T)$ and $\overline{v}_{\hat{\K}\cup Z}(T) = \overline{v}_{\hat{\K}}(T)$. For a contradiction, if $<$ holds, similarly to previous cases, $\underline{v}_{\hat{\K}}(T) > \underline{v}_{\hat{\K}\cup S}(T)$, a contradiction with Lemma~\ref{lem:bound-funcs-monotone}.

		\item $Z \cap T \neq \emptyset$ and $Z \not\subseteq T$ and $T \not\subseteq Z$: By Lemma~\ref{lem:unaffected-bounds},
		      condition~\eqref{eq:SM-stronger-constraint} reduces to $0 \ge 0$.
	\end{enumerate}

\end{proof}

\subsection{Non-supermodularity for $|N| = 5$}
All of the examples of games with non-supermodular gap are based on the same principle. We fix $\K = \K_0 \cup \{1,2,3\}$, $S = \{1,2\}$ and $Z = \{3,4\}$. Then if we denote
\[
	\utopianGap_\Delta = \utopianGap_{(N,v)}(\K \cup S \cup Z) - \utopianGap_{(N,v)}(\K \cup S) - \utopianGap_{(N,v)}(\K \cup Z) + \utopianGap_{(N,v)}(\K),
\]
a necessary condition of supermodularity of $\utopianGap_{(N,v)}$ is thus $\utopianGap_\Delta \ge 0$. We note that a necessary, however not sufficient condition for violation of this condition is $v(\{1,2\}) + v(\{3,4\}) > v(\{1,2,3\})$.

\begin{example}{(Totally monotonic game)}

	Totally monotonic games are defined as non-negative combinations of so called \emph{unanimity games}. Unanimity game $(N,u_S)$ for $S \subseteq N$ is defined as
	\[
		u_S(T) = \begin{cases}
			1 & S \subseteq T,    \\
			0 & \text{otherwise.} \\
		\end{cases}
	\]
	Any totally monotonic game can be then expressed as $\sum_{S \subseteq N}\alpha_Su_S$ where $\alpha_S \ge 0$ for every $S$. They form a subset of convex games, which under the name \emph{belief measures} forms the foundation on the \emph{evidence theory}~\cite{Grabisch2016, shafer1976})

	An example of a totally monotonic game with non-supermodular gap is $u_{\{3,4\}} + u_{\{1,2,3\}}$, for which $\utopianGap_\Delta = 0.1 \not\ge 0$.

\end{example}

\begin{example}{(Symmetric game)}
	Symmetric cooperative game $(N,v)$ satisfies for every $S$, $T \subseteq N$, $|S|=|T|$ that $v(S) = v(T)$. Symmetric game of $n$ players can be thus defined by $s_1,\dots,s_n \in \R$ corresponding to values of coalitions of different sizes. These games are often studied for their robust properties with respect to solution concepts~\cite{Peleg2007}).

	It holds that if symmetric game on 5 players satisfy $s_2 = s_3=1$ and $s_5=2$, it follows $\utopianGap_\Delta = -0.1 \not\ge 0$.
\end{example}

\begin{example}{(Graph game)}
	Graph game $(N,v)$ is defined using a complete weighted graph $G = (N,E)$ as
	\[
		v(S) = \sum_{i,j \in S}w_{i,j},
	\]
	where $w_{i,j}\geq 0$ is the weight of $e_{i,j}$. These games have interesting properties from the point of view of algorithmic complexity and have been thoroughly studied in the past decades~\cite{Chalkiadakis2012}.

	For a complete graph $(N,E)$, where $N = \{1,2,3,4,5\}$, $w_{1,2}=1 = w_{3,4}=1$, it holds $\utopianGap_\Delta = -0.1 \not\ge 0$.

\end{example}

\subsection{Criterion for $|N| \geq 6$}\label{app:gap-criterion}
\begin{cmr}{Proposition~\ref{prop:gap-criterion}}
	For $|N|\geq 6$, the gap of an incomplete $\S$-extendable game $(N,w)$ is not supermodular if there exist $i,j,k,l \in N$ such that $v(ij) \le v(jk) \le v(kl)$, where $v(S) = w(S) - \sum_{i\in S}w(\{i\})$, and
	\begin{equation}
		v(kl) < \left[{\binom{n}{2}}{\binom{n}{\lfloor n/2 \rfloor}}^{-1}\left(2^{n-3}-n+2\right)-1\right]v(ij).
	\end{equation}
\end{cmr}
\begin{proof}
	By Lemma~\ref{lem:gap-is-SE} in Appendix~\ref{app: value normalization}, the gap  of $(N,w)$ is supermodular if and only if the gap of $(N,v)$ is supermodular. We focus on the gap $(N,v)$ as this makes our analysis simple and at the same time, the condition easy to check.

	The supermodularity condition \eqref{eq: supermodular constraints} for the gap is
	\begin{equation}
		\label{app: eq: supermodularity constraints}
		\utopianGap_{(N,v)}(\hat{\K}\cup S\cup Z) - \utopianGap_{(N,v)}(\hat{\K}\cup S) \ge \utopianGap_{(N,v)}(\hat{\K}\cup Z) - \utopianGap_{(N,v)}(\hat{\K}),
	\end{equation}
	where $\hat{\K} \subseteq 2^N \setminus \K_0$.

	We have shown in Appendix~\ref{app:mere-technicality} that the utopian gap is
	\begin{equation}
		\label{eq: gap as scaled divergence}
		\utopianGap_{(N,v)}(\K) =
		\sum_{S \subseteq N}
		\gamma_S\left(\overline{v}_{\hat{\K}}(S) - \underline{v}_{\hat{\K}}(S)\right)-v(N),
	\end{equation}
	where $\gamma_S = \frac{|S|!(|N|-|S|)!}{|N|!}$.
	This equation can be viewed as a weighted $l_1$-norm of the difference between the upper and lower games.\footnote{Assuming they are viewed as vectors in $\mathbb{R}^{2^n}$.}

	Using this equality, we can rewrite Eq.~\eqref{app: eq: supermodularity constraints} as
	\begin{multline}
		\label{app: eq: supermodularity constraints2}
		\sum_{T \subseteq N}\gamma_T\left(\overline{v}_{\hat{\K}\cup S\cup Z}(T) - \underline{v}_{\hat{\K}\cup S\cup Z}(T) - \overline{v}_{\hat{\K}\cup S}(T) + \underline{v}_{\hat{\K}\cup S}(T)\right) \ge \\
		\sum_{T \subseteq N}\gamma_T\left(\overline{v}_{\hat{\K}\cup Z}(T) - \underline{v}_{\hat{\K}\cup Z}(T) - \overline{v}_{\hat{\K}}(T) + \underline{v}_{\hat{\K}}(T)\right).
	\end{multline}

	We obtain the result by setting $\K = \K_0 \cup \{\{j,k\}\}$, $S=\{i,j\}$, and $Z=\{k,l\}$.
	Denote $v(ij) = v(jk) - \varepsilon$ and $v(kl) = v(jk) + \delta$ for some $\varepsilon, \delta \ge 0$.
	For these values, we investigate each of the weighted terms separately.
	Let
	\begin{align}
		L_T & = \overline{v}_{\hat{\K}\cup S\cup Z}(T) - \underline{v}_{\hat{\K}\cup S\cup Z}(T) - \overline{v}_{\hat{\K}\cup S}(T) + \underline{v}_{\hat{\K}\cup S}(T), \\
		R_T & = \overline{v}_{\hat{\K}\cup Z}(T) - \underline{v}_{\hat{\K}\cup Z}(T) - \overline{v}_{\hat{\K}}(T) + \underline{v}_{\hat{\K}}(T).
	\end{align}
	In the reminder of the proof, we distinguish different cases based on the relation of $T$ and $\{i,j,k,l\}$ for all $T\in 2^N\setminus \K_0$.
	\begin{enumerate}
		\item $\{i,j,k,l\} \subseteq T\subsetneq N$:

		      Using Lemma~\ref{lem:change-of-bounds}, we get
		      \begin{align*}
			      L_T & = \underline{v}_{\K \cup S}(T) - \underline{v}_{\K\cup S\cup Z}(T), \\
			      R_T & = \underline{v}_{\K}(T) - \underline{v}_{\K\cup Z}(T),
		      \end{align*}
		      which can be expressed as
		      \begin{align*}
			      L_T & = v(jk) - v(jk) + \varepsilon - v(jk) - \delta, \\
			      R_T & = v(jk) - v(jk) - \delta.
		      \end{align*}
		      It holds $L_T = R_T + \varepsilon - v(jk) = R_T - v(ij)$.
		\item $\{i,j,k,l\}\subseteq N \setminus T$:

		      In this case, the lower bound on $v(T)$ is unaffected by the knowledge of $S,Z$, so
		      \begin{align*}
			      L_T & =  \overline{v}_{\K \cup S\cup Z}(T)
			      - \overline{v}_{\K \cup S}(T),                              \\
			      R_T & =  \overline{v}_{\K\cup Z}(T) - \overline{v}_{\K}(T).
		      \end{align*}
		      Since the only superset of $T$ in $\K\cup S \cup Z$ is $N$, the upper bound~\eqref{eq: upper game} reduces to
		      \begin{equation*}
			      \overline{v}_\mathcal{L}(T) =
			      v(N) - \underline{v}_\mathcal{L}(N\setminus T)=
			      1 - \underline{v}_\mathcal{L}(N\setminus T),
		      \end{equation*}
		      for every $\mathcal{L}\in \{\K,\K\cup S, \K \cup Z, \K \cup S\cup Z\}$. Thus
		      \begin{align*}
			      L_T & =  \underline{v}_{\K \cup S}(N\setminus T)- \underline{v}_{\K \cup S \cup Z}(N\setminus T), \\
			      R_T & =  \underline{v}_{\K}(N\setminus T)- \underline{v}_{\K \cup Z}(N\setminus T).
		      \end{align*}
		      Further, $i,j,k,l\in N\setminus T$, which means
		      \begin{align*}
			      L_T & =  v(jk)- v(jk)+\varepsilon - v(jk) - \delta, \\
			      R_T & =  v(jk) - v(jk)-\delta.
		      \end{align*}
		      As in the previous case, $L_T = R_T + \varepsilon - v(jk) = R_T - v(ij)$.

		\item $S=T$:

		      In this case, since $\underline{v}_{\mathcal{L}}(T) = \overline{v}_\mathcal{L}(T) = v(T)$ if $T \in \mathcal{L}$ for every $\mathcal{L}\in \{\K,\K\cup S, \K \cup Z, \K \cup S\cup Z\}$ we get
		      \begin{align*}
			      L_T & =  0,                                                                                                                               \\
			      R_T & =  \overline{v}_{\K\cup Z}(T)- \overline{v}_{\K}(T) = \underline{v}_{\K}(N \setminus T) - \underline{v}_{\K \cup Z}(N \setminus T).
		      \end{align*}
		      This means $R_T = 0 - v(jk) - \delta$, therefore it holds $L_T = R_T + v(jk) + \delta = R_T + v(kl)$.

		\item $Z=T$:

		      In this case, by the fact that $\underline{v}_{\mathcal{L}}(T) = \overline{v}_\mathcal{L}(T) = v(T)$ if $T \in \mathcal{L}$, it holds
		      \begin{align*}
			      L_T & =  \underline{v}_{\K\cup S}(T)- \overline{v}_{\K\cup S}(T), \\
			      R_T & = \underline{v}_{\K}(T)- \overline{v}_{\K}(T).
		      \end{align*}
		      Since $S \not\subseteq T$, it holds $\underline{v}_{\K \cup S}(T) = \underline{v}_\K(T)$, so we can rewrite
		      \begin{align*}
			      L_T & =  \underline{v}_{\K\cup S}(N \setminus T) - v(N) = v(jk)-\varepsilon - v(N), \\
			      R_T & = \underline{v}_{\K}(N \setminus T)- v(N) = -v(N).
		      \end{align*}
		      Thus it holds $L_T = R_T +v(jk) - \varepsilon = R_T + v(ij)$.

		\item $\emptyset \neq Z\cap T, Z \not\subseteq T$, and $T \not\subseteq Z$ or $\emptyset\neq S\cap T, S \not\subseteq T$, and $T \not\subseteq S$:

		      By Lemma~\ref{lem:bound-funcs-monotone}, and Lemma~\ref{lem:unaffected-bounds}, one immediately arrives at $L_T=R_T=0$.
	\end{enumerate}

	A necessary condition for supermodularity of the gap is \[\sum_{T\subseteq N}\gamma_{|T|} (L_T - R_T) \ge 0.\]

	This means, that if any upper bound $L$,
	\[\sum_{T\subseteq N}\gamma_{|T|} (L_T - R_T)\le L\]
	satisfies $L < 0$, the gap is not supermodular. To construct the upper bound, we focus on terms of the sum corresponding to those from the first and the second case above, i.e. $T \subseteq N$ s.t. $\{i,j,k,l\} \subseteq T \subsetneq N$ or $T \subseteq N \setminus \{i,j,k,l\}$. In these cases, $L_T - R_T \leq 0$. Since $\gamma_{|T|} \leq \gamma_{\floor{n/2}}$ for every $T \subseteq N$, we bound the corresponding terms by
	\[
		\gamma_{|T|}\left(L_T - R_T\right) \leq \gamma_{\floor{n/2}}\left(L_T - R_T\right).
	\]
	The upper bound thus sums to
	\[L = -\gamma_{\floor{n/2}}(c_1(n) + c_2(n)) v(ij) + \gamma_2(v(kl) + v(ij)),\]
	where $c_1(n) = 2^{n-4}-1$ and $c_2(n)=2^{n-4}-n+3$ is the number of terms satisfying conditions 1 and 2, respectively. Thus, $L < 0$ is equal to
	\[
		v(kl) < \left(\frac{\gamma_{\floor{n/2}}}{\gamma_2}\left(c_1(n) + c_2(n)\right)-1\right)v(ij),
	\]
	or since $c_1(n) + c_2(n) = 2^{n-3}-n+2$, it is equal to
	\begin{equation}
		v(kl) < \left[{\binom{n}{2}}{\binom{n}{\lfloor n/2 \rfloor}}^{-1}\left(2^{n-3}-n+2\right)-1\right]v(ij).
	\end{equation}
	The coefficient for $v(ij)$ is equal $2,\frac{28}{5},\frac{47}{5}$,$\dots$ for $n = 6,7,8,\dots$ and the sequence can be shown to be increasing. For $n=5$, the coefficient is equal to $0$, which makes the criterion useless as it translates to $0\le v(kl) < 0$.
\end{proof}
A corollary of the criterion is that, in any game where $|N| \ge 6$ and there exist $i,j,k,l \in N$ such that $v(ij) \leq v(jk) \leq v(kl)$ and $v(kl) < 2 v(ij)$
does not satisfy supermodularity of the gap. The criterion is satisfied by any game satisfying $v(ij)=v(jk)=v(kl)\neq 0$ for some $i,j,k,l$. Examples of totally monotonic, symmetric supperadditive, and graph games are immediate. Note that almost every symmetric superadditive ($v(ij)\neq 0$) as well as almost every additive game (different from zero game) has a non-supermodular gap.

\subsection{Simpler criterion for $|N| \geq 7$}
Using the same technique as in~\ref{app:gap-criterion}, one might derive weaker, however, simpler criterion. E.g. since $-\gamma_{\floor{n/2}}(c_1(n) + c_2(n)) \leq -\frac{1}{10}$ for every $n \geq 5$, it holds $L \leq -\frac{1}{10} v(ij) + \gamma_2\left(v(kl) + v(ij)\right)$, which is strictly smaller than $0$ if
$v(kl) < \left(1 -10\gamma_2\right)v(ij) / 10\gamma_2$,
or
\begin{equation}
	v(kl) < \frac{n^2 - n - 20}{20}v(ij).
\end{equation}
However, this criterion is useful only when $n\geq 7$, as for $n=5,6$, it holds $\alpha_n = \frac{n^2 - n - 20}{20}< 1$, which implies
\[0 \le v(ij) \leq v(kl) < \alpha_n v(ij),\]
thus $v(ij) = v(kl)=0$ and the criterion simplifies to $0 < \alpha_n 0$, which cannot be satisfied. However, for $n \geq 7$, it holds
\[\frac{5}{4} v(ij) \leq \alpha_n v(ij),\]
thus any game satisfying
\[0 < v(ij) \leq v(kl) < \frac{5}{4}v(ij)\]
cannot have a supermodular gap.

\end{document}